\documentclass[12pt]{article}

\usepackage[T1]{fontenc}
\usepackage[utf8]{inputenc}
\usepackage{lmodern}
\usepackage{setspace}
\usepackage{amsmath}
\usepackage{amssymb}
\usepackage{amsfonts}
\usepackage{amsthm}
\newtheorem{assumption}{Assumption}
\usepackage{bm}
\theoremstyle{remark}
\newtheorem{remark}{Remark}[section]
\usepackage{graphicx}
\usepackage{float}
\usepackage{booktabs}
\usepackage{subcaption}
\usepackage{array}
\usepackage{natbib}
\usepackage{algorithm}
\usepackage{algpseudocode}
\newtheorem{corollary}{Corollary}
\usepackage{booktabs}
\usepackage{tikz}
\usetikzlibrary{arrows, chains, positioning, quotes, shapes.geometric}
\usetikzlibrary{arrows.meta}
\tikzset{
node distance = 8mm and 20mm,
start chain = going below,
arrow/.style = {thick,-stealth},
base/.style = {
    draw, thick,
    minimum width=30mm,
    minimum height=10mm,
    align=center,
    inner sep=1mm,
    outer sep=0mm,
    on chain
},
decision/.style = {diamond, base, aspect=2, inner xsep=0mm},
process/.style = {rectangle, base},
startstop/.style = {rectangle, rounded corners, base},
}
\usepackage{subcaption}

\newcommand{\nie}{\mbox{NIE}}
\tikzstyle{line} = [draw, -latex']

\def\bSig\mathbf{\Sigma}

\usepackage{setspace} 
\newtheorem{suppassumption}{Assumption}

\usepackage[a4paper,margin=1in]{geometry}
\def\bSig\mathbf{\Sigma}

\newtheorem{proposition}{Proposition}
\title{Privacy-Preserving Causal Meta-Mediation Analysis with Survival Outcomes}
\author{
Marie-Félicia Beclin\\
\small EPILOGY, Institut Mondor of Biomedical Research, INSERM U955,\\
\small Université Paris-Est Créteil, France\\
\small \texttt{mariefelicia.beclin@gmail.com}
\and
Tat-Thang Vo\\
\small EPILOGY, Institut Mondor of Biomedical Research, INSERM U955,\\
\small Université Paris-Est Créteil, France\\
\small \texttt{tat-thang.vo@u-pec.fr}
}

\begin{document}
\maketitle

\bigskip

\begin{abstract}
Privacy and data-governance constraints often prevent pooling individual-level data across studies, limiting the use of conventional approaches for causal mediation analysis in multicenter settings. We propose a federated causal meta-mediation framework for right-censored time-to-event outcomes that enables collaborative estimation without sharing individual-level data. Our framework targets natural indirect effects in a prespecified population by combining information on mediator and outcome mechanisms across distributed data sources. A site-by-site identification strategy further allows heterogeneity across data sources to be characterized, with a variance decomposition separating outcome-related, mediator-related, and interaction components. We develop federated one-step and targeted maximum likelihood estimators that accommodate data-adaptive and machine-learning methods for nuisance-function estimation. The finite-sample performance of the proposed estimators is evaluated through numerical simulations. To illustrate the practical utility of the framework, we apply it on data from the French National Health Data System to evaluate the role of methotrexate coprescription in explaining the effect of TNFi versus IL-12/23 inhibitor therapy on treatment persistence among psoriatic patients.
\end{abstract}

\noindent
\textbf{Keywords:}
Federated Learning, Mediation Analysis, Meta-analysis, Natural Indirect Effect, Semiparametric Theory.

\maketitle
\section{Introduction}
The increasing availability of multicenter and real-world data has created unprecedented opportunities for causal treatment evaluation in health research \citep{deng2026survey}. However, growing concerns regarding privacy, regulatory constraints, and institutional data governance often restrict the direct sharing of individual-level data, notably under regulations such as the General Data Protection Regulation in Europe \citep{gdpr} and the U.S. Health Insurance Portability and Accountability Act \citep{hipaa}. Federated learning \citep{mcmahan2017communication, li2020federated} has emerged as a promising paradigm that enables collaborative analysis by exchanging only summary-level information or model updates. While methodological developments in federated learning have largely focused on predictive modeling tasks such as classification, regression, and deep learning, increasing attention has also been devoted to causal inference, where the goal is estimation of interpretable causal parameters such as the average treatment effect (ATE) \citep{han2025federated,khellaf2025federated,xiong2023federated}. 
Unlike predictive objectives, estimating the ATE requires accounting for confounding structures, treatment assignment mechanisms and case-mix differences across distributed data sources \citep{vo2019novel}. Several federated methods have been proposed for ATE estimation with continuous and discrete outcomes \citep{vo2022bayesian, han2025federated, xiong2023federated}. In contrast, relatively little work has addressed time-to-event outcomes. Recently, \citet{liuprivacy} proposed a federated One-Step (OS) estimator for estimating the ATE on survival outcomes in a target population by reweighting source-site contributions to account for distributional shifts. However, this approach relies on a parametric exponential tilt model for estimating the density ratio between participating sites, rendering it vulnerable to model misspecification. In addition, Targeted Maximum Likelihood Estimation (TMLE), which enjoys several theoretical and practical advantages, including respecting the natural parameter space and often exhibiting superior finite-sample performance \citep{van2006targeted}, has not yet been systematically studied in the federated setting. 

Beyond the ATE, little attention has been devoted to evaluating causal mechanisms in distributed environments, despite its central role in elucidating pathways through which the treatment exerts its effect on the outcome \citep{jang2026privacy}. To the best of our knowledge, no federated methods currently exist for the estimation of natural direct and indirect treatment effects in the presence of right-censored outcomes. Developing privacy-preserving, statistically efficient methods for decentralized mediation analysis with survival data  remains an open, unaddressed research question.

In this paper, we aim to develop a horizontal federated causal meta-mediation framework for time-to-event outcomes. We consider a collaborative setting in which multiple institutions jointly estimate mediation effects in an external target population without sharing individual-level data. Our approach integrates causal mediation analysis, transportability, survival modeling, semiparametric theory and federated estimation into a unified framework that supports efficient distributed inference. Leveraging efficient influence functions, we derive federated, privacy-preserving OS and TMLE estimators for the NDIEs on survival outcomes. Throughout, privacy-preserving refers to keeping individual-level observations local; we do not claim formal differential-privacy guarantees. Both estimators accommodate flexible, data-adaptive machine learning methods for estimating nuisance functions while retaining desirable asymptotic properties. The finite-sample performance of the proposed estimators is
evaluated through numerical simulations. To illustrate the practical utility of the framework, we apply it on data
from the French National Health Data System to evaluate the role of methotrexate coprescription in explaining
the effect of TNFi versus IL-12/23 inhibitor therapy on treatment persistence among psoriatic patients.

\section{Transportability of natural direct and indirect effects} \label{sec:2}
Consider $K$ studies ($S=1,\ldots,K$) investigating the mediating role of a vector of intermediate variables $M$ in explaining the causal effect of a binary treatment $A$ on a right-censored time-to-event outcome $T\in\mathcal{T}=\{t_0,\ldots,t_{h}\}\cup\{\infty\}$, where $T=\infty$ indicates that the event does not occur during the follow-up period. Let $C\in\mathcal{T}$ denote the censoring time and $L$ a set of baseline covariates measured consistently across all studies. Let $\tilde{T} =  \text{min}(T, C)$ denote the observed follow-up time and $\delta = I(T \leq C)$ denote the event indicator.
We assume that the mediator is measured shortly after treatment assignment at a fixed time $t_M<t_0$, so that it is not subject to truncation by the event of interest. Examples include acute inflammatory biomarkers measured shortly after vaccine administration, or clinical interventions such as surgery or chemotherapy that occur before the start of follow-up \citep{rochon2014mediation}. 

Let $M(a)$ denote the potential value of the mediator under treatment level $A=a$, and let $T(a,m)$ denote the potential event time had treatment been set to $a$ and the mediator to $m$. Our objective is to estimate the natural indirect effect (NIE) on survival at a clinically relevant time point $\nu\in\mathcal{T}$ in a target population $S=0$, for which only baseline covariates $L$ are available. Specifically,
\[
\operatorname{NIE}^{a}(\nu)
=
\Pr\!\left\{T\bigl(a,M(1)\bigr)>\nu\mid S=0\right\}
-
\Pr\!\left\{T\bigl(a,M(0)\bigr)>\nu\mid S=0\right\}.
\]

To identify the NIE within each study, we adopt the standard assumptions for causal mediation analysis.

\begin{assumption}\label{ass:nie}(Identification of $\nie^{a}(\nu)$) \begin{itemize} 
\item [(i)] (Ignorability) $T(a,m)\ \perp A \mid L, S$ and $M(a)\ \perp\ A \mid L,S.$ \item [(ii)] (Consistency) $T = T(a,m)$ if $(M,A)=(m,a)$ and $M = M(a)$  if $A=a$. \item [(iii)] (Cross-world independence) $T(a,m)\ \perp\ M(a^*)\mid L, S.$ \item [(iv)] (Positivity) $   0<\mathbb{P}(A=a\mid L=l,S=s), \mathbb{P}(M=m\mid A=a,L=l,S=s),\mathbb{P}(S=s\mid L=l) <1,$   $\mathbb{P}(C \ge t \mid A=a, M=m, L=l, S=s) >0$ and $\mathbb{P}(R=0 \mid A=a,L=l,S=s) >0$ for all relevant $(a,m,s)$,  $t\leq\nu$. 
\end{itemize} 
\end{assumption}

These assumptions require that all common causes of the treatment, mediator, and outcome are measured within each study and that no mediator--outcome confounder is itself affected by the treatment, i.e. there is no treatment-induced mediator--outcome confounding. In practice, this is more plausible when there is relatively little time between $A$ and $M$, leaving limited opportunity for post-baseline variables affecting both $M$ and $T$ to arise \citep{vanderweele2016mediation}. 
To transport study-specific effects to a common target population, we further assume that all variables modifying the treatment--mediator and mediator--outcome relationships that differ across studies are included in $L$.
\begin{assumption}[Transportability]
\label{ass:transportability}
$T(a,m)\perp S\mid L
\quad\text{and}\quad
M(a)\perp S\mid L.$
\end{assumption}
This assumption implies that, conditional on $L$, the mediator and outcome generating mechanisms are invariant across the source and target populations, i.e.,
$\mathbb{P}(M=m\mid A,L,S=s)$ and $\mathbb{P}(T>t\mid A,M,L,S=s)$ do not depend on $s$, allowing information from the source studies to be transported to estimate $\operatorname{NIE}^{a}(\nu)$ in the target population.

In many applications, mediator values may be missing despite not being truncated by the occurrence of the event. In these applications, we assume that the missingness mechanism depends only on variables observed prior to mediator measurement. Specifically, let $R$ denote the missingness indicator for the mediator $M$, with $R=1$ if $M$ is missing and $R=0$ otherwise. Then:

\begin{assumption}[Missing at Random]
$R\perp(M,T,C)\mid A,L,S=k,
\qquad
k=1,\ldots,K.$
\end{assumption}

Finally, we assume that censoring is conditionally independent of the event time given the observed treatment, baseline covariates, and mediator among individuals with observed mediator values.

\begin{assumption}[Conditional Independence of Censoring] \label{ass:cens}
$C\perp T\mid A,L,M,R=0,S=k$ for $k=1,\ldots,K.$
\end{assumption}

\begin{figure} \label{fig:dag}
\centering
\begin{tikzpicture}[node distance=1cm and 1cm]

%=====================
% Nodes
%=====================

\node (a) at (1,1) {$A$};
\node (t) at (4,1) {$T$};
\node (m) at (2.5,0.5) {$M$};
\node (l) at (1.5,2) {$L$};
\node (s) at (0,2) {$S$};
\node (delta) at (5.5,0.5) {$\delta$};
\node (ttilde) at (5.5, 2.5)  {$\tilde{T}$};
\node (cens) at (4,2.5) {$C$};

%=====================
% Core structure
%=====================

\draw[->] (a) -- (t);
\draw[->] (a) -- (m);
\draw[->] (m) -- (t);
\draw[->] (m) -- (cens);

\draw[->] (l) -- (a);
\draw[->] (l) -- (m);
\draw[->] (l) -- (t);
\draw[-] (l) -- (s);

\draw[->] (s) -- (a);
%\draw[->,  bend left=55] (s) to (t);
%\draw[->,  bend right=55] (s) to (m);

%=====================
% Censoring and observation
%=====================

\draw[->] (t) -- (delta);
\draw[->] (cens) -- (delta);
\draw[->] (t) -- (ttilde);
\draw[->] (cens) -- (ttilde);

\draw[->] (a) -- (cens);
\draw[->] (l) -- (cens);
\draw[->, bend left=35] (s) to (cens);

\end{tikzpicture}

\caption{Causal diagram. The arrows $S\rightarrow A$;  $S\rightarrow L$ and $S\rightarrow C$ reflects differences across sites in treatment assignment mechanism, covariate distribution and censoring mechanism, respectively.
}
\end{figure}

The proposed assumptions are satisfied under the causal diagram depicted in Figure~1, which encodes a nonparametric structural equation model with independent error terms. %Consider $S = \{1, . . . , K\}$ mediation studies that evaluate the role of an intermediate variable $M$ in explaining the causal effect of a binary exposure $A$ on an outcome $Y$. In addition, consider $S = \{K + 1, . . . , P\}$ studies that evaluate the causal effect of $A$ on the mediator $M$. 
Under Assumptions~\ref{ass:nie}--\ref{ass:cens}, the counterfactual survival probability $\theta^{a, a^{*}}(\nu):=\mathbb{P}[T\bigl(a, M(a^{*})\bigr)>\nu\mid S=0]$ is identified by the functional $\theta^{a,a^*}_{k,p}(\nu)$, where:
\begin{eqnarray}
    \theta^{a, a^{*}}_{k,p}(\nu) &=& \mathbb{E}\biggl[\mathbb{E} \bigl[ \mathbb{P}(T>\nu\mid a,M,L,S=k)\mid a^*,L,S=p \bigr]\mid S=0\biggr]\\
    &=&\mathbb{E} \Biggl[  \mathbb{E} \biggl[ \prod_{t\leq \nu} \Bigl(1-  \mathbb{P}(\tilde{T}=t, \delta=1 \mid a,k,  L, M, R=0, \tilde{T}\geq t  )\Bigr) ~\bigg|~ a^{*}, p, R=0 \biggr] ~\bigg| S=0 \Biggr]. \nonumber
\end{eqnarray} 
%The target parameter $\mathrm{NIE}^{a}(\nu)$ can then be identified by $\zeta_{k,p} = \theta_{k,p}^{a,1} - \theta_{k,p}^{a,0}$.
For each pair of outcome and mediator source sites $(k,p)$, define $\zeta_{k,p} ^{a}(\nu) = \theta_{k,p}^{a,1}(\nu) - \theta_{k,p}^{a,0}(\nu)$. We suppress the dependence on a and $\nu$ when no ambiguity arises.
This identification result suggests a natural collaborative estimation strategy. Specifically, the treatment--mediator relationship may be learned from site $p$, while the mediator--outcome relationship may be learned from site $k$. These quantities are then transported to the target population by standardizing over the covariate distribution in the target site. By considering every pair of mediator and outcome sources, this strategy yields a collection of standardized effect estimates that allow one to assess potential between-site heterogeneity. 
Specifically, let $\zeta = \sum_{k,p} w_kq_p\zeta_{k,p}$
and $\tau^2 = \sum_{k,p}w_kq_p(\zeta_{k,p}-\zeta)^2,$ 
where $(w_k)$ and $(q_p)$ are prespecified weight functions for outcome and mediator sources, with $\sum_k w_k=1$ and $\sum_p q_p=1$. The total variance $\tau^2$ can then be decomposed as:
\begin{equation} \label{eq:anova}
    \tau^2 =
    \underbrace{\sum_k w_k(\zeta_{k,\cdot}-\zeta)^2}_{ \text{Outcome-related variation }(\xi^2)}
    +
    \underbrace{\sum_p q_p(\zeta_{\cdot,p}-\zeta)^2}_{ \text{Mediator-related variation }(\eta^2)}
    +    \underbrace{\sum_{k,p}w_kq_p(\zeta_{k,p}-\zeta_{k,\cdot}-\zeta_{\cdot,p}+\zeta)^2}_{\text{Interaction}}.
\end{equation}
where $\zeta_{k,\cdot}= \sum_{p} q_p \zeta_{k,p}$ and $\zeta_{\cdot,p}=\sum_{k} w_k \zeta_{k,p}$.
Under the proposed identification assumptions, $\zeta = \operatorname{NIE}^{a}(\nu)$ and all variance components are zero. However, Assumption \ref{ass:transportability} may be violated due to some heterogeneity in the outcome distribution across sites, i.e., $M(a)\perp\!\!\!\perp S\mid L$ but $T(a,m)\not\perp\!\!\!\perp S\mid L$. In this case, $\zeta_{k,p}$ may vary across $k$ but remains constant across $p$, i.e., $\zeta_{k,p} \ne \zeta_{k',p}$ but $\zeta_{k,p} = \zeta_{k,p'}$ for any $k,k',p,p'$. Consequently, the first variance component in (2) is nonzero, whereas the second variance component is zero. In contrast, when heterogeneity exists in the mediator distribution across sites, $\zeta_{k,p}$ may vary across $p$ but remains constant across $k$, i.e., $\zeta_{k,p} = \zeta_{k',p}$ but $\zeta_{k,p} \ne \zeta_{k,p'}$ for any $k,k',p,p'$. Consequently, the second variance component in (2) is nonzero, whereas the first is zero. 
When both outcome- and mediator-related heterogeneity are present, the interaction component captures the remaining variation that cannot be explained by either source separately. In particular, this component is zero when the variability within $\boldsymbol{S_k}=\{\zeta_{k,1},\ldots,\zeta_{k,K}\}$ does not depend on $k$ and the variability within $\boldsymbol{S_p}=\{\zeta_{1,p},\ldots,\zeta_{K,p}\}$ does not depend on $p$.

In practice, potential violations of Assumption \ref{ass:transportability} can thus be assessed by estimating each variance component in (2). When a variance component is substantially larger than what would be expected from sampling variability, this provides evidence suggestive of heterogeneity in the mediator- and/or outcome-generating mechanism and raises concerns about the validity of the transportability assumption. A site-by-site identification strategy, as in (1), thus provides a useful diagnostic for identifying and quantifying potential heterogeneity and encourages caution regarding unrecognized sources of heterogeneity across sites.
For completeness, the Online Supplementary Materials present an alternative identification strategy that first pools information across studies before estimating the nuisance functions. Although this approach may improve statistical efficiency when the transportability assumption holds across sites, it produces a single standardized estimate and  does not permit assessment of heterogeneity across mediator and outcome sources, as suggested above. 

In the Online Supplementary Materials, we further extend the proposed identification and heterogeneity assessment strategy to natural path-specific effects, when the interest lies in evaluating the fine-grained contribution of each individual mediator in $M$ in explaining the total treatment effect. 
\section{Federated estimation strategies} \label{sec:para}
We now describe several federated estimation strategies for $\theta^{a,a^*}_{k,p}(\nu)$ under the constraint that individual-level data cannot be shared across sites. To simplify the presentation, let $I(v)=I(V=v)$ denote the indicator function for a generic variable $V$, and define the nuisance parameters

%$\bm \eta = (U_s, V_s, \pi^a_k, \tau^{a}_{k}, \rho^{a}_{k},\lambda_{k}^{a}(t), g_{k}^{a}(t), S_{k}^{a}(t), G_{k}^{a}(t) b_{k,p}^{a, a^*})$, where:
%$U_p=\mathbb{P}(S=p\mid L)$, $\pi_k^a=\mathbb{P}(A=a\mid L,k)$,  $V_k=\mathbb{P}(S=k\mid L,M,R=0)$, $\tau_k^a=\mathbb{P}(A=a\mid L,M,R=0,k)$, $\rho_k^a=\mathbb{P}(R=0\mid a,L,k)$, $g_k^a(t)=\mathbb{P}(C=t\mid a,L,M,k,C\geq t)$, $\lambda_k^a(t)=\mathbb{P}(T=t\mid T\geq t,a,L,M,k)$, $G_k^a(t)=\mathbb{P}(C>t\mid a,L,M,k)$,$S_k^a(t)=\mathbb{P}(T>t\mid a,L,M,k)$, $ b_{k,p}^{a,a^*}=\mathbb{E}\!\left[S_k^a(\nu)\mid a^*,p,L\right].$

 \begin{align*}
    &U_p=\mathbb{P}(S=p \mid L) &
         &\pi^{a}_{k}=\mathbb{P}(A=a \mid L, k) \\
     &V_k=\mathbb{P}(S=k \mid L, M, R=0) &
     &\tau^{a}_{k}=\mathbb{P}(A=a \mid L, M, R=0,k)\\
     &\rho^{a}_{k} = \mathbb{P}(R=0 \mid a,L,k) & 
     &g_{k}^{a}(t) = \mathbb{P}(C=t\mid a, L, M,k, C \geq t)\\
     &\lambda_{k}^{a}(t) = \mathbb{P}(T=t\mid T\geq t, a, L, M, k) &
     &G_{k}^{a}(t) = \mathbb{P}(C>t\mid a, L, M, k) \\ 
     &S_{k}^{a}(t) = \mathbb{P}(T>t\mid a, L, M, k). & 
     &b_{k,p}^{a, a^*} =\mathbb{E}\Bigl[S_{k}^{a}(\nu) \mid a^*, p, L\Bigr]\\
 \end{align*}

The event hazard function $\lambda_k^a(t)$ and the censoring hazard function $g_k^a(t)$ can be expressed in terms of observable data under Assumption~\ref{ass:cens}, i.e.,
$$\lambda_k^a(t)=\mathbb{P}\left(
\tilde{T}=t,\delta=1
\mid a,k,L,M,R=0,\tilde{T}\geq t
\right) \text{ and},$$
$$g_k^a(t)=
\mathbb{P}\left(
\tilde{T}=t,\delta=0
\mid
a,k,L,M,R=0,
\{\tilde{T}>t\}\cup\{\tilde{T}=t, \delta=0 \}
\right).$$
The corresponding survival functions are obtained from the event and censoring hazard functions as
$
S_k^a(t)=
\prod_{t_j\leq t}
\left(1-\lambda_k^a(t_j)\right)
$ and $
G_k^a(t)=
\prod_{t_j\leq t}
\left(1-g_k^a(t_j)\right).$ 

The above nuisance functions can be naturally partitioned into three groups according to the degree of information sharing required for their estimation:
\begin{itemize}
    \item[(i)] \textbf{Federated parameters.} The first group, $\bm{\eta}_{k,p}^{\mathrm{fed}} = \{ U_p, V_k\}$, comprises nuisance parameters that require information from multiple studies and must  be estimated using a federated optimization algorithm. In such algorithms, each study computes updates based on its local data and transmits only summary quantities (e.g., gradients or model parameters) to a coordinating server, which aggregates them to update a shared model. This decentralized optimization is possible because many learning objectives satisfy an additive empirical risk decomposition,
    $\mathcal{L}(\eta)=\sum_{k=1}^K \mathcal{L}_k(\eta)$,
    implying that the global gradient can be expressed as the sum of study-specific gradients. Numerous federated optimization algorithms have been proposed, differing in their communication efficiency, computational complexity, and convergence guarantees \citep{kairouz2021advances, li2020federated}. Throughout this paper, we adopt the Federated Averaging (FedAvg) algorithm, owing to its communication efficiency and straightforward implementation \citep{mcmahan2017communication, li2020federated}. Algorithm 1 illustrates the use of FedAvg for fitting a regularized multinomial classification model. Nevertheless, the proposed methodology is compatible with any federated learning algorithm capable of estimating the required nuisance functions.
    \item[(ii)] \textbf{Local parameters.} The second group, $\bm{\eta}_{k,p}^{\mathrm{local}} = \{ S_{k}^{a}, G_{k}^{a}, \lambda_{k}^{a}, g_{k}^{a}, \tau_{k}^{a}, \pi_{k}^{a}, \rho_{k}^{a}\}$, consists of nuisance parameters that depend exclusively on data from a single study. These quantities can  be estimated independently within each site without any communication. Alternatively, they can be estimated using a federated algorithm that includes site membership as a covariate, thereby allowing the fitted nuisance functions to vary across sites while potentially borrowing information across them. This approach may improve finite-sample efficiency, particularly when some sites have limited sample sizes.  

    \item[(iii)] \textbf{Mixed parameters.} The final group, $\bm{\eta}_{k,p}^{\mathrm{mix}} = \{ b_{k,p}^{a, a^*} \}$, consists of nuisance parameters involving nested conditional expectations that cannot be estimated either entirely locally or through a single federated optimization procedure. Their estimation instead proceeds sequentially across studies. 
    Specifically, one source site is first used to estimate the inner conditional expectation. The resulting fitted model is then transferred to a second site, where it is used to generate predicted values that subsequently serve as pseudo-outcomes for estimating the outer conditional expectation. Consequently, estimation of $\bm{\eta}_{k,p}^{\mathrm{mix}}$ requires communication across studies, although each individual modeling step is performed locally.
\end{itemize}

\begin{algorithm}[H]

\small

\caption{Federated Multinomial Regression with Convergence-Based Stopping}

\label{alg:FedAvg}

\begin{algorithmic}[1]

\Require Source populations and target population $k=0,\dots,K$ with private data
$\mathcal{D}_k=\{(L_i,S_i)\}_{i=1}^{n_k}$,
$n=\sum_{k=0}^{K} n_k$;
multinomial probabilities
$f_s(\beta;L)=\mathbb{P}_{\beta}(S=s\mid L)$,
with $f_s(\beta;L)\geq 0$ and
$\sum_{s=0}^{K} f_s(\beta;L)=1$; individual negative log-likelihood contribution
$\ell_i(\beta)=-\log f_{S_i}(\beta;L_i)$;
penalty function $\mathcal{P}(\beta)$ with regularization parameter $\kappa\geq0$;
learning rate $\epsilon$,
local epochs $E$, convergence threshold $\alpha$

\Ensure Global parameters $\beta$

\State Initialize $\beta^{(0)}$
\State $t \leftarrow 0$
\State $\Delta \leftarrow +\infty$

\While{$\Delta > \alpha$}

    \State Server broadcasts $\beta^{(t)}$ to all participating clients

    \For{each $k =0\,\ldots,K$ \textbf{in parallel}}

        \State $\beta_k^{(t,0)} \leftarrow \beta^{(t)}$

        \For{$e=0,\dots,E-1$}

            \For{each $(L_i,S_i)\in\mathcal{D}_k$}

                \State
                $\ell_i\!\left(\beta_k^{(t,e)}\right)
                =
                -\log
                f_{S_i}\!\left(
                \beta_k^{(t,e)};L_i
                \right)$

            \EndFor

            \State
            $\beta_k^{(t,e+1)}
            \leftarrow
            \beta_k^{(t,e)}
            -
            \epsilon
            \left\{
            \frac{1}{n_k}
            \sum_{i=1}^{n_k}
            \nabla_{\beta}
            \ell_i\!\left(\beta_k^{(t,e)}\right)
            +
            \kappa
            \nabla_{\beta}
            \mathcal{P}\!\left(\beta_k^{(t,e)}\right)
            \right\}$

        \EndFor

        \State Client $k$ sends
        $\beta_k^{(t+1)}=\beta_k^{(t,E)}$
        to server

    \EndFor

    \State
    $\beta^{(t+1)}
    \leftarrow
    \sum_{k=0}^{K}
    \frac{n_k}{n}
    \beta_k^{(t+1)}$

    \State
    $\Delta
    \leftarrow
    \left\|
    \beta^{(t+1)}-\beta^{(t)}
    \right\|$

    \State $t \leftarrow t+1$

\EndWhile

\State \Return $\beta^{(t)}$ and number of communication rounds $t$

\end{algorithmic}

\end{algorithm}

\subsection{Parametric estimation strategies}
\paragraph{Parametric G-computation} The discrete-time hazard functions $\lambda_{k}^{a}(t_j \mid M,L, R=0)$ are estimated at source site $k$ using a parametric classification model, such as logistic regression. The fitted model parameters are then transmitted to site $p$, where they are used to compute the predicted survival probability
$
\hat  S_{k}^{a}(\nu \mid M,L, R=0)
=
\prod_{t_j \leq \nu}
\left(
1-\hat \lambda_{k}^{a}(t_j \mid M,L, R=0)
\right).
$
To obtain an estimate of the conditional expectation
$
\mathbb{E}_M\!\Bigl[S_{k}^{a}(\nu \mid M,L, R=0)\mid a^*,p, L\Bigr],
$
these predicted survival probabilities are regressed on baseline covariates among individuals with $A=a^*$ at site $p$, yielding
$\hat  b_{k,p}^{a,a^*}(L).$ The resulting G-formula estimator is
$
\hat {\theta}^{\mathrm{Gformula}}_n
=
\frac{1}{n_0}
\displaystyle \sum_{i:S_i=0}
\hat  b_{k,p}^{a,a^*}(L_i).
$

\paragraph{Parametric inverse-probability weighting} Alternatively, $\theta_{k,p}^{a,a^*}$ can be reexpressed as:
\begin{align}
   \theta^{a,a^{*}}_{k,p}(P)  &=   \frac{1}{\mathbb{P}(S=0)}
   \mathbb{E}\Biggl[ \frac{\Omega^{a,a^{*}}_{k,p}(L,M)}{G_{k}^{a}(\nu\mid L, M, R=0)} \, \, 
       I(\tilde{T} > \nu)\, I(a,k,R=0) \Biggr] \label{IPW_1_main}\\
  &= \frac{1}{\mathbb{P}(S=0)} \mathbb{E}\biggl[   W^{a^{*}}_{p}(L) \, \,I(a^*,p,R=0) \, \,  S_{k}^{a}(\nu \mid L, M, R=0)  \biggr] \label{IPW_2_main}
\end{align}
where
\begin{eqnarray*}
    \displaystyle \Omega^{a,a^{*}}_{k,p} (L,M) &=&\frac{ \tau^{a^*}_{p}(L,M) \,\,   V_p(L,M) \, \, U_0(L)  }{\tau^{a}_{k}(L,M) \,\,   V_k(L,M) \, \pi^{a^*}_{p}(L) \, U_p(L)  \,\,   \rho^{a^*}_{p}(L) } \\
\displaystyle W^{a^{*}}_{p}(L)&=& \frac{U_0(L) }{U_p(L) \,  \pi^{a^*}_{p}(L) \,\,   \rho^{a^*}_{p}(L)}. 
\end{eqnarray*}

Estimation based on the weighting-based representation (\ref{IPW_1_main}) requires parametric specification of the nuisance functions $\eta^{IPCW} = \{U_0; V_p,V_k,  \tau^{a}_{k},\tau^{a^*}_{p},\pi^{a^*}_{p}, g_{k}^{a},\rho^{a^*}_{p}\}$. 
By contrast, the alternative representation (\ref{IPW_2_main}) requires parametric specification of $\eta^{IPW} = \{U_0,U_p, \pi^{a^*}_{p}, \rho^{a^*}_{p},\lambda_{k}^{a}(t)\}$. The two representations  lead to distinct federated weighting estimators of $\theta_{k,p}^{a,a^*}$, characterized by different modeling requirements and robustness properties. 

Asymptotic standard errors of parametric approaches can be obtained using M-estimation theory.
Specifically, let $
\bm{\Psi}_{k,p,i}^{a,a^*}
(\bm \vartheta)
= \left(
\psi_{\theta,i}^{a,a^*},
\bm{\psi}^{\mathrm{local}}_{k,p,i},
\bm{\psi}^{\mathrm{fed}}_{k,p,i},
\bm{\psi}^{\mathrm{mix}}_{k,p,i}\right)^\top,$ be the estimating function of the target parameters and all nuisance functions, where $\bm \vartheta=(\theta_{k,p},\bm{\eta}_{k,p})$. Under standard regularity conditions, the asymptotic covariance matrix of
\(\hat{\bm{\vartheta}}\)
is estimated by the sandwich estimator $\widehat {\mathrm{Var}}(\hat{\bm{\vartheta}})
=
\hat A^{-1}
\hat B
(\hat A^{-1})^\top.$
%where
%$\hat A=-\frac{1}{n}\sum_{i=1}^{n}\frac{\partial\bm{\Psi}_{k,p,i}^{a,a^*}(\bm{\vartheta})}{\partial\bm{\vartheta}^{\top}}\bigg|_{\hat{\bm{\vartheta}}},$ and $\hat B= \frac{1}{n}\sum_{i=1}^{n}\bm{\Psi}_{k,p,i}^{a,a^*}(\hat{\bm{\vartheta}})\bm{\Psi}_{k,p,i}^{a,a^*}(\hat{\bm{\vartheta}})^{\top}$. % Because each component of \(\bm{\Psi}_{k,p,i}^{a,a^*}\) depends only on the data available in the study where it is evaluated, both matrices admit a federated decomposition. 
Each site \(s\) computes $\hat A_s
=
-\frac{1}{n}
\sum_{S_i=s}
\frac{\partial
\bm{\Psi}_{k,p,i}^{a,a^*}(\bm{\vartheta})}
{\partial\bm{\vartheta}^{\top}}
\bigg|_{\hat{\bm{\vartheta}}}$ 
and $
\hat B_s
=
\frac{1}{n}
\sum_{S_i=s}
\bm{\Psi}_{k,p,i}^{a,a^*}(\hat{\bm{\vartheta}})
\bm{\Psi}_{k,p,i}^{a,a^*}(\hat{\bm{\vartheta}})^{\top}.$
The site-specific matrices \(\hat A_s\) and \(\hat B_s\)
are then transmitted to the coordinating server, which aggregates them as
$
\hat A=\sum_s\hat A_s$ and $
\hat B=\sum_s\hat B_s$ to compute the final sandwich estimator. Consequently, variance estimation can be implemented without sharing individual-level observations: each site transmits only aggregated score cross-products and derivative matrices to the coordinating server.

%Variance can alternatively be estimated using a nonparametric bootstrap, which is valid for M-estimators under standard regularity conditions. In the federated setting, bootstrap resampling can be implemented without sharing individual-level data. Specifically, the target population site generates bootstrap samples by resampling subject indices and communicates the corresponding bootstrap indices to the participating sites. Each site then reconstructs the corresponding bootstrap sample locally, re-estimates the required nuisance parameters, and recomputes the federated estimator $\hat{\theta}_{k,p}^{a,a^*}$. Repeating this procedure over $B$ bootstrap replicates yields the empirical bootstrap distribution of the estimator, from which the variance and confidence intervals can be obtained.

\subsection{Federated data-adaptive estimation strategies}

Parametric approaches require correct specification of the nuisance models involved in the chosen identifying representation and may yield inconsistent estimates under model misspecification. To address this limitation, we develop efficient estimators that remain $\sqrt{n}$-consistent and asymptotically normal when nuisance functions are estimated using flexible, data-adaptive methods with slower convergence rates.

Our development is based on the efficient influence function (EIF), denoted by $\varphi_{k,p}^{a,a^*}(O,\bm{\eta}_{k,p})$, associated with the target parameter. Let $\tilde{\bm{\eta}}$ be an arbitrary fixed value of the nuisance parameters $\bm\eta$, corresponding for example to an initial estimator or its probability limit. The EIF yields the first-order expansion \begin{align}
\theta_{k,p}^{a,a^*}(\tilde{\bm{\eta}}_{k,p})
-
\theta_{k,p}^{a,a^*}(\bm{\eta}_{k,p})
=
-\mathbb{E}\!\left\{
\varphi_{k,p}^{a,a^*}(O,\tilde{\bm{\eta}}_{k,p})
\right\}
+
R(\bm{\eta}_{k,p},\tilde{\bm{\eta}}_{k,p}),
\end{align}
where $R(\bm{\eta}_{k,p},\tilde{\bm{\eta}}_{k,p})$ is a second-order remainder term. This remainder can typically be expressed as a sum of products of nuisance estimation errors of the form
$$
\mathbb{E}\!\left[
c(\bm{\eta}_{k,p},\tilde{\bm{\eta}}_{k,p})
\{f(\tilde{\bm{\eta}}_{k,p})-f(\bm{\eta}_{k,p})\}
\{g(\tilde{\bm{\eta}}_{k,p})-g(\bm{\eta}_{k,p})\}
\right].
$$ 
In contrast, the first-order terms $-\mathbb{E}\{\varphi_{k,p}^{a,a^*}(O,\tilde{\bm{\eta}}_{k,p})\}$ can be rewritten as $$\mathbb{P}_n \Big\{ \varphi_{k,p}^{a,a^*}(O,\bm{\eta}_{k,p}) - \varphi_{k,p}^{a,a^*}(O,\tilde{\bm{\eta}}_{k,p}) \Big\}
+ (\mathbb{P}_n - \mathbb{P})\Big\{ \varphi_{k,p}^{a,a^*}(O,\tilde{\bm{\eta}}_{k,p}) - \varphi_{k,p}^{a,a^*}(O,\bm{\eta}_{k,p}) \Big\}.
$$ Under certain conditions (discussed later) on the estimated nuisance parameters, the empirical process on the right-hand side is asymptotically negligible. The bias term $-\mathbb{E}\{\varphi_{k,p}^{a,a^*}(O,\tilde{\bm{\eta}}_{k,p})\}$ thus reduces to
$\mathbb{P}_n\{\varphi_{k,p}^{a,a^*}(O,\tilde{\bm{\eta}}_{k,p})\},$
typically called \textit{plug-in bias}. Combining the initial estimator with an EIF-based correction that removes this bias is the principle behind efficient estimators such as OS and TMLE. These estimators are consistent, asymptotically normal, and achieve the semiparametric efficiency bound under some regularity conditions. 

The EIF of $\theta_{k,p}^{a,a^*}$ in the nonparametric statistical model $\mathcal{M}$ is 
\begin{align}\label{eif} 
     \varphi^{a,a^*}_{k,p}(O,\bm{\eta}_{k,p})=&   
     -\Omega^{a,a^*}_{k,p}(L,M) P_0^{-1} I(a,k, R=0)\\ & \underbrace{\quad \quad \quad \quad  \times \sum_{t_d\leq \nu}  I( \tilde{T} \geq t_d)\mathcal{G}_{k}^{a,(t_d)}(\nu \mid L, M)  \big(I(\tilde{T}=t_d , \delta=1)  -\lambda_{k}^{a}(t_d\mid L,M) \big)}_{A_{k,p}^{a,a^*}} \notag \\
     & + \underbrace{I(a^*,p, R=0)P_0^{-1} W_{p}^{a^{*}}(L)  \big(S_{k}^{a}(\nu \mid L,M)-  b_{k,p}^{a,a^*}(L) \big)}_{B_{k,p}^{a,a^*}} \notag\\
     &+ I(S=0) P_0^{-1}  \big( b_{k,p}^{a,a^*}(L) -   \theta_{k,p}^{a,a^*}  \big) \notag
\end{align}
where $\mathcal{G}_{k}^{a,(t_d)}(\nu \mid L, M) = \prod_{  t_{d+1} \leq t_j \leq \nu } \bigl(1-\lambda_{k}^{a} (t_j \mid L,M) \bigr)/ \prod_{t_j\leq t_{d-1}} \bigl(1-g_{k}^{a} (t_j \mid L,M)\bigr)$ and $P_0 = \mathbb{P}(S=0)$. A proof based on the functional delta method is provided in the Supplementary Materials \citep{van1998functional}.
\paragraph{Federated One-Step Estimator} 
The One-Step estimator corrects the first-order bias by adding an EIF-based correction to the plug-in estimator, i.e.,
$\hat{\theta}_{k,p}^{a,a^*,\text{OS}}
=
\hat{\theta}_{k,p}^{a,a^*}
+
\hat{A}_{k,p}^{a,a^*}
+
\hat{B}_{k,p}^{a,a^*}
$. 
Here, $\hat{A}_{k,p}^{a,a^*}$ and $\hat{B}_{k,p}^{a,a^*}$ denote the empirical counterparts of the first two components of the EIF in (5).
The initial plug-in estimator \(\hat{\theta}_{k,p}^{a,a^*}\) is obtained using the G-computation representation described in Section~3.1, but with the nuisance functions \(\bigl(\lambda_{k}^{a}(t_j) \bigr)_{j\in \{1,\ldots, h\}} \) and \(b_{k,p}^{a,a^*}\) estimated using flexible data-adaptive methods. In practice, the two bias-correction components
$\hat  A_{k,p}^{a,a^*}$ and $\hat  B_{k,p}^{a,a^*}$
can be easily computed once the nuisance parameters are estimated in a federated manner. These aggregated quantities are then transmitted to the target site, where they are combined with the plug-in estimator to obtain $\hat{\theta}_{k,p}^{a,a^*,\text{OS}}$.

\paragraph{Federated TMLE Estimator}
\begin{algorithm}[ht]
\small
\caption{Targeted Updating of the Hazard Function
$\{\hat{\lambda}_{k}^{a}(t_j \mid L,M, R=0)\}_{j=1}^{h}$}
\label{alg:tmle_hazard_update}
\begin{algorithmic}[1]

\Require Data
$\{(\tilde T_i,\delta_i,L_i,M_i,A_i,S_i,R_i)\}_{i=1}^{n}$;
initial estimators
$\{\hat{\lambda}_{k}^{a}(t_j\mid L,M, R=0)\}_{j=1}^{h}$,
$\{\hat g(t_j\mid L,M, R=0)\}_{j=1}^{h}$,
and $\hat{\Omega}_{k,p}^{a,a^\ast}$.

\Ensure Updated hazard estimates
$\{\hat{\lambda}_{k}^{a,(1)}(t_j\mid L,M, R=0)\}_{j=1}^{h}$.

\vspace{0.2cm}
\State \textbf{Step 1: Targeting at the final time point $t_h$}

\State Compute the clever covariate
$
\hat w_i^{(h)}
=
\frac{\hat{\Omega}_{k,p}^{a,a^\ast}}
{\prod_{j < h}
\{1-\hat g(t_j\mid L_i,M_i)\}}.
$

\State Estimate the fluctuation parameter
$\hat\alpha^{(h)}$
from the logistic regression
$$
\text{logit}\{\lambda_k^a(t_h\mid L,M, R=0)\}
=
\text{logit}\{\hat\lambda_k^a(t_h\mid L,M, R=0)\}
+
\alpha^{(h)}\hat w_i^{(h)},
$$
using only observations satisfying
$A_i=a$, $S_i=k$, $R_i=0$ and
$\tilde T_i\ge t_h$.

\State Update the hazard estimate
$
\hat\lambda_k^{a,(1)}(t_h\mid L,M, R=0)
=
\text{expit}
\Bigl(
\text{logit}\{\hat\lambda_k^a(t_h\mid L,M, R=0)\}
+
\hat\alpha^{(h)}\hat w_i^{(h)}
\Bigr).
$

\State Recompute the corresponding survival quantity
$
\hat{\mathcal G}^{(t_{h-1},1)}(\nu\mid L,M, R=0).
$

\vspace{0.2cm}
\State \textbf{Step 2: Backward recursive targeting}

\For{$j=h-1,\ldots,1$}

    \State Compute the clever covariate
    $
    \hat w_i^{(j)}
    =
   \hat{\Omega}_{k,p}^{a,a^\ast}(L_i,M_i)\,
    \hat{\mathcal G}_{a,k}^{(t_j,1)}
    (\nu\mid L_i,M_i, R_i=0).
    $

    \State Estimate the fluctuation parameter
    $\hat\alpha^{(j)}$
    from
    $$
    \text{logit}\{\lambda_k^a(t_j\mid L,M, R=0)\}
    =
    \text{logit}\{\hat\lambda_k^a(t_j\mid L,M, R=0)\}
    +
    \alpha^{(j)}\hat w_i^{(j)}.
    $$

    \State Update
    $
    \hat\lambda_k^{a,(1)}(t_j\mid L,M, R=0)
    =
    \text{expit}
    \Bigl(
    \text{logit}\{\hat\lambda_k^a(t_j\mid L,M, R=0)\}
    +
    \hat\alpha^{(j)}\hat w_i^{(j)}
    \Bigr).
    $

    \State Recompute
    $
    \hat{\mathcal G}^{(t_{j-1},1)}(\nu\mid L,M, R=0).
    $

\EndFor

\Return
$\{\hat{\lambda}_{k}^{a,(1)}(t_j\mid L,M, R=0)\}_{j=1}^{h}$

\end{algorithmic}
\end{algorithm}

The TMLE procedure updates the initial nuisance parameter estimate $\hat{\boldsymbol{\eta}}$ along a least favorable submodel so that the empirical EIF equation is approximately solved. Specifically, the resulting updated estimate, $\hat{\boldsymbol{\eta}}^{\mathrm{TMLE}}$, approximately satisfies $\mathbb{P}_n \left\{\varphi(O;\hat {\boldsymbol{\eta}}^{\mathrm{TMLE}})\right \}=0.$ 
In the present setting, this requires neutralizing both terms $\hat {A}_{k,p}^{a,a^*}$ and $\hat {B}_{k,p}^{a,a^*}$, which can be done in two steps:
\begin{itemize}
    \item Neutralize $\hat {A}_{k,p}^{a,a^*}$, i.e., correct the bias for each $t_d \leq \nu$ induced by 
    $$\sum_{\substack{i:\,A_i=a \\ S_i=k,\ R_i=0}}\hat {\Omega}(L_i,M_i)  I(\tilde{T}_i \geq t_j)\hat {\mathcal{G}}_{a,k}^{(t_d)}(\nu \mid L_i,M_i) \Bigl(I(\tilde{T}_i=t_d , \delta_i=1)  -\hat {\lambda}_{k}^{a}(t_d\mid M_i,L_i)\Bigr).$$
    The hazard models must be targeted in an appropriate order because the clever covariate at time $t_d$ depends on hazard estimates at subsequent time points. Specifically, we  use a backward recursive targeting procedure, updating the hazards from the final time point $\nu $ to the first $t_d = 1$. This construction avoids repeated global iterations because, at each step, $\hat {\mathcal{G}}_{a,k}^{(t_d)}(\nu \mid L_i,M_i)$ and the clever covariate depends only on hazard estimates that have already been targeted \citep{landsiedel2025hazard}. Following this procedure avoids the need for iterative corrections, as each hazard function is updated through its own dedicated targeting step.
    \item  Neutralize $\hat {B}_{k,p}^{a,a^{*}}$, i.e., correct the bias induced by $$\sum_{i:S_{i} =p,A_{i}=a^{*}} \hat{W}_{p}^{a^{*}}(L)  \big(\hat{S}^{(1), a}_{k}(\nu \mid L_i,M_i)-  \hat b_{k,p}^{(1),a,a^{*}}(L_i) \big)$$ where $\hat{S}^{(1),a}_{k}$ and $ \hat b_{k,p}^{(1),a,a^{*}}$ denote the updated estimators of $S^a_k$ and $b_{k,p}^{a,a^*}$ respectively. These updated quantities are computed using the hazard-function estimates obtained after the first targeting step. In practice, the second step can be done via estimating the regression model
    $\text{logit}(b_{k,p}^{a,a^*,(2)})
    =\text{logit}(\hat  b_{k,p}^{a,a^*,(1)}) + \delta \hat  W_p^{a^*},$ using maximum likelihood. By doing so, the associated score equation corresponds to $\hat {B}_{k,p}^{a,a^*}$, thereby eliminating the second empirical bias term.
\end{itemize}
%In the current decentralized setting, the nuisance parameters are estimated in the same federated manner as described above; however, the construction of the final estimator requires an additional targeting steps to ensure efficient and doubly robust estimation. These steps is carried out through several iterative updates. Algorithm~\ref{alg:federated_tmle} summarizes the full estimation procedure for TMLE, which alternates between a federated phase and a local targeting phase.

In a federated setting, the above two-step approach is implemented in five phases (see Algorithm \ref{alg:federated_tmle}). Phase~1 estimates the shared nuisance 
parameters $\hat {\bm{\eta}}^{\mathrm{fed}}$ via FedAvg, exchanging only 
parameter updates across sites. Phase~2 estimates local nuisance parameters 
$\hat {\bm{\eta}}_{k}^{\mathrm{local},(0)}$ at each site, including propensity 
scores, missing data models, initial hazard and censoring functions, which are then broadcast to all sites. Each site subsequently fits the 
cross-site outcome model $\hat b^{a,a^*}_{k,p}$ to form the full initial 
nuisance vector $\hat {\bm{\eta}}_{k,p}^{\mathrm{local},(0)}$ in Phase~3. Then, Phase~4 
applies the local TMLE targeting of Algorithm~\ref{alg:tmle_hazard_update} 
to obtain targeted hazard estimates 
$\hat {\bm{\eta}}_{k,p}^{\mathrm{local},(1)}$, which are broadcast across 
sites. Phase~5 assembles the full nuisance vector 
$\hat {\bm{\eta}}_{k,p}$ for each pair $(k,p)$ computes 
$\hat {\zeta}_{k,p}^{\mathrm{TMLE}}(a)$.

\begin{proposition}[Multiple robustness]
Suppose
that the proposed nuisance estimators $(\hat \lambda_k^a,  \hat b_{k,p}^{a,a^*}, \hat W_p^{a^*}, \hat g_k^a, \hat \Omega_{k,p}^{a,a^*}, \hat S_{k}^{a})$ admit the probability limits $(\tilde\lambda_k^a, \tilde b_{k,p}^{a,a^*}, \tilde W_p^{a^*}, \tilde g_k^a, \tilde\Omega_{k,p}^{a,a^*}, \tilde S_{k}^{a})$.
The One-Step and TMLE estimators,
$\hat {\theta}_{k,p}^{a,a^*,\mathrm{OS}}$ and
$\hat {\theta}_{k,p}^{a,a^*,\mathrm{TMLE}}$, are consistent for
$\theta_{k,p}^{a,a^*}$ if either of the following conditions holds:
\begin{enumerate}
    \item[(i)] $\tilde{\lambda}_k^{a} = \lambda_k^a
    \,\text{and either}\,
    \widetilde{b}_{k,p}^{a,a^*}
    =
    b_{k,p}^{a,a^*}\,\text{or}\,
    \widetilde{W}_p^{a^*} = W_p^{a^*}.$

    \item[(ii)]$\tilde{g}_k^a = g_k^a,
    \,
    \widetilde{\Omega}_{k,p}^{a,a^*}
    =
    \Omega_{k,p}^{a,a^*}
    \,\text{and either }\,
    \widetilde{W}_p^{a^*} = W_p^{a^*}
    \,\text{or }\,
    \widetilde{b}_{k,p}^{a,a^*}
    =
    \mathbb{E}\left[
        \widetilde{S}_{k}^{a}(\nu \mid L,M)
        \mid L,a^*,p,R=0
    \right].$
\end{enumerate}
\end{proposition}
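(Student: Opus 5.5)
Both estimators converge to the same population functional of the nuisance limits: the plug-in term plus the population mean of the estimated EIF (for TMLE the empirical EIF mean is set to zero, so the limit is the plug-in at the targeted limits plus the EIF mean there). So I would show that
\[
\Theta(\tilde{\bm\eta}) := \mathbb{E}\bigl[I(S=0)P_0^{-1}\tilde b_{k,p}^{a,a^*}(L)\bigr] + \mathbb{E}\bigl[\tilde A_{k,p}^{a,a^*}\bigr] + \mathbb{E}\bigl[\tilde B_{k,p}^{a,a^*}\bigr]
\]
equals $\theta_{k,p}^{a,a^*}$ under (i) or (ii). Consistency then follows from a law of large numbers. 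This needs a Glivenko--Cantelli or cross-fitting condition on the estimated EIF together with bounded weights, which positivity (Assumption~\ref{ass:nie}(iv)) provides.

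The key step is to evaluate $\mathbb{E}[\tilde A]$ and $\mathbb{E}[\tilde B]$ by iterated expectations.

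\textbf{Term $\tilde B$.} Condition on $(L,M,A=a^*,S=p,R=0)$ and then on $L$. This gives
\[
\mathbb{E}[\tilde B]=\mathbb{E}\Bigl[I(S=0)P_0^{-1}\,\frac{\tilde W_p^{a^*}}{W_p^{a^*}}(L)\,\bigl\{\bar S(L)-\tilde b_{k,p}^{a,a^*}(L)\bigr\}\Bigr],
\qquad
\bar S(L)=\mathbb{E}\bigl[\tilde S_k^a(\nu\mid L,M)\mid L,a^*,p,R=0\bigr].
\]
The identity $\mathbb{E}[U_pI(A=a^*)I(R=0)\mid L]=U_p\pi_p^{a^*}\rho_p^{a^*}$ converts the indicators into $U_0W_p^{a^*}$ and hence into $I(S=0)$.

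\textbf{Term $\tilde A$.} Condition on $(L,M,a,k,R=0)$ and use the discrete-time martingale structure. Given $\tilde T\ge t_d$, the conditional mean of $I(\tilde T=t_d,\delta=1)$ is $\lambda_k^a(t_d)$, and $\mathbb{P}(\tilde T\ge t_d\mid\cdot)=S_k^a(t_{d-1})G_k^a(t_{d-1})$. Each summand therefore becomes
\[
S(t_{d-1})\,\frac{G(t_{d-1})}{\tilde G(t_{d-1})}\,\prod_{j>d}\bigl(1-\tilde\lambda(t_j)\bigr)\,\bigl(\lambda(t_d)-\tilde\lambda(t_d)\bigr).
\]
When $\tilde g=g$ the $G$-ratio equals one. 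The sum then telescopes by the Duhamel identity to $S_k^a(\nu)-\tilde S_k^a(\nu)$, where $\tilde S$ is built from $\tilde\lambda$. When instead $\tilde\lambda=\lambda$, every summand vanishes regardless of $g$ and $\Omega$. In the case $\tilde g=g$ and $\tilde\Omega=\Omega$, the outer weight $\Omega\,\tau_k^aV_k$ transports the $(a,k)$ conditional law of $M$ given $L$ to the $(a^*,p,R=0)$ law and reweights $L$ to the target. This follows from the Bayes-rule identity already underlying (\ref{IPW_1_main})--(\ref{IPW_2_main}), together with Assumption~\ref{ass:cens} and MAR. The result is
\[
\mathbb{E}[\tilde A]=\mathbb{E}\Bigl[I(S=0)P_0^{-1}\,\mathbb{E}\bigl[S_k^a(\nu)-\tilde S_k^a(\nu)\mid L,a^*,p,R=0\bigr]\Bigr].
\]

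\textbf{Combining under (i).} Here $\mathbb{E}[\tilde A]=0$ and $\tilde S=S$, so $\bar S=b$. The total is
\[
\mathbb{E}\bigl[I(S=0)P_0^{-1}\{\tilde b+(\tilde W/W)(b-\tilde b)\}\bigr],
\]
which equals $\mathbb{E}[I(S=0)P_0^{-1}b]=\theta$ whenever $\tilde b=b$ or $\tilde W=W$.

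\textbf{Combining under (ii).} Adding the three terms gives
\[
\mathbb{E}\bigl[I(S=0)P_0^{-1}\{\tilde b+(b-\bar S)+(\tilde W/W)(\bar S-\tilde b)\}\bigr].
\]
If $\tilde W=W$ this reduces to $b$. If $\tilde b=\bar S$, the last bracket vanishes and the remainder is again $b$. In both cases $\Theta=\theta$.

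\textbf{TMLE.} The targeting steps of Algorithm~\ref{alg:tmle_hazard_update} and the logistic fluctuation of $b$ only perturb the nuisances along submodels with fitted fluctuation parameters. I would assume these parameters converge, so the targeted limits are well defined. Any limit that was correct stays correct, since the fluctuation parameter converges to $0$ at the truth by the score equation. The case analysis above then applies verbatim, with the TMLE estimator equal to the plug-in at the targeted limits (the EIF mean is zero). I would note that in (ii) the condition on $\tilde b$ is stated relative to the (possibly targeted) $\tilde S$, which is exactly what the proof uses.

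The main obstacle I expect is the $\tilde A$ computation under (ii). It requires carefully verifying that $\Omega$, with its $\tau$, $V$ and $\rho$ components, exactly changes the mediator and covariate distributions from the $(a,k)$ stratum to the $(a^*,p,R=0)$ stratum within the target. It also requires that the telescoping identity $\sum_d S(t_{d-1})(\lambda-\tilde\lambda)(t_d)\prod_{j>d}(1-\tilde\lambda(t_j))=S(\nu)-\tilde S(\nu)$ holds with the paper's indexing of $\mathcal G^{(t_d)}$. I would check the telescoping first by induction on $\nu$.
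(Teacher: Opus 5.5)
Your proposal is correct and follows essentially the paper's proof: you evaluate the limit of the plug-in plus EIF correction at the nuisance limits by iterated expectations, show the hazard term vanishes when $\tilde\lambda_k^a=\lambda_k^a$ or telescopes when $\tilde g_k^a=g_k^a$ and $\tilde\Omega_{k,p}^{a,a^*}=\Omega_{k,p}^{a,a^*}$, and reduce the bias to the product of $\{\tilde W_p^{a^*}/W_p^{a^*}-1\}$ with $(b-\tilde b)$ or $(\bar S-\tilde b)$. Your added remarks on a Glivenko--Cantelli or cross-fitting condition for the law of large numbers, and on the TMLE fluctuation parameters tending to zero at correctly specified components, make explicit what the paper's argument assumes; the only slip is cosmetic, since the telescoped sum equals $\tilde S_k^a(\nu)-S_k^a(\nu)$ and it is the EIF's leading minus sign that gives your (correct) expression for $\mathbb{E}[\tilde A]$.
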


\begin{proposition}[Asymptotic properties]
    Let $\hat\theta_{k,p}^{a,a^*}$ denote either the OS or TMLE estimator of $\theta_{k,p}^{a,a^*}$. Suppose that (i) the positivity condition in Assumption \ref{ass:nie} holds; (ii) the second-order remainder term satisfies $R(\hat{\bm\eta},\bm\eta)=o_P(n^{-1/2})$; and (iii) the class of functions
$
\Bigl\{
\varphi_{k,p}^{a,a^*}(\bm\eta,\theta'):
|\theta'-\theta|<\delta,\;
\|\bm\eta_{k,p}-\tilde{\bm\eta}_{k,p}\|<\delta
\Bigr\}
$
is Donsker for some $\delta>0$, with
$
\mathbb{E} \left [\left(
\varphi_{k,p}^{a,a^*}(\bm\eta_{k,p},\theta')
-
\varphi_{k,p}^{a,a^*}(\tilde{\bm\eta}_{k,p},\theta)
\right)^2\right]
\to 0
$
as $(\bm\eta_{k,p},\theta')\to(\tilde{\bm\eta}_{k,p},\theta)$. Then,
$
\hat\theta_{k,p}^{a,a^*}
=
\theta_{k,p}^{a,a^*}
+
\frac{1}{n}\sum_{i=1}^{n}
\varphi_{k,p}^{a,a^*}(O_i,\bm\eta_{k,p})
+
o_P(n^{-1/2}),
$
which implies $
\sqrt{n}
\bigl(
\hat\theta_{k,p}^{a,a^*}
-
\theta_{k,p}^{a,a^*}
\bigr)
\xrightarrow{D}
N\!\left(
0,\,
\sigma_{k,p}^{a,a^{*}\,2}
\right),
$
where
$
\sigma_{k,p}^{a,a^{*}\,2}
=
V\!\left\{
\varphi_{k,p}^{a,a^*}(O,\bm\eta)
\right\}
$
is the semiparametric efficiency bound.
\end{proposition}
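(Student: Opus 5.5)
The plan is the standard von Mises expansion argument, applied separately to the OS and TMLE estimators. Write $\hat{\bm\eta}$ for the estimated nuisance vector used in the final estimator: the initial fit for OS, and the targeted fit $\hat{\bm\eta}^{\mathrm{TMLE}}$ for TMLE. Both estimators can be written as $\hat\theta = \theta(\hat{\bm\eta}) + \mathbb{P}_n\{\varphi_{k,p}^{a,a^*}(O,\hat{\bm\eta})\} - r_n$.

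1. Establish the form of $r_n$ for each estimator.
   - For OS, $r_n=0$ by construction. The correction $\hat A_{k,p}^{a,a^*}+\hat B_{k,p}^{a,a^*}$ is exactly the empirical mean of the first two EIF components in (\ref{eif}). The third component has empirical mean zero once $\theta(\hat{\bm\eta})$ is taken to be the target-site average $n^{-1}\sum_{S_i=0}P_0^{-1}\hat b_{k,p}^{a,a^*}(L_i)$, with $\hat P_0=n_0/n$.
   - For TMLE, $r_n=\mathbb{P}_n\varphi(O,\hat{\bm\eta}^{\mathrm{TMLE}})$.
   - Show $r_n=o_P(n^{-1/2})$ for TMLE. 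The backward recursion in Algorithm~\ref{alg:tmle_hazard_update} makes each logistic fluctuation's score equation equal the $t_d$-summand of $\hat A$, evaluated at hazards already targeted at later times. Because updating $\lambda$ at $t_d$ does not change $\mathcal G^{(t_{d'})}$ for $d'>d$, these equations stay solved. The second fluctuation of $b$ with covariate $\hat W_p^{a^*}$ solves the $\hat B$ equation exactly, and the third term again averages to zero.

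2. Plug in the first-order expansion. Use the expansion stated before (\ref{eif}) with $\tilde{\bm\eta}=\hat{\bm\eta}$:
   \[
   \theta(\hat{\bm\eta})-\theta = -\mathbb{P}\varphi(O,\hat{\bm\eta}) + R(\hat{\bm\eta},\bm\eta).
   \]
   This gives
   \[
   \hat\theta-\theta = (\mathbb{P}_n-\mathbb{P})\varphi(O,\bm\eta) + (\mathbb{P}_n-\mathbb{P})\{\varphi(O,\hat{\bm\eta})-\varphi(O,\bm\eta)\} + R(\hat{\bm\eta},\bm\eta) - r_n .
   \]
   - The remainder $R$ is $o_P(n^{-1/2})$ by hypothesis (ii).
   - The empirical process term is controlled by hypothesis (iii). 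The Donsker property together with $L_2(\mathbb{P})$ continuity implies asymptotic equicontinuity (van der Vaart, Lemma 19.24). It therefore suffices that $\|\varphi(\cdot,\hat{\bm\eta},\hat\theta)-\varphi(\cdot,\bm\eta,\theta)\|_{L_2}\to_P 0$. Here the limit $\tilde{\bm\eta}$ in (iii) equals the truth $\bm\eta$, and $\hat\theta\to\theta$, since consistency follows from the multiple robustness Proposition or directly from (ii).
   - This convergence uses positivity (i). The weights $\Omega_{k,p}^{a,a^*}$, $W_p^{a^*}$ and $1/G$ are bounded away from infinity, so $L_2$ convergence of the nuisances transfers to $\varphi$ through bounded Lipschitz maps. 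Estimated nuisances also stay bounded with probability tending to one, given truncation or bounds on the fitted propensities.
   - A technical point is that $\varphi$ depends on $\theta$ through the last term and on $P_0$. Replacing $P_0$ by $n_0/n$ only contributes a term that is absorbed into the linear expansion; I would include $P_0$ in the nuisance vector, where it is handled by the delta method.

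3. Conclude. The above yields the asymptotically linear representation with influence function $\varphi_{k,p}^{a,a^*}(O,\bm\eta)$. The central limit theorem applies because $\mathbb{P}\varphi=0$ and $\varphi$ has finite variance under positivity. Efficiency follows because $\varphi$ is the EIF in the nonparametric model $\mathcal M$, derived in the Supplement, so its variance is the efficiency bound.

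The main obstacle is the TMLE claim $r_n=o_P(n^{-1/2})$. The fluctuation MLEs solve their scores only within the specific risk sets (e.g.\ $A=a,S=k,R=0,\tilde T\ge t_d$). One must check that the clever covariates in Algorithm~\ref{alg:tmle_hazard_update} match $\Omega\,\mathcal G^{(t_d)}$ exactly, including the censoring product $\prod_{t_j\le t_{d-1}}(1-g)$. One must also check that the $b$-update is carried out with $S^{(1)}$ from the completed hazard targeting, so that the updated $b$ does not reintroduce a nonzero $\hat A$ term. I would verify this ordering first, and if an exact solution fails, fall back on iterating until $|\mathbb{P}_n\varphi|\le n^{-1/2}/\log n$.
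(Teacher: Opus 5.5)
Your proposal is correct and follows essentially the same route as the paper. Both use the first-order (von Mises) expansion stated before (\ref{eif}): the remainder is handled by (ii), the empirical-process term by the Donsker/$L_2$-continuity condition (iii), and the plug-in bias is removed either by the explicit OS correction or by the TMLE solving $\mathbb{P}_n\varphi_{k,p}^{a,a^*}(O,\hat{\bm\eta}^{\mathrm{TMLE}})=0$; the paper simply asserts this last fact by construction, whereas you actually verify it through the backward recursion. Two small remarks: $\hat A_{k,p}^{a,a^*}$ does not involve $b_{k,p}^{a,a^*}$, so the only ordering requirement is that the hazards not change after the $b$-fluctuation, since $\hat B_{k,p}^{a,a^*}$ depends on $\hat S_k^a$; and bounded weights would need strong rather than the stated pointwise positivity, but you do not need that step because (iii) already supplies the $L_2$ continuity of $\varphi_{k,p}^{a,a^*}$.
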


%Condition (ii) holds when the relevant products of nuisance-estimation errors are $o_{\mathbb{P}}(n^{-1/2})$. A sufficient condition is that the nuisance estimators entering each product converge at rates $o_{\mathbb{P}}(n^{-1/4})$.Such rates can be achieved by a variety of data-adaptive procedures, including the LASSO and highly adaptive LASSO under suitable regularity conditions \citep{ benkeser2016highly}.
Condition (ii) can be made explicit using the factorization of the
second-order remainder derived in the Online Supplementary Materials.
Due to the Cauchy--Schwarz
inequality, this condition is satisfied under the following sufficient conditions: \begin{align*}
\left\|
\hat  W_p^{a^*}-W_{0,p}^{a^*}
\right\|_2
\left\{
\left\|
\hat  b_{k,p}^{a,a^*}-b_{0,k,p}^{a,a^*}
\right\|_2
+
\left\|
\hat  S_k^a(\nu)-S_{0,k}^a(\nu)
\right\|_2
\right\}
&=
o_{\mathbb P}(n^{-1/2})\\
\left\|
\hat \Omega_{k,p}^{a,a^*}
-\Omega_{0,k,p}^{a,a^*}
\right\|_2
\left\|
\hat  S_k^a(\nu)-S_{0,k}^a(\nu)
\right\|_2
&=
o_{\mathbb P}(n^{-1/2}),
\\
\sum_{t_d\leq \nu}
\left\|\frac{\prod_{t_d<t_j\leq\nu}
  \{1- \hat \lambda_k^a(t_j\mid L,M)\}}{\widehat G_{0,a}^{k}(t_{d-1})} \right\|_{2} 
\left\|
\widehat G_{a}^{k}(t_{d-1})-G_{0,a}^{k}(t_{d-1})
\right\|_{2}
\left\|
\widehat\lambda_{k}^{a}(t_d)-\lambda_{0,k}^{a}(t_d)
\right\|_{2} &=
o_{\mathbb P}(n^{-1/2}).
\end{align*}
%If the nuisance estimators
%$\widehat{\lambda}_k^a$, $\widehat{g}_k^a$, $b_{k,p}^{a,a^*}, \widehat{W}_k^a$, and
%$\widehat{\Omega}_{k,p}^{a,a^*}$ 
%converge in $L_2(P_0)$ at rate %$o_{\mathbb P}(n^{-1/4})$ and   %$\frac{\prod_{t_d<t_j\leq\nu}
%  \{1- \hat \lambda_k^a(t_j\mid L,M)\}}{\widehat G(t_{d-1})}$ is uniformly bounded, then the condition (ii) is satisfied.

Since $\nu$ is fixed and the number of discrete time points up to $\nu$
is finite, the convergence rates of
$\hat  S_k^a-S_{0,k}^a$ and
$\hat  G_k^a-G_{0,k}^a$ are controlled by those of
$\hat \lambda_k^a-\lambda_{0,k}^a$ and
$\hat  g_k^a-g_{0,k}^a$, respectively. 
Thus, a simple sufficient condition to guarantee $R(\hat{\bm\eta},\bm\eta)=o_P(n^{-1/2})$ is that the nuisance estimators
$\widehat{\lambda}_k^a$, $\widehat{g}_k^a$, $b_{k,p}^{a,a^*}, \widehat{W}_k^a$, and
$\widehat{\Omega}_{k,p}^{a,a^*}$
converge in $L_2(P_0)$ at rate $o_{\mathbb P}(n^{-1/4})$, and $\prod_{t_d<t_j\leq\nu}
  \{1- \hat \lambda_k^a(t_j\mid L,M)\}/\widehat G(t_{d-1})$ is uniformly bounded.

One challenge, however, is that the two secondary quantities $\widehat W_k^a$ and $\widehat\Omega_{k,p}^{a,a^*}$ are ratios of primary nuisance functions. Consequently, their convergence at the $o_{\mathbb P}(n^{-1/4})$ rate does not follow directly from $n^{-1/4}$-convergence of the constituent nuisance estimators, as estimation errors may be amplified when the denominators approach zero. To address this issue, Corollary~1 imposes a stronger positivity assumption which ensures that $R(\hat{\bm \eta},\bm \eta)=o_{\mathbb P}(n^{-1/2})$ when the primary nuisance estimators $\{\lambda_{k}^{a}, g_{k}^{a}, \tau_{k}^{a}, \pi_{k}^{a}, \rho_{k}^{a}, U_k, V_k, b_{k,p}^{a,a^*}\}$ converge to their true counterparts at sufficiently fast rates. A formal proof is provided in the Online Supplementary Materials.

%, the relevant denominator terms are
%uniformly bounded away from zero. This controls the corresponding inverse
%weights and allows the componentwise $o_{\mathbb P}(n^{-1/4})$ rates to be
%transferred to $\widehat W_k^a,$ 
%$\widehat\Omega_{k,p}^{a,a^*}$ and %$\frac{\prod_{t_d<t_j\leq\nu}
%  \{1- \hat \lambda_k^a(t_j\mid L,M)\}}{\widehat G(t_{d-1})}.$ A formal proof is provided in the Online
%Supplementary Materials.

\begin{corollary}
Assume that there exists a constant $c>0$ such that, for all relevant $(s,m,l,a)$ and all $t\le \nu$,
\begin{align*}
\mathbb P(A=a\mid L,S),\;
\mathbb P(A=a\mid M,L,S),\;
\mathbb P(S=s\mid L),\;
\mathbb P(S=s\mid L,M) &\ge c,\\
1-c \ge \mathbb P(C>t\mid A=a,L=l,S=s) \ge c, \quad
\mathbb P(R=0\mid A=a,L=l,S=s) &\ge c,
\end{align*}
Let $0<\varepsilon<c$. For each nuisance estimator
$\widehat\eta^{\,\mathrm{raw}}$ entering $\widehat\Omega_{k,p}^{a,a^*}$ or $W_p^a$, and $g_k^{a}$ 
let $\widehat\eta$ denote its truncated version, with probability estimators
truncated to $[\varepsilon,1]$ (or $[\varepsilon,1-\varepsilon]$ when needed)
and hazard estimators to $[0,1]$. Then, if $\|\widehat\eta-\eta_0\|_2=o_{\mathbb P}(n^{-1/4})$ for every
$\eta\in\boldsymbol\eta$, it follows that
$R(\widehat{\boldsymbol\eta},\boldsymbol\eta_0)=o_{\mathbb P}(n^{-1/2})$.
\end{corollary}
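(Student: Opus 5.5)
The plan is to reduce the claim to the three sufficient product-rate conditions displayed just before the corollary. Those conditions guarantee $R(\widehat{\boldsymbol\eta},\boldsymbol\eta_0)=o_{\mathbb P}(n^{-1/2})$ via the factorization of the remainder and Cauchy--Schwarz. Given them, it suffices to show three things: (a) $\|\widehat W_p^{a^*}-W_{0,p}^{a^*}\|_2$ and $\|\widehat\Omega_{k,p}^{a,a^*}-\Omega_{0,k,p}^{a,a^*}\|_2$ are $o_{\mathbb P}(n^{-1/4})$; (b) $\|\widehat S_k^a(\nu)-S_{0,k}^a(\nu)\|_2$ and $\|\widehat G_k^a(t)-G_{0,k}^a(t)\|_2$ are $o_{\mathbb P}(n^{-1/4})$ for all $t\le\nu$; and (c) the weight $\prod_{t_d<t_j\le\nu}\{1-\widehat\lambda_k^a(t_j)\}/\widehat G_k^a(t_{d-1})$ is uniformly bounded. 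Multiplying $o_{\mathbb P}(n^{-1/4})$ factors, with the bounded weight, then gives the three displays.

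The first step is a preliminary observation about truncation. Because the true probabilities lie in $[c,1]$ (or $[c,1-c]$) and $\varepsilon<c$, the map $x\mapsto\min\{\max(x,\varepsilon),1\}$ is $1$-Lipschitz and fixes $\eta_0$ pointwise. Hence $|\widehat\eta-\eta_0|\le|\widehat\eta^{\mathrm{raw}}-\eta_0|$ pointwise, so truncation preserves the $o_{\mathbb P}(n^{-1/4})$ rate, and every truncated denominator is bounded below by $\varepsilon$ deterministically.

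For (a), I write $W_p^{a^*}=U_0/(U_p\pi_p^{a^*}\rho_p^{a^*})$ and $\Omega_{k,p}^{a,a^*}=\tau_p^{a^*}V_pU_0/(\tau_k^aV_k\pi_p^{a^*}U_p\rho_p^{a^*})$. I then apply the algebraic identity
\[
\frac{\prod_i x_i}{\prod_j y_j}-\frac{\prod_i x_i^0}{\prod_j y_j^0}
\]
expanded by telescoping: replace one factor at a time. Each resulting term is a difference $\widehat x_i-x_i^0$ or $\widehat y_j-y_j^0$ multiplied by factors that are either probabilities (bounded by $1$) or reciprocals of quantities bounded below by $\min(c,\varepsilon)=\varepsilon$. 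This yields a pointwise bound
\[
|\widehat\Omega-\Omega_0|\le C_\varepsilon\sum_\ell|\widehat\eta_\ell-\eta_{0,\ell}|,
\]
with $C_\varepsilon$ depending only on $\varepsilon$, $c$ and the number of factors. Taking $L_2(P_0)$ norms and applying the triangle inequality transfers the rates.

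For (b), I telescope again over the finitely many time points $t_j\le\nu$, using $|1-\widehat\lambda|,|1-\lambda_0|\le1$ once hazards are truncated to $[0,1]$. This gives
\[
|\widehat S_k^a(\nu)-S_{0,k}^a(\nu)|\le\sum_{t_j\le\nu}|\widehat\lambda_k^a(t_j)-\lambda_{0,k}^a(t_j)|,
\]
and the analogous bound for $G$, so both inherit the $o_{\mathbb P}(n^{-1/4})$ rate.

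For (c), the numerator is at most $1$. The main obstacle is bounding $\widehat G_k^a(t_{d-1})$ below, since truncating each $\widehat g$ to $[\varepsilon,1-\varepsilon]$ gives only $\widehat G\ge\varepsilon^{h}$, a constant that depends on the number of grid points but is fixed. I would therefore argue as follows. Because $\nu$ is fixed and the grid $\{t_j\le\nu\}$ is finite, $\prod_{j}(1-\widehat g(t_j))\ge\varepsilon^{|\{t_j\le\nu\}|}>0$ is a deterministic lower bound. Alternatively, one can truncate $\widehat G$ directly to $[\varepsilon,1]$, which is justified because $G_0\ge c$. With the weight bounded, its $L_2$ norm is $O(1)$.

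Finally, I combine the pieces. Each of the three displayed product conditions is a sum of finitely many terms of the form $O(1)\cdot o_{\mathbb P}(n^{-1/4})\cdot o_{\mathbb P}(n^{-1/4})=o_{\mathbb P}(n^{-1/2})$, which gives $R=o_{\mathbb P}(n^{-1/2})$. I also need $b_{k,p}^{a,a^*}$ at the $n^{-1/4}$ rate, which is assumed directly.
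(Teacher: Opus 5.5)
Your proposal is correct and follows essentially the same route as the paper's proof. Both reduce to the three product-rate conditions, use that clipping is a contraction fixing the true nuisances, and bound $\widehat\Omega_{k,p}^{a,a^*}$ and $\widehat W_p^{a^*}$ through denominators bounded below by $c$ and $\varepsilon$. The only difference there is that the paper uses one common-denominator expansion with $D\ge c^5$ and $\widehat D\ge\varepsilon^5$, where you telescope factor by factor; both then control $\widehat S$ and $\widehat G$ by product-difference identities and bound the censoring weight by $\varepsilon^{-N_l}$.

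One small point, shared with the paper: the assumptions give $g_{0,k}^a\le 1-c$ but not $g_{0,k}^a\ge\varepsilon$. So only the upper clip $\widehat g_k^a\le 1-\varepsilon$ is both needed and guaranteed to preserve the rate; alternatively, use your option of clipping $\widehat G_k^a$ directly to $[\varepsilon,1]$.
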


Under standard regularity conditions for federated optimization, including smoothness and strong convexity of the local objective functions and bounded stochastic-gradient variability, and with a fixed number of centers, the learning rate and number of communication rounds can be chosen such that the optimization error is asymptotically negligible relative to the statistical estimation error \citep{li2019convergence}. 

While the $o_{\mathbb{P}}(n^{-1/4})$ convergence-rate requirement imposed on the nuisance estimators can be achieved by many data-adaptive procedures, including LASSO and highly adaptive LASSO under suitable regularity conditions \citep{benkeser2016highly}, the Donsker condition (iii) in Proposition 2 may be restrictive when highly flexible machine-learning algorithms are employed. A common strategy to circumvent this condition is cross-fitting \citep{10.1111/ectj.12097}:  the sample is randomly partitioned into $Q$ approximately equal folds, denoted $D_1,\ldots,D_Q$;  for each fold $q$, nuisance functions are estimated using the training sample $T_q=\{1,\ldots,n\}\setminus D_q$ and subsequently evaluated on the validation sample $D_q$, yielding predictions $\hat{\bm\eta}^{(-q)}$. Cross-fitted OS and TMLE estimators are then obtained by replacing $\hat{\bm\eta}(O_i)$ with the corresponding out-of-sample prediction $\hat{\bm\eta}^{(-q)}(O_i)$ for each observation $i\in D_q$. In a federated setting, the central server first creates the \(Q\) fold partitions and sends the partition assignments to each participating site server. Each site server then uses the received assignments to split its local dataset into \(Q\) folds. The same fold labels are used across all sites, while patient-level identifiers and local data indices remain stored locally and are never transmitted to the central server or to other sites. For each fold $q$, all sites simultaneously use their corresponding local training folds to perform Phases 1, 2, and 3 of Algorithm~\ref{alg:federated_tmle}. The resulting nuisance functions are then evaluated on each site's local held-out fold. Repeating this procedure across all Q folds yields cross-fitted nuisance predictions that are subsequently used to construct the One-Step or TMLE estimator.

As a direct consequence of the above asymptotic expansion~5, the asymptotic variance of $\hat\theta_{k,p}^{a,a^*}$ can be consistently estimated by the empirical variance of the estimated EIF,
$
\hat\sigma_{k,p}^{a,a^{*}\,2}
=
\hat{\mathrm{Var}}
\!\left\{
\varphi_{k,p}^{a,a^*}(O,\hat{\bm\eta}_{k,p})
\right\}.$
An estimator of
$
\zeta_{k,p}$ can then be obtained as
$
\hat\zeta_{k,p}
=
\hat\theta_{k,p}^{a,1}
-
\hat\theta_{k,p}^{a,0},$
with variance estimated by the sample variance of the empirical EIF of $\zeta_{k,p}$, i.e., $  \varphi_{k,p}^{a,1}(O,\hat{\bm\eta}_{k,p}) - \varphi_{k,p}^{a,0}(O,\hat{\bm\eta}_{k,p}).$ In practice, variance estimation is naturally implementable in a federated manner. Specifically, for each observation $i$, define the estimated EIF vector as
$\hat {\boldsymbol{\phi}}_{i}
=
\left(
\hat {\phi}_{k,p}^{a}
\left(
O_{i},
\hat {\bm{\eta}}_{k,p}
\right)
\right)_{(k,p)}
\in \mathbb{R}^{K^2}.$ After model fitting, each site evaluates the corresponding EIF contributions for its own individuals, i.e. $\Sigma_s=\sum_{i : S_i=s}
\hat {\boldsymbol{\phi}}_{i}\hat {\boldsymbol{\phi}}_{i}^{\top} $. 
These site-level summaries can then be combined to obtain the covariance
matrix $\hat \Sigma$ of $(\zeta_{k,p})$, i.e., $\hat \Sigma = n^{-2}\sum_s \Sigma_s$. 
No individual-level observations or individual influence-function values need
to be shared. 
\begin{algorithm}[h]
\small
\caption{Federated TMLE for the transported NIE vector}
\label{alg:federated_tmle}
\begin{algorithmic}[1]
\Require Local datasets at source sites $1,\ldots,K$ and baseline covariate data from the target population $S=0.$

\State \textbf{Phase 1: Federated estimation of shared nuisance parameters: $U$ and $V$ models}

\State \textbf{Phase 2: Local nuisance estimation}

\State \textbf{Phase 3: Cross-site estimation of the $b_{k,p}^{a,a^*}$ models for all $(k,p)$}

    \Statex \textbf{Phase 4: Target parameter estimation}
\For{each site pair $(k,p)$, and treatment $a^*$}
    \State Apply Algorithm~\ref{alg:tmle_hazard_update}
    with inputs $\hat {\bm{\eta}}^{\mathrm{fed}}$
    and $\hat {\bm{\eta}}_{k,p}^{\mathrm{local},(0)}$
    to obtain targeted estimates
    $\bigl\{\hat\lambda_k^{a,(1)}(t_j)\bigr\}_{j=1}^h$.
    \State Broadcast $\hat {\bm{\eta}}_{k,p}^{\mathrm{local},(1)}$
    to all other sites.
    \State Apply TMLE step to correct $b^{a,a^*}_{k,p}$
    with inputs $\hat {\bm{\eta}}^{\mathrm{fed}}$
    and $\hat {\bm{\eta}}_{k,p}^{\mathrm{local},(1)}$
    to obtain targeted estimates $b^{a,a^*, (2)}_{k,p}$
    \State Broadcast $\hat {\bm{\eta}}_{k,p}^{\mathrm{local},(2)}$
    to all other sites.
    \State Assemble
    $\hat {\bm{\eta}}_{k,p}
    =\bigl(\hat {\bm{\eta}}^{\mathrm{fed}},\,
    \hat {\bm{\eta}}_{k,p}^{\mathrm{local},(2)}\bigr)$.
    \State Compute the TMLE estimate
    $\hat {\zeta}_{k,p}^{\mathrm{TMLE}}$
    using $\hat {\bm{\eta}}_{k,p}$.
\EndFor
\Statex \textbf{Phase 5: Variance Estimation}
\For{each site $k \in \{0, \ldots, K\}$}
    
    \State Estimate $ \sum_{S_i=k} \hat \phi_{i}\phi_{i}^{\top}$

\EndFor

\Ensure $\hat{\boldsymbol{\zeta}}^a$, $\hat {\Sigma}$
\end{algorithmic}
\end{algorithm}

\section{Simulations}

We conduct a simulation study that involves six hypothetical populations: one target population and five trial populations. The total sample size is 60,000 individuals, and each simulation experiment is replicated 2,000 times with two follow-up time points:
\noindent
\begin{align*}
&L_1 \sim \mathcal{U}(0,1), 
\qquad 
L_2 \sim \mathcal{B}(0.5), \\
&\mathbb{P}(S=0\mid L_1,L_2)
=
\left[
1+\sum_{s=1}^5 
\exp\left\{(1,L_1,L_2)\bm{\gamma}_s\right\}
\right]^{-1}, \\
&\mathbb{P}(S=s\mid L_1, L_2)
=
\mathbb{P}(S=0\mid L_1, L_2)
\exp\left\{(1, L_1, L_2)\bm{\gamma}_s\right\},
\qquad s=1,\ldots,5\\
&\mathbb{P}(A=1\mid L_1,L_2,S=s)
= 0.5\\
&\mathbb{P}(M=1\mid A,L_1,L_2,S=s)
=
\operatorname{expit}\left(
-1
-2L_1\mathbb{I}(0.2<L_1\leq 0.6)
+L_2
+\omega_s A
\right)\\
&\lambda(t\mid L_1,L_2,M,A,s)
=
\operatorname{expit}\{
[-0.5-0.5I(s>3)]A
-M
+0.5L_2
\\
&\qquad\qquad\qquad\qquad\quad-L_1L_2+1.2\mathbb{I}(0.2<L_1\leq 0.6)
\}\\
&g(t\mid L_1,L_2,M,A,S=s)
=
\operatorname{expit}\left(
0.5t
-t \times M
+0.2L_2
+L_1L_2
\right).
\end{align*}
where:
\begin{align*}
\bm{\gamma}_1
=
\bm{\gamma}_2
=
\bm{\gamma}_3  
& =
(0.15,0.10,0.10)^\top,  \quad
\bm{\gamma}_4
=
\bm{\gamma}_5
=  (0.15,-0.10,-0.10)^\top,
\\
\bm{\omega}_0
=
\bm{\omega}_5
&=1.2, \quad  
\bm{\omega}_1=
\bm{\omega}_3=1.4, \quad
\bm{\omega}_2
=
\bm{\omega}_4=1.8.
\end{align*}

We compared the parametric estimators (G-formula, IPCW (Equation~\eqref{IPW_1_main}) and IPW  (Equation~\eqref{IPW_2_main})) with the semiparametric estimators OS and TMLE. For all parametric estimators, the nuisance functions were estimated using logistic regression or multinomial logistic regression models, with all baseline covariates included as linear main effects. No interaction terms or nonlinear transformations of the covariates were included. The function $b_{k,p}^{a,a^*}$ was estimated using linear regression, using the same set of baseline covariates as linear main effects. For OS and TMLE, we considered two nuisance estimation strategies: parametric models (OS GLM and TMLE GLM) and random forests (OS RF and TMLE RF). Random forests were implemented in Python using scikit-learn, except for $b_{k,p}^{a,a^*}$, which was estimated using honest forest regression implemented in \texttt{EconML}. Transported estimates were then combined by the meta-analytical approach described in Section~\ref{sec:2}, with uniform weighting scheme. 

Relative bias was computed for each $\zeta_{k,p}$ and averaged across the 25 site-pair estimands. Estimated variance was computed for each $\hat\zeta_{k,p}$, and the median across the 25 site pairs was reported. Empirical variance was calculated as the Monte Carlo variance across the 2,000 replications. Coverage was defined as the proportion of confidence intervals containing the true value, and the median coverage across the 25 site-pair estimands was reported. To assess the proposed heterogeneity decomposition, we also reported the median estimate of each heterogeneity component and compared it with its true value.

Results of this simulation study are presented in Table~1. 
Overall, the parametric estimators exhibit substantial bias and poor confidence-interval coverage under model misspecification. In contrast, the OS RF and TMLE RF approaches exhibit satisfactory performance when the nuisance functions are estimated using data-adaptive methods. For the aggregate effect estimator, the data-adaptive OS and TMLE approaches also showed satisfactory finite-sample performance, with 95\% confidence-interval coverage exceeding 90\%. 

\begin{table}[ht] \label{table:sim}
\centering
\caption{\normalsize{Performance of estimators under the considered simulation scenario. 
RF denotes Random Forest-based nuisance estimation and GLM denotes generalized linear model-based nuisance estimation. 
Bias, estimated variance $\hat {\operatorname{Var}}$, empirical variance $\operatorname{Var}$ are averaged over 2,000 Monte Carlo replications per 25 site pairs. Biases are averaged across the 25 site-pair estimands and medians are reported for variances and coverage.}}
\label{table:simus_combined}

\resizebox{\textwidth}{!}{%
\begin{tabular}{llcccc}
\toprule
Target & Estimator 
& Bias (\%) 
& Estimated $\hat {\operatorname{Var}}$
& Empirical $\operatorname{Var}$
& Coverage \\
\midrule

$\hat {\zeta}_{k,p}$ 
& OS RF 
& $4.46\times10^{-1}$ 
& $8.40\times10^{-5}$ 
& $3.50\times10^{-5}$ 
& $92.5\%$ \\

& OS GLM
& $2.71\times10^{1}$ 
& $8.20\times10^{-5}$ 
& $2.80\times10^{-5}$ 
& $12.4\%$ \\

& TMLE GLM 
& $2.90\times10^{1}$ 
& $8.30\times10^{-5}$ 
& $2.90\times10^{-5}$ 
& $12.2\%$ \\

& TMLE RF 
& $2.92\times10^{0}$ 
& $1.14\times10^{-4}$ 
& $3.80\times10^{-5}$ 
& $94.5\%$ \\

& IPCW GLM
& $2.91\times10^{1}$ 
& $3.60\times10^{-5}$ 
& $3.60\times10^{-5}$ 
& $16.7\%$ \\

& IPW GLM
& $2.45\times10^{1}$ 
& $2.70\times10^{-5}$ 
& $2.80\times10^{-5}$ 
& $16.8\%$ \\

& G-formula GLM
& $2.60\times10^{1}$ 
& $2.70\times10^{-5}$ 
& $2.80\times10^{-5}$ 
& $14.4\%$ \\

\midrule

$\hat {\zeta}$ 
& OS RF 
& $4.25\times10^{-1}$ 
& $5.00\times10^{-6}$ 
& -- 
& $91.8\%$ \\

& OS GLM
& $2.71\times10^{1}$ 
& $6.00\times10^{-6}$ 
& -- 
& $0.0\%$ \\

& TMLE RF 
& $2.81\times10^{0}$ 
& $7.00\times10^{-6}$ 
& -- 
& $90.2\%$ \\

& TMLE GLM 
& $2.91\times10^{1}$ 
& $6.00\times10^{-6}$ 
& -- 
& $0.0\%$ \\

& IPCW GLM
& $2.91\times10^{1}$ 
& $7.00\times10^{-6}$ 
& -- 
& $0.0\%$ \\

& IPW GLM
& $2.45\times10^{1}$ 
& $6.00\times10^{-6}$ 
& -- 
& $0.0\%$ \\

& G-formula GLM
& $2.62\times10^{1}$ 
& $6.00\times10^{-6}$ 
& -- 
& $0.0\%$ \\

\bottomrule
\end{tabular}%
}
\end{table}

\section{Application to real-world data}
Using data from the French National Health Data System, which covers 99\% of the French population \citep{MAILLARD2024659}, we conducted a landmark analysis to investigate the role of methotrexate coprescription in explaining the causal effect of TNFi versus IL-12/23 inhibitor therapy on subsequent treatment persistence in psoriasis. Eligible patients were adults with psoriasis ($\geq$ 18 years of age) who initiated one of the two biologic therapies between January 1, 2013, and December 31, 2022. Treatment was coded as $A=1$ for TNFi and $A=0$ for IL-12/23 inhibitor initiation. Mediator was coded as $M=1$ if methotrexate was co-prescribed during the first month after treatment initiation, and $M=0$ otherwise. The event of interest was treatment switching to any alternative therapy or not (measured every month) during a 6-month follow-up. %Patients who experienced treatment switching or were lost to follow-up before the one-month mediator assessment were excluded. Consequently, the analysis pertains to patients who remained event-free and under follow-up through the first month after treatment initiation. 
Baseline confounders included age, sex, Charlson Comorbidity Index (CCI) \citep{charlson1987new}, diagnosis of other inflammatory diseases, diagnosis of chronic diseases not reported in the CCI, and historical drug use. Details on these covariates are available in the Online Supplementary Materials.
The analysis included 19,443 patients across 12 administrative regions in France, with each region treated as a separate local data source. Île-de-France was designated as the target population.

The proposed OS and TMLE approaches were used to transport mediator and outcome information from each site to the target population. For the nuisance parameters, we used random forests to estimate $\lambda_{k}^{a}, g_{k}^{a}$, as well as  $b_{k,p}^{a,a^*}$. For $\rho_{p}^{a^*}, \phi_{k}^{a}, \tau_{k}^{a}$, we used L2-regularized logistic regression. For $U_p$ and $V_k$, L2-regularized multinomial regressions were fitted by FedAvg (Algorithm~1), with softmax link function and no constraints on the number of communication rounds. As the analysis was mostly for illustrative purposes, hyperparameters in all machine learning algorithms were prespecified rather than tuned by cross-validation. More computational details about nuisance parameter estimations are available in the Online Supplementary Materials.
For the federated gradient-descent optimization, we used a fixed learning rate of $10^{-3}$. Finally, standardized indirect effect estimates were combined by the meta-analytic approach proposed in section 2, with uniform weighting scheme. 

Results of this analysis are reported in Table~\ref{tab:data_results}. Both OS and TMLE yielded very small natural indirect effect estimates, suggesting limited mediation through methotrexate use. The estimated heterogeneity decomposition is also highly consistent across the two estimators. The total heterogeneity is approximately 30 times larger than the corresponding sample variance of $\hat \zeta$, indicating substantial heterogeneity. The total heterogeneity is $9.6\times 10^{-6}$ for OS estimators and $9.6\times 10^{-6}$ for TMLE estimators, while the corresponding sample variances are $3.22\times10^{-7}$ and $3.09\times10^{-7}$ respectively. Most of the total heterogeneity is attributable to differences in the outcome-generating mechanism across sites (41.5\% for OS and 45.9\% for TMLE) and to the interaction component (51.88\% for OS and 45.9\% for TMLE). The interaction component accounts for a pronounced proportion of the estimated heterogeneity.
%For population model $U$ and $V$ fitted in decentralized manner, each population is treated as one federated client. Local optimization is performed using full-batch gradient descent, with one gradient steps per communication round. After the local updates, model parameters are aggregated using sample-size-weighted FedAvg. Training stops when the infinity norm of the global gradient falls below `tol = 1e-4`, or when `max_rounds` is reached.

\begin{table}[ht]
\centering
\caption{\normalsize{Aggregate estimates and heterogeneity decomposition for the OS and
TMLE estimators. The 95\% confidence intervals for $\hat{\zeta}$ are
calculated as $\hat{\zeta}\pm1.96 \sqrt{\hat{V}}$. Contributions are calculated
relative to
$\hat{\tau}^{2}
=\hat{\eta}^{2}+\hat{\xi}^{2}
+\hat {\mathrm{Interaction}}$.}}
\resizebox{\textwidth}{!}{%
\begin{tabular}{l|cc|cccc|ccc}
\toprule
 & \multicolumn{2}{c|}{Aggregate}
 & \multicolumn{3}{c|}{Heterogeneity}
 & \multicolumn{3}{c}{Contribution of $\tau^2$ (\%)} \\
\cmidrule(lr){2-3}
\cmidrule(lr){4-6}
\cmidrule(lr){7-9}
 & $\hat{\zeta}$ (95\% CI)
 & $\hat{V}$
 & $\hat{\eta}^2$
 & $\hat{\xi}^2$
 & Interaction
 & $\hat{\tau}^2$
 & Mediator
 & Outcome
 & Interaction \\
\midrule

OS
& $1.061 \times 10^{-3}
   \left[-6.10 \times10^{-6},\,2.22\times 10^{-3}\right]$& $3.22 \times 10^{-7}$
& $6.38\times 10^{-7}$
& $4.00\times 10^{-6}$
& $5.00\times 10^{-6}$
& $9.63 \times 10^{-6}$
& $6.61$
& $41.50$
& $51.88$ \\

TMLE
& $7.00 \times 10^{-4}
   \left[-3.90 \times 10^{-4},\,1.79\times 10^{-3}\right]$
& $3.09 \times 10^{-7}$
& $7.14 \times 10^{-7}$
& $4.00 \times 10^{-6}$
& $4.00 \times 10^{-6}$
& $8.71 \times 10^{-6}$
& $8.19$
& $45.90$
& $45.90$ \\

\bottomrule
\end{tabular}%
}
\label{tab:data_results}
\end{table}

\section{Conclusion}
In this paper, we developed a federated framework for causal meta-mediation analysis with right-censored time-to-event outcomes. Natural indirect effects in a well-defined target population are identified and estimated using a site-by-site strategy, allowing potential heterogeneity in the outcome and mediator mechanisms across sites to be characterized. We further developed semiparametric One-Step and TMLE estimators based on the efficient influence function, along with federated procedures for nuisance estimation, bias correction, and variance estimation.

Several directions warrant further investigation. First, it would be valuable to study the properties of the proposed estimators under more restrictive communication settings, such as when only a limited number of communication rounds are permitted or when participating sites can communicate only once by sharing aggregated summary statistics. Second, our framework primarily focuses on time-independent mediators and discrete time-to-event outcomes. Extending it to continuous survival outcomes and settings with repeatedly measured mediators and time-varying confounders would further broaden the applicability of the proposed methods in practice.

\section{Acknowledgments}
T.T.V. is supported by the French National Research Agency (Agence Nationale de la Recherche) through funding from the Chaires de Professeur Junior program (23R09551S-MEDIATION).

\section{Data Availability Statement}
The data underlying this article are from the French National Health Data System (Système National des Données de Santé, SNDS). Due to legal and data protection requirements, these individual-level data cannot be made publicly available or shared by the authors. Access to SNDS data is subject to authorization under the applicable French regulatory framework. Researchers wishing to access these data may submit an application through the official SNDS access procedure.
\bibliographystyle{abbrvnat}

\bibliography{references}

@inproceedings{liuprivacy,
  title={Privacy-protected causal survival analysis under distribution shift},
  author={Liu, Yi and Levis, Alexander W and Zhu, Ke and Yang, Shu and Gilbert, Peter B and Han, Larry},
  booktitle={The Fourteenth International Conference on Learning Representations (ICLR)},
  year={2026}
}

@article{han2025federated,
  title={Federated adaptive causal estimation (face) of target treatment effects},
  author={Han, Larry and Hou, Jue and Cho, Kelly and Duan, Rui and Cai, Tianxi},
  journal={Journal of the American Statistical Association},
  volume={120},
  number={551},
  pages={1503--1516},
  year={2025},
  publisher={Taylor \& Francis}
}

@article{vo2019novel,
  title={A novel approach for identifying and addressing case-mix heterogeneity in individual participant data meta-analysis},
  author={Vo, Tat-Thang and Porcher, Raphael and Chaimani, Anna and Vansteelandt, Stijn},
  journal={Research synthesis methods},
  volume={10},
  number={4},
  pages={582--596},
  year={2019},
  publisher={Wiley Online Library}
}

@article{van1998functional,
  title={Functional delta method},
  author={Van der Vaart, AW},
  journal={Asymptotic Statistics},
  pages={291--303},
  year={1998},
  publisher={Cambridge University Press Cambridge, United Kingdom}
}

@article{khellaf2025federated,
  title={Federated Causal Inference from Multi-Site Observational Data via Propensity Score Aggregation},
  author={Khellaf, R{\'e}mi and Bellet, Aur{\'e}lien and Josse, Julie},
  year={2025}
}

@article{rochon2014mediation,
  title={Mediation analysis of the relationship between institutional research activity and patient survival},
  author={Rochon, Justine and du Bois, Andreas and Lange, Theis},
  journal={BMC medical research methodology},
  volume={14},
  number={1},
  pages={9},
  year={2014},
  publisher={Springer}
}

@article{vanderweele2016mediation,
  title={Mediation analysis: a practitioner's guide},
  author={VanderWeele, Tyler J},
  journal={Annual review of public health},
  volume={37},
  pages={17--32},
  year={2016},
  publisher={Annual Reviews}
}

@article{MAILLARD2024659,
title = {Use of the French National Health Data System (SNDS) in pharmacoepidemiology: A systematic review in its maturation phase},
journal = {Therapies},
volume = {79},
number = {6},
pages = {659-669},
year = {2024},
issn = {0040-5957},
doi = {https://doi.org/10.1016/j.therap.2024.05.003},
url = {https://www.sciencedirect.com/science/article/pii/S0040595724000659},
author = {Olivier Maillard and René Bun and Moussa Laanani and Amandine Verga-Gérard and Taylor Leroy and Nathalie Gault and Candice Estellat and Pernelle Noize and Florentia Kaguelidou and Agnès Sommet and Maryse Lapeyre-Mestre and Annie Fourrier-Réglat and Alain Weill and Catherine Quantin and Florence Tubach}}

@article{charlson1987new,
  title={A new method of classifying prognostic comorbidity in longitudinal studies: development and validation},
  author={Charlson, Mary E and Pompei, Peter and Ales, Kathy L and MacKenzie, C Ronald},
  journal={Journal of chronic diseases},
  volume={40},
  number={5},
  pages={373--383},
  year={1987},
  publisher={Elsevier}
}

@article{landsiedel2025hazard,
  title={Hazard-Based Targeted Maximum Likelihood Estimation for Survival in Resampling Designs},
  author={Landsiedel, Kirsten E and Phillips, Rachael V and Petersen, Maya L and Van der Laan, Mark J},
  journal={arXiv preprint arXiv:2511.15045},
  year={2025}
}

@inproceedings{mcmahan2017communication,
  title={Communication-efficient learning of deep networks from decentralized data},
  author={McMahan, Brendan and Moore, Eider and Ramage, Daniel and Hampson, Seth and y Arcas, Blaise Aguera},
  booktitle={Artificial intelligence and statistics},
  pages={1273--1282},
  year={2017},
  organization={PMLR}
}

@article{kairouz2021advances,
  title={Advances and Open Problems in Federated Learning},
  author={Kairouz, Peter and McMahan, H. Brendan and Avent, Brendan and others},
  journal={Foundations and Trends in Machine Learning},
  volume={14},
  number={1--2},
  pages={1--210},
  year={2021}
}

@inproceedings{benkeser2016highly,
  title={The highly adaptive lasso estimator},
  author={Benkeser, David and Van der Laan, Mark},
  booktitle={2016 IEEE international conference on data science and advanced analytics (DSAA)},
  pages={689--696},
  year={2016},
  organization={IEEE}
}

@misc{gdpr,
  author = {{European Union}},
  title  = {Regulation (EU) 2016/679: General Data Protection Regulation (GDPR)},
  year   = {2016},
  url    = {https://eur-lex.europa.eu/legal-content/EN/TXT/?uri=CELEX:32016R0679}
}

@misc{hipaa,
  author = {{U.S. HHS}},
  title  = {HIPAA Privacy Rule to Support Reproductive Health Care Privacy},
  year   = {2024},
  url    = {https://www.hhs.gov/hipaa/for-professionals/special-topics/reproductive-health/index.html}
}

@article{li2020federated,
  title={Federated learning: Challenges, methods, and future directions},
  author={Li, Tian and Sahu, Anit Kumar and Talwalkar, Ameet and Smith, Virginia},
  journal={IEEE signal processing magazine},
  volume={37},
  number={3},
  pages={50--60},
  year={2020},
  publisher={IEEE}
}

@article{deng2026survey,
  title={A Survey on Causality with Federated Learning: Challenges, Techniques, and Applications},
  author={Deng, Jing and Chen, Handi and Jiang, Zhihan and Wong, Raymond Chi-Wing and Ngai, Edith CH},
  journal={ACM Transactions on Knowledge Discovery from Data},
  volume={20},
  number={5},
  pages={1--34},
  year={2026},
  publisher={ACM New York, NY}
}

@article{xiong2023federated,
  title={Federated causal inference in heterogeneous observational data},
  author={Xiong, Ruoxuan and Koenecke, Allison and Powell, Michael and Shen, Zhu and Vogelstein, Joshua T and Athey, Susan},
  journal={Statistics in Medicine},
  volume={42},
  number={24},
  pages={4418--4439},
  year={2023},
  publisher={Wiley Online Library}
}

@article{van2006targeted,
  title={Targeted maximum likelihood learning},
  author={Van der Laan, Mark J and Rubin, Daniel},
  year={2006},
  publisher={bepress}
}

@inproceedings{vo2022bayesian,
  title={Bayesian federated estimation of causal effects from observational data},
  author={Vo, Thanh Vinh and Lee, Young and Hoang, Trong Nghia and Leong, Tze-Yun},
  booktitle={Uncertainty in artificial intelligence},
  pages={2024--2034},
  year={2022},
  organization={PMLR}
}

@article{li2019convergence,
  title={On the convergence of fedavg on non-iid data},
  author={Li, Xiang and Huang, Kaixuan and Yang, Wenhao and Wang, Shusen and Zhang, Zhihua},
  journal={arXiv preprint arXiv:1907.02189},
  year={2019}
}

@article{10.1111/ectj.12097,
    author = {Chernozhukov, Victor and Chetverikov, Denis and Demirer, Mert and Duflo, Esther and Hansen, Christian and Newey, Whitney and Robins, James},
    title = {Double/debiased machine learning for treatment and structural parameters},
    journal = {The Econometrics Journal},
    volume = {21},
    number = {1},
    pages = {C1-C68},
    year = {2018},
    month = {02},
    issn = {1368-4221},
    doi = {10.1111/ectj.12097},
    url = {https://doi.org/10.1111/ectj.12097},
    eprint = {https://academic.oup.com/ectj/article-pdf/21/1/C1/27684918/ectj00c1.pdf},
}

@article{jang2026privacy,
  title={Privacy-preserving causal mediation analysis using distributed electronic health record networks},
  author={Jang, Hyojung and Radwan, Rotana and Risk, Malcolm and Lee, Yao and Bian, Jiang and Shi, Xu and Guo, Serena and Zhao, Lili},
  journal={arXiv preprint arXiv:2607.17958},
  year={2026}
}

\appendix
\section{Identification}

\begin{align*}
&\mathbb{E}\left[
    I\{T(a,M(a^{*}))>\nu\}
    \mid S=0
\right]
\\
&\quad=
\mathbb{E}_{L}\left[
    \mathbb{E}\left[
        I\{T(a,M(a^{*}))>\nu\}
        \mid L
    \right]
    \,\middle|\, S=0
\right]
\\
\intertext{\textit{Cross-world independence:}}
&\quad=
\mathbb{E}_{L}\Biggl[
    \sum_{m\in\mathcal{M}}
    \mathbb{E}\left[
        I\{T(a,M(a^{*}))>\nu\}
        \,\middle|\,
        L,M(a^{*})=m
    \right]
\nonumber\\[-0.2em]
&\hspace{12em}\times
    \mathbb{P}\left(
        M(a^{*})=m
        \mid L
    \right)
    \,\Biggm|\, S=0
\Biggr]
\\
\intertext{\textit{Outcome transportability:}}
&\quad=
\mathbb{E}_{L}\Biggl[
    \sum_{m\in\mathcal{M}}
    \mathbb{E}\left[
        I\{T(a,M(a^{*}))>\nu\}
        \,\middle|\,
        \substack{L,M(a^{*})=m,\\ S=k}
    \right]
\nonumber\\[-0.2em]
&\hspace{12em}\times
    \mathbb{P}\left(
        M(a^{*})=m
        \mid L
    \right)
    \,\Biggm|\, S=0
\Biggr]
\\
\intertext{\textit{Conditional outcome exchangeability:}}
&\quad=
\mathbb{E}_{L}\Biggl[
    \sum_{m\in\mathcal{M}}
    \mathbb{E}\left[
        I\{T(a,M(a^{*}))>\nu\}
        \,\middle|\,
        \substack{L,M(a^{*})=m,\\ A=a,S=k}
    \right]
\nonumber\\[-0.2em]
&\hspace{12em}\times
    \mathbb{P}\left(
        M(a^{*})=m
        \mid L
    \right)
    \,\Biggm|\, S=0
\Biggr]
\\
\intertext{\textit{Outcome consistency:}}
&\quad=
\mathbb{E}_{L}\Biggl[
    \sum_{m\in\mathcal{M}}
    \mathbb{E}\left[
        I\{T>\nu\}
        \,\middle|\,
        L,M=m,A=a,S=k
    \right]
\nonumber\\[-0.2em]
&\hspace{12em}\times
    \mathbb{P}\left(
        M(a^{*})=m
        \mid L
    \right)
    \,\Biggm|\, S=0
\Biggr]
\\
\intertext{\textit{Mediator transportability:}}
&\quad=
\mathbb{E}_{L}\Biggl[
    \sum_{m\in\mathcal{M}}
    \mathbb{E}\left[
        I\{T>\nu\}
        \,\middle|\,
        L,M=m,A=a,S=k
    \right]
\nonumber\\[-0.2em]
&\hspace{12em}\times
    \mathbb{P}\left(
        M(a^{*})=m
        \mid L,S=p
    \right)
    \,\Biggm|\, S=0
\Biggr]
\\
\intertext{\textit{Conditional mediator exchangeability:}}
&\quad=
\mathbb{E}_{L}\Biggl[
    \sum_{m\in\mathcal{M}}
    \mathbb{E}\left[
        I\{T>\nu\}
        \,\middle|\,
        L,M=m,A=a,S=k
    \right]
\nonumber\\[-0.2em]
&\hspace{12em}\times
    \mathbb{P}\left(
        M(a^{*})=m
        \mid L,A=a^{*},S=p
    \right)
    \,\Biggm|\, S=0
\Biggr]
\\
\intertext{\textit{Mediator consistency:}}
&\quad=
\mathbb{E}_{L}\Biggl[
    \sum_{m\in\mathcal{M}}
    \mathbb{E}\left[
        I\{T>\nu\}
        \,\middle|\,
        L,M=m,A=a,S=k
    \right]
\nonumber\\[-0.2em]
&\hspace{12em}\times
    \mathbb{P}\left(
        M=m
        \mid L,A=a^{*},S=p
    \right)
    \,\Biggm|\, S=0
\Biggr] \\
&\quad=
    \sum_{(m,l)\in\mathcal{M} \times \mathcal{L}}
    \mathbb{E}\left[
        I\{T>\nu\}
        \,\middle|\,
        L=l,M=m,A=a,S=k
    \right]
\nonumber\\[-0.2em]
&\hspace{12em}\times
    \mathbb{P}\left(
        M=m
        \mid L=l,A=a^{*},S=p
    \right)
    \mathbb{P}\left( L=l\,|\, S=0  \right).
\end{align*}

\section{G-formula}

Let \(C\) denote the censoring time and define
\[
\widetilde{T}=\min(T,C),
\qquad
\delta=I(T\leq C).
\]
Under conditional independent censoring,
\[
T\perp C
\mid A,L,M,S,R=0,
\]
and the corresponding positivity condition, define the censoring hazard
and survival functions as
\begin{align*}
g_k^a(t\mid L,M,R=0)
&=
\mathbb{P}\left(
    C=t
    \mid
    C\geq t,A=a,L,M,S=k,R=0
\right),
\\
G_k^a(t\mid L,M,R=0)
&=
\mathbb{P}\left(
    C>t
    \mid
    A=a,L,M,S=k,R=0
\right)
\nonumber\\
&=
\prod_{u\leq t}
\left\{
    1-g_k^a(u\mid L,M,R=0)
\right\}.
\end{align*}

Similarly, define the event hazard and survival functions as
\begin{align*}
\lambda_k^a(t\mid L,M,R=0)
&=
\mathbb{P}\left(
    T=t
    \mid
    T\geq t,A=a,L,M,S=k,R=0
\right),
\\
S_k^a(t\mid L,M,R=0)
&=
\mathbb{P}\left(
    T>t
    \mid
    A=a,L,M,S=k,R=0
\right)
\nonumber\\
&=
\prod_{u\leq t}
\left\{
    1-\lambda_k^a(u\mid L,M,R=0)
\right\}.
\end{align*}

Because \(\delta=I(T\leq C)\), the following equivalences hold:
\[
\{T=t,C\geq t\}
=
\{\widetilde{T}=t,\delta=1\},
\qquad
\{T\geq t,C\geq t\}
=
\{\widetilde{T}\geq t\},
\]
whereas
\[
\{C=t,T>t\}
=
\{\widetilde{T}=t,\delta=0\},
\]
and
\[
\{C\geq t,T>t\}
=
\{\widetilde{T}>t\}
\cup
\{\widetilde{T}=t,\delta=0\}.
\]

Under Assumption~4, the event hazard can therefore be
expressed in terms of the observed data as follows:
\begin{align}
\lambda_k^a(t\mid L,M,R=0)
&=
\mathbb{P}\left(
    T=t
    \mid
    T\geq t,A=a,L,M,S=k,R=0
\right)
\nonumber\\
&=
\frac{
    \mathbb{P}\left(
        T=t
        \mid
        A=a,L,M,S=k,R=0
    \right)
}{
    \mathbb{P}\left(
        T\geq t
        \mid
        A=a,L,M,S=k,R=0
    \right)
}
\nonumber\\
&=
\frac{
    \mathbb{P}\left(
        T=t
        \mid
        A=a,L,M,S=k,R=0
    \right)
    \mathbb{P}\left(
        C\geq t
        \mid
        A=a,L,M,S=k,R=0
    \right)
}{
    \mathbb{P}\left(
        T\geq t
        \mid
        A=a,L,M,S=k,R=0
    \right)
    \mathbb{P}\left(
        C\geq t
        \mid
        A=a,L,M,S=k,R=0
    \right)
}
\nonumber\\
&=
\frac{
    \mathbb{P}\left(
        T=t,C\geq t
        \mid
        A=a,L,M,S=k,R=0
    \right)
}{
    \mathbb{P}\left(
        T\geq t,C\geq t
        \mid
        A=a,L,M,S=k,R=0
    \right)
}
\nonumber\\
&=
\mathbb{P}\left(
    T=t,C\geq t
    \mid
    T\geq t,C\geq t,
    A=a,L,M,S=k,R=0
\right)
\nonumber\\
&=
\mathbb{P}\left(
    \widetilde{T}=t,\delta=1
    \mid
    \widetilde{T}\geq t,
    A=a,L,M,S=k,R=0
\right).
\label{eq:event-hazard-observed}
\end{align}

The third equality follows from conditional independent censoring, which
allows the joint probabilities involving \(T\) and \(C\) to be
factorized conditionally on \(A,L,M,S,R=0\).

The censoring hazard can be expressed in terms of the observed data in
the same way. Since censoring at time \(t\) is observed only when
\(C=t\) and \(T>t\), we obtain
\begin{align}
g_k^a(t\mid L,M,R=0)
&=
\mathbb{P}\left(
    C=t
    \mid
    C\geq t,A=a,L,M,S=k,R=0
\right)
\nonumber\\
&=
\frac{
    \mathbb{P}\left(
        C=t
        \mid
        A=a,L,M,S=k,R=0
    \right)
}{
    \mathbb{P}\left(
        C\geq t
        \mid
        A=a,L,M,S=k,R=0
    \right)
}
\nonumber\\
&=
\frac{
    \mathbb{P}\left(
        C=t
        \mid
        A=a,L,M,S=k,R=0
    \right)
    \mathbb{P}\left(
        T>t
        \mid
        A=a,L,M,S=k,R=0
    \right)
}{
    \mathbb{P}\left(
        C\geq t
        \mid
        A=a,L,M,S=k,R=0
    \right)
    \mathbb{P}\left(
        T>t
        \mid
        A=a,L,M,S=k,R=0
    \right)
}
\nonumber\\
&=
\frac{
    \mathbb{P}\left(
        C=t,T>t
        \mid
        A=a,L,M,S=k,R=0
    \right)
}{
    \mathbb{P}\left(
        C\geq t,T>t
        \mid
        A=a,L,M,S=k,R=0
    \right)
}
\nonumber\\
&=
\mathbb{P}\left(
    C=t,T>t
    \mid
    C\geq t,T>t,
    A=a,L,M,S=k,R=0
\right)
\nonumber\\
&=
\mathbb{P}\left(
    \widetilde{T}=t,\delta=0
    \,\middle|\,
    \{\widetilde{T}>t\}
    \cup
    \{\widetilde{T}=t,\delta=0\},
    A=a,L,M,S=k,R=0
\right).
\label{eq:censoring-hazard-observed}
\end{align}

Indeed, conditional independent censoring implies
\begin{align*}
&\frac{
    \mathbb{P}\left(
        C=t,T>t
        \mid
        A=a,L,M,S=k,R=0
    \right)
}{
    \mathbb{P}\left(
        C\geq t,T>t
        \mid
        A=a,L,M,S=k,R=0
    \right)
}
\nonumber\\
&\quad =
\frac{
    \mathbb{P}\left(
        C=t
        \mid
        A=a,L,M,S=k,R=0
    \right)
    \mathbb{P}\left(
        T>t
        \mid
        A=a,L,M,S=k,R=0
    \right)
}{
    \mathbb{P}\left(
        C\geq t
        \mid
        A=a,L,M,S=k,R=0
    \right)
    \mathbb{P}\left(
        T>t
        \mid
        A=a,L,M,S=k,R=0
    \right)
}
\nonumber\\
&\quad =
\frac{
    \mathbb{P}\left(
        C=t
        \mid
        A=a,L,M,S=k,R=0
    \right)
}{
    \mathbb{P}\left(
        C\geq t
        \mid
        A=a,L,M,S=k,R=0
    \right)
}.
\end{align*}

Consequently, the event and censoring hazards are identified from the
observed variables \((\widetilde{T},\delta)\) as
\begin{align*}
\lambda_k^a(t\mid L,M,R=0)
&=
\mathbb{P}\left(
    \widetilde{T}=t,\delta=1
    \mid
    \widetilde{T}\geq t,
    A=a,L,M,S=k,R=0
\right),
\\
g_k^a(t\mid L,M,R=0)
&=
\mathbb{P}\left(
    \widetilde{T}=t,\delta=0
    \,\middle|\,
    \{\widetilde{T}>t\}
    \cup
    \{\widetilde{T}=t,\delta=0\},
    A=a,L,M,S=k,R=0
\right).
\end{align*}

Under the missing-at-random assumption, conditional on
\((A,S,L,M)\), the distribution of the event and censoring times does
not depend on \(R\). Therefore, these quantities are identified using
the observed data from individuals with \(R=0\).

The G-formula is then given by
\begin{align}
\theta_{k,p}^{a,a^{*}}
&=
\sum_{(m,l)\in\mathcal{M}\times\mathcal{L}}
S_k^a(\nu\mid l,m,R=0)
\,
\mathbb{P}\left(
    M=m
    \mid
    L=l,R=0,A=a^{*},S=p
\right)
\mathbb{P}\left(
    L=l
    \mid
    S=0
\right).
\end{align}

\section{IPCW and IPW formulations}

\subsection{IPCW}

Let \(S\) denote the population or data-source indicator, with \(j\), \(k\), and \(p\) denoting three possible values. Under the identification assumptions stated above,
\begin{align}
\theta_{k,p}^{a,a^{*}}(P)
&=
\mathbb{E}\left[
    I\{T(a,M(a^{*}))>\nu\}
    \mid S=0
\right]
\nonumber\\
&=
\sum_{(\ell,m)\in\mathcal{L}\times\mathcal{M}}
\mathbb{E}\left[
    I\{T>\nu\}
    \mid
    L=\ell,M=m,A=a,S=k,R=0
\right]
\nonumber\\[-0.2em]
&\qquad\times
\mathbb{P}(M=m\mid L=\ell,A=a^{*},S=p,R=0)
\mathbb{P}(L=\ell\mid S=0).
\label{eq:gformula-ipcw}
\end{align}

For every \((\ell,m)\),
\begin{align}
&\mathbb{E}\left[
    I\{T>\nu\}
    \mid
    L=\ell,M=m,A=a,S=k,R=0
\right]
\nonumber\\
&\quad=
\frac{
    \mathbb{E}\left[
        I\{T>\nu\}
        I\{M=m,A=a,S=k,R=0\}
        \mid L=\ell
    \right]
}{
    \mathbb{P}(M=m\mid L=\ell,A=a,S=k,R=0)
}
\nonumber\\[-0.2em]
&\qquad\times
\frac{1}{
    \mathbb{P}(R=0\mid L=\ell,A=a,S=k)
    \mathbb{P}(A=a\mid L=\ell,S=k)
    \mathbb{P}(S=k\mid L=\ell)
}.
\label{eq:conditional-mean-ipcw}
\end{align}

Substituting \eqref{eq:conditional-mean-ipcw} into
\eqref{eq:gformula-ipcw} and using
\[
\mathbb{P}(L=\ell\mid S=0)
=
\frac{
    \mathbb{P}(S=0\mid L=\ell)\mathbb{P}(L=\ell)
}{
    \mathbb{P}(S=0)
},
\]
gives
\begin{align}
\theta_{k,p}^{a,a^{*}}(P)
&=
\frac{1}{\mathbb{P}(S=0)}
\mathbb{E}\Biggl[
    I\{T>\nu\}I\{A=a,S=k,R=0\}
\nonumber\\[-0.2em]
&\qquad\times
\frac{
    \mathbb{P}(M\mid L,A=a^{*},S=p,R=0)
}{
    \mathbb{P}(M\mid L,A=a,S=k,R=0)
}
\nonumber\\[-0.2em]
&\qquad\times
\frac{
    \mathbb{P}(S=0\mid L)
}{
    \mathbb{P}(R=0\mid L,A=a,S=k)
    \mathbb{P}(A=a\mid L,S=k)
    \mathbb{P}(S=k\mid L)
}
\Biggr].
\label{eq:initial-ipcw}
\end{align}

The mediator distribution can be rewritten using Bayes' rule. For
\(s\in\{k,p\}\) and any treatment value \(a'\),
\begin{align}
\mathbb{P}(M\mid L,A=a',S=s,R=0)
&=
\frac{
    \mathbb{P}(A=a'\mid L,M,S=s,R=0)
    \mathbb{P}(S=s\mid L,M,R=0)
}{
    \mathbb{P}(A=a'\mid L,S=s,R=0)
    \mathbb{P}(S=s\mid L,R=0)
}
\nonumber\\[-0.2em]
&\qquad\times
\mathbb{P}(M\mid L,R=0).
\label{eq:mediator-bayes}
\end{align}

Therefore,
\begin{align}
&\frac{
    \mathbb{P}(M\mid L,A=a^{*},S=p,R=0)
}{
    \mathbb{P}(M\mid L,A=a,S=k,R=0)
}
\nonumber\\
&\quad=
\frac{
    \mathbb{P}(A=a^{*}\mid L,M,S=p,R=0)
    \mathbb{P}(S=p\mid L,M,R=0)
}{
    \mathbb{P}(A=a\mid L,M,S=k,R=0)
    \mathbb{P}(S=k\mid L,M,R=0)
}
\nonumber\\[-0.2em]
&\qquad\times
\frac{
    \mathbb{P}(A=a\mid L,S=k,R=0)
    \mathbb{P}(S=k\mid L,R=0)
}{
    \mathbb{P}(A=a^{*}\mid L,S=p,R=0)
    \mathbb{P}(S=p\mid L,R=0)
}.
\label{eq:mediator-ratio-ipcw}
\end{align}

Moreover,
\begin{align}
&\mathbb{P}(A=a\mid L,S=k,R=0)
\mathbb{P}(S=k\mid L,R=0)
\nonumber\\
&\qquad=
\frac{
    \mathbb{P}(R=0\mid L,A=a,S=k)
    \mathbb{P}(A=a\mid L,S=k)
    \mathbb{P}(S=k\mid L)
}{
    \mathbb{P}(R=0\mid L)
},
\label{eq:bayes-k}
\\
&\mathbb{P}(A=a^{*}\mid L,S=p,R=0)
\mathbb{P}(S=p\mid L,R=0)
\nonumber\\
&\qquad=
\frac{
    \mathbb{P}(R=0\mid L,A=a^{*},S=p)
    \mathbb{P}(A=a^{*}\mid L,S=p)
    \mathbb{P}(S=p\mid L)
}{
    \mathbb{P}(R=0\mid L)
}.
\label{eq:bayes-p}
\end{align}

Hence,
\begin{align}
&\frac{
    \mathbb{P}(A=a\mid L,S=k,R=0)
    \mathbb{P}(S=k\mid L,R=0)
}{
    \mathbb{P}(A=a^{*}\mid L,S=p,R=0)
    \mathbb{P}(S=p\mid L,R=0)
}
\nonumber\\
&\quad=
\frac{
    \mathbb{P}(R=0\mid L,A=a,S=k)
    \mathbb{P}(A=a\mid L,S=k)
    \mathbb{P}(S=k\mid L)
}{
    \mathbb{P}(R=0\mid L,A=a^{*},S=p)
    \mathbb{P}(A=a^{*}\mid L,S=p)
    \mathbb{P}(S=p\mid L)
}.
\label{eq:bayes-ratio-ipcw}
\end{align}

Substituting \eqref{eq:mediator-ratio-ipcw} and
\eqref{eq:bayes-ratio-ipcw} into \eqref{eq:initial-ipcw}, the terms
\[
\mathbb{P}(R=0\mid L,A=a,S=k)
\mathbb{P}(A=a\mid L,S=k)
\mathbb{P}(S=k\mid L)
\]
cancel. It follows that
\begin{align}
\theta_{k,p}^{a,a^{*}}(P)
&=
\frac{1}{\mathbb{P}(S=0)}
\mathbb{E}\Biggl[
    I\{T>\nu\}I\{A=a,S=k,R=0\}
\nonumber\\[-0.2em]
&\qquad\times
\frac{
    \mathbb{P}(A=a^{*}\mid L,M,S=p,R=0)
    \mathbb{P}(S=p\mid L,M,R=0)
    \mathbb{P}(S=0\mid L)
}{
    \mathbb{P}(A=a\mid L,M,S=k,R=0)
    \mathbb{P}(S=k\mid L,M,R=0)
}
\nonumber\\[-0.2em]
&\qquad\times
\frac{1}{
    \mathbb{P}(A=a^{*}\mid L,S=p)
    \mathbb{P}(S=p\mid L)
    \mathbb{P}(R=0\mid L,A=a^{*},S=p)
}
\Biggr].
\label{eq:final-ipcw}
\end{align}
Define
\begin{align}
\Omega_{k,p}^{a,a^{*}}(L,M)
&=
\frac{
    \mathbb{P}(A=a^{*}\mid L,M,S=p,R=0)
    \mathbb{P}(S=p\mid L,M,R=0)
    \mathbb{P}(S=0\mid L)
}{
    \mathbb{P}(A=a\mid L,M,S=k,R=0)
    \mathbb{P}(S=k\mid L,M,R=0)
}
\nonumber\\[-0.2em]
&\quad\times
\frac{1}{
    \mathbb{P}(A=a^{*}\mid L,S=p)
    \mathbb{P}(S=p\mid L)
    \mathbb{P}(R=0\mid L,A=a^{*},S=p)
}.
\label{eq:omega-ipcw}
\end{align}

Since
\[
I\{\widetilde{T}>\nu\}
=
I\{T>\nu\}I\{C>\nu\},
\]
under the conditional independent censoring assumption
\[
T\perp C
\mid L,M,A=a,S=k,R=0,
\]
we have
\begin{align*}
&\mathbb{E}\left[
    \left.
    \frac{I\{\widetilde{T}>\nu\}}
    {G_k^a(\nu\mid L,M,R=0)}
    \right\|_2
    L,M,A=a,S=k,R=0
\right]
\nonumber\\
&\quad=
\frac{
    \mathbb{P}(T>\nu,C>\nu
    \mid L,M,A=a,S=k,R=0)
}{
    G_k^a(\nu\mid L,M,R=0)
}
\nonumber\\
&\quad=
\frac{
    \mathbb{P}(T>\nu\mid L,M,A=a,S=k,R=0)
    \mathbb{P}(C>\nu\mid L,M,A=a,S=k,R=0)
}{
    G_k^a(\nu\mid L,M,R=0)
}
\nonumber\\
&\quad=
\mathbb{P}(T>\nu\mid L,M,A=a,S=k,R=0)
\nonumber\\
&\quad=
\mathbb{E}\left[
    I\{T>\nu\}
    \mid L,M,A=a,S=k,R=0
\right].\nonumber
\end{align*}

Consequently, the IPCW representation is
\begin{align*}
\theta_{k,p}^{a,a^{*}}(P)
&=
\frac{1}{\mathbb{P}(S=0)}
\mathbb{E}\Biggl[
    \frac{
        \Omega_{k,p}^{a,a^{*}}(L,M)
    }{
        G_k^a(\nu\mid L,M,R=0)
    }
\nonumber\\[-0.2em]
&\hspace{7em}\times
    I\{\widetilde{T}>\nu\}
    I\{A=a,S=k,R=0\}
\Biggr].\nonumber
\end{align*}

Then
\begin{align}
\theta_{k,p}^{a,a^{*}}(P)
&=
\frac{1}{\mathbb{P}(S=0)}
\mathbb{E}\Biggl[
    \frac{
        \Omega_{k,p}^{a,a^{*}}(L,M)
    }{
        G_k^a(\nu\mid L,M,R=0)
    }
    I\{\widetilde{T}>\nu\}
    I\{A=a,S=k,R=0\}
\Biggr].
\label{IPW_1}
\end{align}

\subsection{IPW}

Under the identification assumptions stated above,
\begin{align*}
\theta_{k,p}^{a,a^{*}}(P)
&=
\mathbb{E}\left[
    I\{T(a,M(a^{*}))>\nu\}
    \mid S=0
\right]
\nonumber\\
&=
\mathbb{E}_{L}\Biggl[
    \mathbb{E}_{M}\left[
         S_k^a( \nu \mid L,M,R=0)
        \mid L,A=a^{*},S=p,R=0
    \right]
    \,\bigg|\, S=0
\Biggr].
\end{align*}

Define
\[
b_{k,p}^{a,a^{*}}(\nu\mid L)
=
\mathbb{E}\left[
     S_k^a( \nu \mid L,M,R=0)
    \mid L,A=a^{*},S=p,R=0
\right].
\]
Then
\begin{align}
\theta_{k,p}^{a,a^{*}}(P)
&=
\frac{1}{\mathbb{P}(S=0)}
\mathbb{E}\left[
    b_{k,p}^{a,a^{*}}(\nu\mid L)
    \mathbb{P}(S=0\mid L)
\right]
\nonumber\\
&=
\frac{1}{\mathbb{P}(S=0)}
\mathbb{E}\Biggl[
    \frac{
        I\{A=a^{*},S=p,R=0\}
        \mathbb{P}(S=0\mid L)
    }{
        \mathbb{P}(A=a^{*}\mid L,S=p,R=0)\,
        \mathbb{P}(R=0\mid L,S=p)\,
        \mathbb{P}(S=p\mid L)
    }
\nonumber\\[-0.2em]
&\hspace{11em}\times
     S_k^a( \nu \mid L,M,R=0)
\Biggr] \nonumber\\
&= \frac{1}{\mathbb{P}(S=0)}
\mathbb{E}\Biggl[I\{A=a^{*},S=p,R=0\} W_{k}^{a}(L) \,
     S_k^a( \nu \mid L,M,R=0)
\Biggr].\label{IPW_2}
\end{align}

with $W_{k}^{a}(L) = \displaystyle \frac{
        \mathbb{P}(S=0\mid L)
    }{\mathbb{P}(A=a^{*}\mid L,S=p)\,\mathbb{P}(S=p\mid L) \mathbb{P}(R=0\mid L,p,a^*) }$

\section{Parametric Estimation}

$\psi_{\theta,i}^{a,a^*}$ denotes the estimating function for the target parameter $\theta_{k,p}^{a,a^*}$, and the remaining components correspond to the local, federated, and mixed nuisance parameters. The indicator functions ensure that the local and mixed nuisance estimating equations are evaluated only in studies $S=k$ and $S=p$, respectively. The estimating equation for $\theta_{k,p}^{a,a^*}$ is evaluated according to the estimator representation: IPW representations (\ref{IPW_1}) and (\ref{IPW_2}) or the G-formula representation (\ref{G-formula1}). The federated nuisance estimating equations combine study-specific contributions across all participating studies.

The first component, $\psi_{\theta,i}^{a,a^*}$, is the estimating equation defining the target parameter. For the weighting representation (\ref{IPW_1}), it takes the form
\[
\psi_{\theta,i}^{a,a^*}(\bm{\vartheta})
= \frac{n_0}{n} \biggl \{
\frac{\Omega_{k,p}^{a,a^*}(L_i,M_i)}
{G_k^a(\nu\mid L_i,M_i,R_i=0)}
I(\tilde T_i>\nu,a,k,R_i=0)
-\theta_{k,p}^{a,a^*}\biggr\},
\]
whereas for the alternative representation (\ref{IPW_2}) it can be written as
\[
\psi_{\theta,i}^{a,a^*}(\bm{\vartheta})
=\frac{n_0}{n} \biggl \{
W_p^{a^*}(L_i)
I(a^*,p,R_i=0)
S_k^a(\nu\mid L_i,M_i,R_i=0)
-\theta_{k,p}^{a,a^*} \biggr \}.
\]

For the G-formula representation,
\[
\psi_{\theta,i}^{a,a^*}(\bm{\vartheta})
= \frac{n_0}{n} \biggl \{
I(S_i=0)
b_{k,p}^{a,a^*}(L_i)
-\theta_{k,p}^{a,a^*} \biggr \}.
\]

The second component,
$\bm{\psi}^{\mathrm{local}}_{k,p,i}$,
contains the score equations associated with the nuisance models estimated entirely within study $k$:
\[
\bm{\psi}^{\mathrm{local}}_{k,p,i}
=
\left(
\psi_{\pi,i},
\psi_{\tau,i},
\psi_{\rho,i},
\psi_{\lambda,i},
\psi_{g,i}
\right)^{\top},
\]
where each element corresponds to the score of the parametric model used to estimate
$\pi_k^a$,
$\tau_k^a$,
$\rho_k^a$,
$\lambda_k^a$,
and
$g_k^a$.
These estimating equations involve only observations from study $k$.

The third component,
$\bm{\psi}^{\mathrm{fed}}_{k,p,i}$,
contains the estimating equations for the federated nuisance parameters:
\[
\bm{\psi}^{\mathrm{fed}}_{k,p,i}
=
\left(
\psi_{U,i},
\psi_{V,i}
\right)^{\top},
\]
where $\psi_{U,i}$ and $\psi_{V,i}$ are the score contributions for the multinomial model defining $U_p(L)$ and the model defining $V_k(L,M)$, respectively. Parameters are optimized using a federated learning algorithm.

The final component,
$\bm{\psi}^{\mathrm{mix}}_{k,p,i}$,
corresponds to the estimating equations for the sequentially estimated parameter
$b_{k,p}^{a,a^*}$.
If
\[
b_{k,p}^{a,a^*}(L)=m(L;\beta_b),
\]
then
\[
\bm{\psi}^{\mathrm{mix}}_{k,p,i}
=
\frac{\partial m(L_i;\beta_b)}
{\partial\beta_b}
\left\{
\widehat S_k^a(\nu\mid L_i,M_i)
-
m(L_i;\beta_b)
\right\},
\]
which is evaluated only within study $p$.

\begin{algorithm}[H]
\small
\caption{Federated parametric estimation of $\theta_{k,p}^{a,a^*}$}
\label{alg:federated_ipw}
\begin{algorithmic}[1]
\Require Local datasets at sites $k=1,\ldots,K$ and target data at site $0$.

\Statex
\textbf{Phase 1: Federated estimation of shared nuisance parameter}

\If{representation (\ref{IPW_1}) or (\ref{IPW_2})}

    \State Obtain
    $
    \widehat{\bm{\eta}}^{\mathrm{fed}}
    =
    (\hat U_0,\hat U_1,\ldots,\hat U_K,\hat V_1,\ldots,\hat V_K)
    $
    by fitting the parametric models for
    $U_s(L)$
    and
    $V_s(L,M)$
    using a federated learning algorithm.

\EndIf

\Statex
\textbf{Phase 2: Local nuisance estimation}

\For{each site $k$}

    \State Estimate the local nuisance parameters
   $
    \widehat{\bm{\eta}}_k^{\mathrm{local}}
    =
    \bigl(
    \hat\pi_k^a,\,
    \hat\tau_k^a,\,
    \hat\rho_k^a,\,
    \{\hat\lambda_k^a(t_m)\}_{m=1}^h,\,
    \{\hat g_k^a(t_m)\}_{m=1}^h
    \bigr)
   $ using local data.

    \State Broadcast parameters of
    $\widehat{\bm{\eta}}_k^{\mathrm{local}}$  to all other sites
\EndFor

\Statex
\textbf{Phase 3: Cross-site nuisance assembly}

\For{each pair $(k,p)$}

    \If{G-formula representation} Fit the parametric model $b_{k,p}^{a,a^*}(L;\beta_b)$ and obtain $\hat b_{k,p}^{a,a^*}$.
    \EndIf

\EndFor

\Statex
\textbf{Phase 4: Estimation of $\theta_{k,p}^{a,a^*}$}
\For{each pair $(k,p)$} 
    \If{IPW (\ref{IPW_1})} site $k$ computes $\widehat{\theta}_{k,p}^{a,a^*}$ 
    \ElsIf{IPW (\ref{IPW_2})} then site $p$ computes $\widehat{\theta}_{k,p}^{a,a^*}$; 
    \ElsIf{G-formula} then target site computes $\widehat{\theta}_{k,p}^{a,a^*}$. 
    \EndIf
\EndFor

\Statex
\textbf{Phase 5: Variance estimation}

\For{each site $s$}

    \State Compute the local estimating-function contributions $\hat A_s,
    \hat B_s.$ and transmit to the target server.

\EndFor

\State Aggregate
$\hat A
=
\sum_s \hat A_s,
\hat B
=
\sum_s \hat B_s,$
and compute $\widehat{\mathrm{Var}}(\hat{\bm{\vartheta}})
=
\hat A^{-1}
\hat B
\hat A^{-\top}.$

\Ensure
$\widehat{\theta}_{k,p}^{a,a^*}$,
$\widehat{\mathrm{Var}}(\widehat{\theta}_{k,p}^{a,a^*})$,
and associated confidence intervals.

\end{algorithmic}
\end{algorithm}

Variance can alternatively be estimated using a nonparametric bootstrap, which is valid for M-estimators under standard regularity conditions. In the federated setting, bootstrap resampling can be implemented without sharing individual-level data. Specifically, the target population site generates bootstrap samples by resampling subject indices and communicates the corresponding bootstrap indices to the participating sites. Each site then reconstructs the corresponding bootstrap sample locally, re-estimates the required nuisance parameters, and recomputes the federated estimator $\hat{\theta}_{k,p}^{a,a^*}$. Repeating this procedure over $B$ bootstrap replicates yields the empirical bootstrap distribution of the estimator, from which the variance and confidence intervals can be obtained.

\section{Influence function}

By the product rule, the pathwise derivative decomposes as
\[
\dot{\theta}(P)=A+B+C,
\]
where
\begin{align}
A
&=
\sum_{(\ell,m)\in\mathcal{L}\times\mathcal{M}}
\dot S_{a,k}(\nu\mid \ell,m,R=0)\,
q_{p}^{a^{*}}(m\mid \ell)\,
r_0(\ell),
\label{eq:if-component-a}
\\
B
&=
\sum_{(\ell,m)\in\mathcal{L}\times\mathcal{M}}
S_{a,k}(\nu\mid \ell,m,R=0)\,
\dot q_{p}^{a^{*}}(m\mid \ell)\,
r_0(\ell),
\label{eq:if-component-b}
\\
C
&=
\sum_{(\ell,m)\in\mathcal{L}\times\mathcal{M}}
S_{a,k}(\nu\mid \ell,m,R=0)\,
q_{p}^{a^{*}}(m\mid \ell)\,
\dot r_0(\ell).
\label{eq:if-component-c}
\end{align}

We derive each term separately.

%%%%%%%%%%%%%%%%%%%%%%%%%%%%%%%%%%%%%%%%%%%%%%%%%%%%%%%%%%%%
\paragraph*{Contribution of the mediator distribution: $B. $}

For fixed $(\ell,m)$, the conditional probability
$q_{p}^{a^{*}}(m\mid \ell)$ has influence-function contribution
\[
\frac{
    I(L=\ell,A=a^{*},S=p,R=0)
}{
    \mathbb{P}(L=\ell,A=a^{*},S=p,R=0)
}
\left\{
    I(M=m)-q_{p}^{a^{*}}(m\mid \ell)
\right\}.
\]

Hence,
\begin{align*}
B
&=
\sum_{(\ell,m)\in\mathcal{L}\times\mathcal{M}}
S_{a,k}(\nu\mid \ell,m,R=0)
\frac{
    I(L=\ell,A=a^{*},S=p,R=0)
}{
    \mathbb{P}(L=\ell,A=a^{*},S=p,R=0)
}
\nonumber\\
&\qquad\qquad\qquad\times
\left\{
    I(M=m)-q_{p}^{a^{*}}(m\mid \ell)
\right\}
r_0(\ell).
\end{align*}

Summing over $(\ell,m)$ yields
\begin{align*}
B
&=
\frac{1}{\mathbb{P}(S=0)}
\frac{
    I(A=a^{*},S=p,R=0)
}{
    \mathbb{P}(A=a^{*},S=p,R=0\mid L)
}
\mathbb{P}(S=0\mid L)
\nonumber\\
&\qquad\times
\Bigg[
S_{a,k}(\nu\mid L,M,R=0)
-
\mathbb{E}_{M}
\left\{
S_{a,k}(\nu\mid L,M,R=0)
\mid L,A=a^{*},S=p,R=0
\right\}
\Bigg].
\end{align*}

Equivalently, using
\[
\mathbb{P}(A=a^{*},S=p,R=0\mid L)
=
\mathbb{P}(A=a^{*}\mid L,S=p,R=0)
\mathbb{P}(S=p,R=0\mid L),
\]
we may write
\begin{align*}
B
&=
\frac{1}{\mathbb{P}(S=0)}
\frac{
    I(A=a^{*},S=p,R=0)
}{
    \mathbb{P}(A=a^{*}\mid L,S=p,R=0)
}
\frac{
    \mathbb{P}(S=0\mid L)
}{
    \mathbb{P}(S=p,R=0\mid L)
}
\nonumber\\
&\qquad\times
\Bigg[
S_{a,k}(\nu\mid L,M,R=0)
-
\mathbb{E}_{M}
\left\{
S_{a,k}(\nu\mid L,M,R=0)
\mid L,A=a^{*},S=p,R=0
\right\}
\Bigg].
\end{align*}

%%%%%%%%%%%%%%%%%%%%%%%%%%%%%%%%%%%%%%%%%%%%%%%%%%%%%%%%%%%%
\paragraph*{Contribution of the target covariate distribution: $C. $}

For $r_0(\ell)=\mathbb{P}(L=\ell\mid S=0)$, the corresponding
influence-function contribution is
\[
\frac{I(S=0)}{\mathbb{P}(S=0)}
\left\{
I(L=\ell)-r_0(\ell)
\right\}.
\]

Therefore,
\begin{align*}
C
&=
\sum_{(\ell,m)\in\mathcal{L}\times\mathcal{M}}
S_{a,k}(\nu\mid \ell,m,R=0)
q_{p}^{a^{*}}(m\mid \ell)
\frac{I(S=0)}{\mathbb{P}(S=0)}
\left\{
I(L=\ell)-r_0(\ell)
\right\}.
\end{align*}

Thus,
\begin{align*}
C
&=
\frac{I(S=0)}{\mathbb{P}(S=0)}
\Bigg[
\sum_{m\in\mathcal{M}}
S_{a,k}(\nu\mid L,m,R=0)
q_{p}^{a^{*}}(m\mid L)
\nonumber\\
&\qquad\qquad
-
\sum_{(\ell,m)\in\mathcal{L}\times\mathcal{M}}
S_{a,k}(\nu\mid \ell,m,R=0)
q_{p}^{a^{*}}(m\mid \ell)
r_0(\ell)
\Bigg].
\end{align*}

Recalling the definition of $\theta(P)$, this becomes
\begin{align*}
C
&=
\frac{I(S=0)}{\mathbb{P}(S=0)}
\Bigg[
\mathbb{E}_{M}
\left\{
S_{a,k}(\nu\mid L,M,R=0)
\mid L,A=a^{*},S=p,R=0
\right\}
-
\theta(P)
\Bigg].
\end{align*}

%%%%%%%%%%%%%%%%%%%%%%%%%%%%%%%%%%%%%%%%%%%%%%%%%%%%%%%%%%%%
\paragraph*{Contribution of the conditional survival distribution: $A. $}

We now consider the first term. For discrete event times
$t_1<t_2<\cdots$, define
\begin{align*}
\lambda_{a,k}(t_i\mid \ell,m,R=0)
&=
\mathbb{P}
\left(
\widetilde T=t_i, \delta=1
\mid
\widetilde T\geq t_i,
A=a,S=k,L=\ell,M=m,R=0
\right).
\end{align*}

Then
\begin{align*}
S_{a,k}(\nu\mid \ell,m,R=0)
&=
\prod_{t_i\leq\nu}
\left\{
1-\lambda_{a,k}(t_i\mid \ell,m,R=0)
\right\}.
\end{align*}

For a fixed $t_l$,
\begin{align*}
\frac{d}{d\varepsilon}
\left[
1-\lambda_{a,k,\varepsilon}
(t_l\mid \ell,m,R=0)
\right]_{\varepsilon=0}
&=
-
\dot\lambda_{a,k}(t_l\mid \ell,m,R=0).
\end{align*}

The influence-function contribution associated with
$\lambda_{a,k}(t_l\mid\ell,m,R=0)$ is
\begin{align*}
&\frac{
I(A=a,S=k,L=\ell,M=m,R=0,\widetilde T\geq t_l)
}{
\mathbb{P}
(A=a,S=k,L=\ell,M=m,R=0,\widetilde T\geq t_l)
}
\nonumber\\
&\qquad\times
\Big[
I(\widetilde T=t_l, \delta=1)
-
\lambda_{a,k}(t_l\mid\ell,m,R=0)
\Big].
\end{align*}

By the product rule,
\begin{align*}
\dot S_{a,k}(\nu\mid\ell,m,R=0)
&=
-
\sum_{t_l\leq\nu}
\prod_{\substack{t_i\leq\nu\\t_i\neq t_l}}
\left\{
1-\lambda_{a,k}(t_i\mid\ell,m,R=0)
\right\}
\nonumber\\
&\qquad\times
\frac{
I(A=a,S=k,L=\ell,M=m,R=0,\widetilde T\geq t_l)
}{
\mathbb{P}
(A=a,S=k,L=\ell,M=m,R=0,\widetilde T\geq t_l)
}
\nonumber\\
&\qquad\times
\Big[
I(\widetilde T=t_l, \delta=1)
-
\lambda_{a,k}(t_l\mid\ell,m,R=0)
\Big].
\end{align*}

Substituting this expression into $A$ gives
\begin{align*}
A
&=
-
\sum_{(\ell,m)\in\mathcal{L}\times\mathcal{M}}
\sum_{t_l\leq\nu}
\prod_{\substack{t_i\leq\nu\\t_i\neq t_l}}
\left\{
1-\lambda_{a,k}(t_i\mid\ell,m,R=0)
\right\}
\nonumber\\
&\quad\times
\frac{
I(A=a,S=k,L=\ell,M=m,R=0,\widetilde T\geq t_l)
}{
\mathbb{P}
(A=a,S=k,L=\ell,M=m,R=0,\widetilde T\geq t_l)
}
\nonumber\\
&\quad\times
\Big[
I(\widetilde T=t_l, \delta=1)
-
\lambda_{a,k}(t_l\mid\ell,m,R=0)
\Big]
\nonumber\\
&\quad\times
q_{p}^{a^{*}}(m\mid\ell)
r_0(\ell).
\end{align*}

After summing over $(\ell,m)$,
\begin{align*}
A
&=
-
\frac{1}{\mathbb{P}(S=0)}
\frac{
I(A=a,S=k,R=0)\,
\mathbb{P}(A=a^{*},S=p,R=0\mid L,M)\,
\mathbb{P}(S=0\mid L)
}{
\mathbb{P}(A=a,S=k,R=0\mid L,M)\,
\mathbb{P}(A=a^{*},S=p,R=0\mid L)
}
\nonumber\\
&\quad\times
\sum_{t_l\leq\nu}
\prod_{\substack{t_i\leq\nu\\t_i\neq t_l}}
\left\{
1-\lambda_{a,k}(t_i\mid L,M,R=0)
\right\}
\nonumber\\
&\quad\times
\frac{
I(\widetilde T\geq t_l)
}{
\mathbb{P}
(\widetilde T\geq t_l
\mid A=a,S=k,L,M,R=0)
}
\nonumber\\
&\quad\times
\Big[
I(\widetilde T=t_l, \delta=1)
-
\lambda_{a,k}(t_l\mid L,M,R=0)
\Big].
\end{align*}

Under conditional independent censoring,
\begin{align*}
&\mathbb{P}
(\widetilde T\geq t_l
\mid A=a,S=k,L,M,R=0)
\nonumber\\
&\qquad=
\mathbb{P}
(T\geq t_l, C\geq t_l
\mid A=a,S=k,L,M,R=0)
\\
&\qquad=
\mathbb{P}
(T\geq t_l
\mid A=a,S=k,L,M,R=0)
\,
\mathbb{P}
( C\geq t_l
\mid A=a,S=k,L,M,R=0)
\\
&\qquad=
S_{a,k}(t_{l-1}\mid L,M,R=0)
G_{a,k}(t_{l-1}\mid L,M,R=0)
\\
&\qquad=
\prod_{\alpha=1}^{l-1}
\left\{
1-\lambda_{a,k}(t_\alpha\mid L,M,R=0)
\right\}
\left\{
1-g_{a,k}(t_\alpha\mid L,M,R=0)
\right\}.
\end{align*}

Therefore,
\begin{align*}
&\frac{
\displaystyle
\prod_{\substack{t_i\leq\nu\\t_i\neq t_l}}
\left\{
1-\lambda_{a,k}(t_i\mid L,M,R=0)
\right\}
}{
\displaystyle
\mathbb{P}
(\widetilde T\geq t_l
\mid A=a,S=k,L,M,R=0)
}
\nonumber\\
&\qquad=
\frac{
\displaystyle
\prod_{t_{l+1}\leq t_i\leq\nu}
\left\{
1-\lambda_{a,k}(t_i\mid L,M,R=0)
\right\}
}{
\displaystyle
\prod_{t_i\leq t_{l-1}}
\left\{
1-g_{a,k}(t_i\mid L,M,R=0)
\right\}
}.
\end{align*}

We define
\begin{align*}
\mathcal{G}_{a,k}^{(t_l)}(\nu\mid L,M)
&=
\frac{
\displaystyle
\prod_{t_{l+1}\leq t_i\leq\nu}
\left\{
1-\lambda_{a,k}(t_i\mid L,M,R=0)
\right\}
}{
\displaystyle
\prod_{t_i\leq t_{l-1}}
\left\{
1-g_{a,k}(t_i\mid L,M,R=0)
\right\}
}.
\end{align*}

Hence,
\begin{align*}
A
&=
-
\frac{1}{\mathbb{P}(S=0)}
\frac{
I(A=a,S=k,R=0)\,
\mathbb{P}(A=a^{*},S=p,R=0\mid L,M)\,
\mathbb{P}(S=0\mid L)
}{
\mathbb{P}(A=a,S=k,R=0\mid L,M)\,
\mathbb{P}(A=a^{*},S=p,R=0\mid L)
}
\nonumber\\
&\quad\times
\sum_{t_l\leq\nu}
I(\widetilde T\geq t_l)
\mathcal{G}_{a,k}^{(t_l)}(\nu\mid L,M)
\nonumber\\
&\quad\times
\Big[
I(\widetilde T=t_l, \delta=1)
-
\lambda_{a,k}(t_l\mid L,M,R=0)
\Big].
\end{align*}

%%%%%%%%%%%%%%%%%%%%%%%%%%%%%%%%%%%%%%%%%%%%%%%%%%%%%%%%%%%%
\subsubsection*{Influence function}

Combining the three contributions, we obtain
\begin{align*}
\varphi(O)
&=
-
\frac{1}{\mathbb{P}(S=0)}
\frac{
I(A=a,S=k,R=0)\,
\mathbb{P}(A=a^{*},S=p,R=0\mid L,M)\,
\mathbb{P}(S=0\mid L)
}{
\mathbb{P}(A=a,S=k,R=0\mid L,M)\,
\mathbb{P}(A=a^{*},S=p,R=0\mid L)
}
\nonumber\\
&\quad\times
\sum_{t_l\leq\nu}
I(\widetilde T\geq t_l)
\mathcal{G}_{a,k}^{(t_l)}(\nu\mid L,M)
\Big[
I(\widetilde T=t_l, \delta=1)
-
\lambda_{a,k}(t_l\mid L,M,R=0)
\Big]
\nonumber\\[0.2cm]
&\quad+
\frac{1}{\mathbb{P}(S=0)}
\frac{
I(A=a^{*},S=p,R=0)
}{
\mathbb{P}(A=a^{*},S=p,R=0\mid L)
}
\mathbb{P}(S=0\mid L)
\nonumber\\
&\qquad\times
\Bigg[
S_{a,k}(\nu\mid L,M,R=0)
-
\mathbb{E}_{M}
\left\{
S_{a,k}(\nu\mid L,M,R=0)
\mid L,A=a^{*},S=p,R=0
\right\}
\Bigg]
\nonumber\\[0.2cm]
&\quad+
\frac{I(S=0)}{\mathbb{P}(S=0)}
\Bigg[
\mathbb{E}_{M}
\left\{
S_{a,k}(\nu\mid L,M,R=0)
\mid L,A=a^{*},S=p,R=0
\right\}
-
\theta(P)
\Bigg].
\end{align*}

\section{Multiply robustness}

We consider two main cases. First, we consider the case where 
$\hat{\lambda}_{k}^{a}$ is correctly specified. Second, we consider the case where 
$\hat{g}_{k}^{a}$ and $\widehat{\Omega}_{k,p}^{a,a^*}$ are correctly specified.

Throughout the proof, we use the shorthand notation
\[
I(a,k,R=0)=I(A=a,S=k,R=0),
\qquad
I(a^*,p,R=0)=I(A=a^*,S=p,R=0).
\]

Let
\[
S_0(\nu\mid L,M)
=
\mathbb{P}(T>\nu\mid a,L,M,k,R=0),
\]
and
\[
\lambda_0(t_l\mid L,M)
=
\mathbb{P}
\left(
T=t_l
\mid
a,L,M,k,R=0,T\geq t_l
\right).
\]

We also denote
$
b_0(L)
=
\mathbb{E}\left[
S_0(\nu\mid L,M)
\mid L,a^*,p,R=0
\right].
$

Moreover, $g_k^a$ denotes the censoring hazard among individuals for whom
the mediator is observed, such that
$
\prod_{t_j\leq t_{l-1}}
\left\{
1-g_{0,k}^{a}(t_j\mid L,M)
\right\}
=
\mathbb{P}
\left(
C>t_{l-1}
\mid a,L,M,k,R=0
\right).
$

The nuisance parameter associated with mediator missingness, denoted by
$\rho$, is absorbed into the definitions of
$\Omega_{k,p}^{a,a^*}$ and $W_p^a$. In particular, under correct
specification,
\[
\Omega_{0,k,p}^{a,a^*}(L,M)
=
\frac{
\mathbb{P}(A=a^*,S=p,R=0\mid L,M)
\mathbb{P}(S=0\mid L)
}{
\mathbb{P}(A=a,S=k,R=0\mid L,M)
\mathbb{P}(A=a^*,S=p,R=0\mid L)
},
\]
and
$
W_{0,p}^{a}(L)
=
\frac{
\mathbb{P}(S=0\mid L)
}{
\mathbb{P}(A=a^*,S=p,R=0\mid L)
}.
$

We also have
$
\frac{n_0}{n}
\xrightarrow{\mathbb{P}}
\mathbb{P}(S=0).
$

% ============================================================
% CASE 1 : LAMBDA CORRECTLY SPECIFIED
% ============================================================

\subsection*{Case 1: $\hat{\lambda}_{k}^{a}$ correctly specified}

Suppose first that $\hat{\lambda}_{k}^{a}$ is correctly specified,
while allowing the remaining nuisance parameters to be misspecified. Then
\[
\widehat{\lambda}_{k}^{a}(t_l\mid L,M,R=0)
=
\lambda_0(t_l\mid L,M)
+
o_{\mathbb{P}}(1),
\]
and consequently
\[
\widehat{S}_{k}^{a}(\nu\mid L,M,R=0)
=
S_0(\nu\mid L,M)
+
o_{\mathbb{P}}(1).
\]

Let the remaining nuisance parameters converge to arbitrary probability
limits. In particular,
\[
\widehat{W}_{p}^{a}(L)
=
\tilde{W}(L)
+
o_{\mathbb{P}}(1),
\]
and
\[
\widehat{b}_{k,p}^{a,a^*}(L)
=
\tilde{b}(L)
+
o_{\mathbb{P}}(1).
\]

Similarly, denote the probability limits of the nuisance parameters
appearing in the first augmentation term by
$\tilde{\omega}_1(L,M)$ and
$\tilde{\mathcal G}^{\,t_l}(L,M)$.

By the law of large numbers,
$
\widehat{\theta}_{k,p}^{a,a^*}
\xrightarrow{\mathbb{P}}
X+Y+
\frac{1}{\mathbb{P}(S=0)}
\mathbb{E}\left[
I(S=0)\tilde{b}(L)
\right],
$
where
\begin{align*}
X
&=
-\frac{1}{\mathbb{P}(S=0)}
\mathbb{E}\Biggl[
I(a,k,R=0)\tilde{\omega}_1(L,M)
\sum_{t_l\leq\nu}
\tilde{\mathcal G}^{\,t_l}(L,M)
I(\tilde{T}\geq t_l)
\\
&\hspace{4cm}\times
\Bigl(
I(\tilde{T}=t_l,\delta=1)
-
\lambda_0(t_l\mid L,M)
\Bigr)
\Biggr],
\end{align*}
and
\begin{align*}
Y
&=
\frac{1}{\mathbb{P}(S=0)}
\mathbb{E}\Biggl[
I(a^*,p,R=0)\tilde{W}(L)
\Bigl(
S_0(\nu\mid L,M)
-
\tilde{b}(L)
\Bigr)
\Biggr].
\end{align*}

We first consider $X$. By iterated expectations,
\begin{align*}
X
&=
-\frac{1}{\mathbb{P}(S=0)}
\mathbb{E}\Biggl[
\mathbb{E}\Biggl[
I(a,k,R=0)\tilde{\omega}_1(L,M)
\sum_{t_l\leq\nu}
\tilde{\mathcal G}^{\,t_l}(L,M)
I(\tilde{T}\geq t_l)
\\
&\hspace{3cm}\times
\Bigl(
I(\tilde{T}=t_l,\delta=1)
-
\lambda_0(t_l\mid L,M)
\Bigr)
\Bigm|L,M
\Biggr]
\Biggr]
\\
&=
-\frac{1}{\mathbb{P}(S=0)}
\sum_{t_l\leq\nu}
\mathbb{E}\Biggl[
\tilde{\omega}_1(L,M)
\mathbb{P}
\left(
a,k,R=0,\tilde{T}\geq t_l
\mid L,M
\right)
\tilde{\mathcal G}^{\,t_l}(L,M)
\\
&\hspace{2cm}\times
\mathbb{E}\Biggl[
I(\tilde{T}=t_l,\delta=1)
-
\lambda_0(t_l\mid L,M)
\Bigm|
a,k,R=0,L,M,\tilde{T}\geq t_l
\Biggr]
\Biggr].
\end{align*}

Since $\lambda_0$ is correctly specified,
\[
\mathbb{E}\left[
I(\tilde{T}=t_l,\delta=1)
\mid
a,k,R=0,L,M,\tilde{T}\geq t_l
\right]
=
\lambda_0(t_l\mid L,M),
\]
and hence
\[
X=0.
\]

We now consider $Y$. By iterated expectations,
\begin{align*}
Y
&=
\frac{1}{\mathbb{P}(S=0)}
\mathbb{E}\Biggl[
\mathbb{E}\Biggl[
I(a^*,p,R=0)\tilde{W}(L)
\Bigl(
S_0(\nu\mid L,M)
-
\tilde{b}(L)
\Bigr)
\Bigm|L
\Biggr]
\Biggr]
\\
&=
\frac{1}{\mathbb{P}(S=0)}
\mathbb{E}\Biggl[
\mathbb{P}(a^*,p,R=0\mid L)
\tilde{W}(L)
\\
&\hspace{2cm}\times
\mathbb{E}\Bigl[
S_0(\nu\mid L,M)
-
\tilde{b}(L)
\mid L,a^*,p,R=0
\Bigr]
\Biggr]
\\
&=
\frac{1}{\mathbb{P}(S=0)}
\mathbb{E}\Biggl[
\mathbb{P}(a^*,p,R=0\mid L)
\tilde{W}(L)
\Bigl(
b_0(L)-\tilde{b}(L)
\Bigr)
\Biggr].
\end{align*}

Moreover,
\[
\theta_{k,p}^{a,a^*}
=
\frac{1}{\mathbb{P}(S=0)}
\mathbb{E}\left[
\mathbb{P}(S=0\mid L)b_0(L)
\right].
\]

Therefore,
\begin{align*}
&
\widehat{\theta}_{k,p}^{a,a^*}
-
\theta_{k,p}^{a,a^*}
\\
&=
\frac{1}{\mathbb{P}(S=0)}
\mathbb{E}\Biggl[
\mathbb{P}(a^*,p,R=0\mid L)
\tilde{W}(L)
\Bigl(
b_0(L)-\tilde{b}(L)
\Bigr) + o_{\mathbb{P}}(1)
\\
&\hspace{2cm}
+
\mathbb{P}(S=0\mid L)\tilde{b}(L)
-
\mathbb{P}(S=0\mid L)b_0(L)
\Biggr] + o_{\mathbb{P}}(1)
\\
&=
\frac{1}{\mathbb{P}(S=0)}
\mathbb{E}\Biggl[
\left\{
\mathbb{P}(a^*,p,R=0\mid L)\tilde{W}(L)
-
\mathbb{P}(S=0\mid L)
\right\}
\\
&\hspace{3cm}\times
\left\{
b_0(L)-\tilde{b}(L)
\right\}
\Biggr] + o_{\mathbb{P}}(1).
\end{align*}

Thus, when $\hat{\lambda}_{k}^{a}$ is correctly specified, the bias is
zero if either $\tilde{b}(L)=b_0(L),$
that is, if $\widehat{b}_{k,p}^{a,a^*}$ is correctly specified, or if $\mathbb{P}(a^*,p,R=0\mid L)\tilde{W}(L)
=
\mathbb{P}(S=0\mid L).$

In particular, when $\widehat{W}_{p}^{a}$ is correctly specified,$
\tilde{W}(L)
=
\frac{
\mathbb{P}(S=0\mid L)
}{
\mathbb{P}(a^*,p,R=0\mid L)
},$
and therefore $\mathbb{P}(a^*,p,R=0\mid L)\tilde{W}(L)
=
\mathbb{P}(S=0\mid L).$

Hence, under correct specification of $\lambda_k^a$, $\widehat{\theta}_{k,p}^{a,a^*}
\xrightarrow{\mathbb{P}}
\theta_{k,p}^{a,a^*}$
if either $\widehat{b}_{k,p}^{a,a^*}$ or $\widehat{W}_{p}^{a}$ is
correctly specified. This gives the first two robustness conditions: $\{\lambda_k^a,b_{k,p}^{a,a^*}\}
\qquad\text{or}
\{\lambda_k^a,W_p^a\}.$

% ============================================================
% CASE 2 : g AND OMEGA CORRECTLY SPECIFIED
% ============================================================

\subsection*{Case 2: $\hat{g}_{k}^{a}$ and
$\widehat{\Omega}_{k,p}^{a,a^*}$ correctly specified}

Suppose now that $\hat{g}_{k}^{a}$ and
$\widehat{\Omega}_{k,p}^{a,a^*}$ are correctly specified, while allowing
the remaining nuisance parameters to be misspecified.

Then
\begin{align*}
\widehat{\Omega}_{k,p}^{a,a^*}(L,M)
&=
\frac{
\mathbb{P}(A=a^*,S=p,R=0\mid L,M)
\mathbb{P}(S=0\mid L)
}{
\mathbb{P}(A=a,S=k,R=0\mid L,M)
\mathbb{P}(A=a^*,S=p,R=0\mid L)
}
+
o_{\mathbb{P}}(1), \\
\widehat{g}_{k}^{a}(t_l\mid L,M)
&=
g_{0,k}^{a}(t_l\mid L,M)
+
o_{\mathbb{P}}(1).
\end{align*}

Let\begin{align*}
\widehat{\lambda}_{k}^{a}(t_l\mid L,M,R=0)
&=
\tilde{\lambda}(t_l\mid L,M)
+
o_{\mathbb{P}}(1),
\\
\widehat{S}_{k}^{a}(\nu\mid L,M,R=0)
&=
\tilde{S}(\nu\mid L,M)
+
o_{\mathbb{P}}(1),\\
\widehat{b}_{k,p}^{a,a^*}(L)
&=
\tilde{b}(L)
+
o_{\mathbb{P}}(1),
\\
\widehat{W}_{p}^{a}(L)
&=
\tilde{W}(L)
+
o_{\mathbb{P}}(1).
\end{align*}

We also have
$
\widehat{\mathcal{G}}_{k}^{a,(t_l)}
(\nu\mid L,M)
=
\frac{
\prod_{t_l<t_j\leq\nu}
\left\{
1-\tilde{\lambda}(t_j\mid L,M)
\right\}
}{
\mathbb{P}
\left(
C>t_{l-1}
\mid a,L,M,k,R=0
\right)
}
+
o_{\mathbb{P}}(1).
$

Let
$
\tilde{p}^{\,t_l}(L,M)
=
\prod_{t_l<t_j\leq\nu}
\left\{
1-\tilde{\lambda}(t_j\mid L,M)
\right\}.
$

By the law of large numbers,
$
\widehat{\theta}_{k,p}^{a,a^*}
\xrightarrow{\mathbb{P}}
\bar{X}+\bar{Y}
+
\frac{1}{\mathbb{P}(S=0)}
\mathbb{E}\left[
I(S=0)\tilde{b}(L)
\right],
$
where
\begin{align*}
\bar{X}
&=
-\frac{1}{\mathbb{P}(S=0)}
\mathbb{E}\Biggl[
I(a,k,R=0)
\frac{
\mathbb{P}(a^*,p,R=0\mid L,M)
\mathbb{P}(S=0\mid L)
}{
\mathbb{P}(a,k,R=0\mid L,M)
\mathbb{P}(a^*,p,R=0\mid L)
}
\\
&\hspace{1.5cm}\times
\sum_{t_l\leq\nu}
\frac{
I(\tilde{T}\geq t_l)
\tilde{p}^{\,t_l}(L,M)
}{
\mathbb{P}
\left(
C>t_{l-1}
\mid M,L,a,k,R=0
\right)
}
\\
&\hspace{2.5cm}\times
\Bigl(
I(\tilde{T}=t_l,\delta=1)
-
\tilde{\lambda}(t_l\mid L,M)
\Bigr)
\Biggr],
\end{align*}
and
\begin{align*}
\bar{Y}
&=
\frac{1}{\mathbb{P}(S=0)}
\mathbb{E}\Biggl[
I(a^*,p,R=0)\tilde{W}(L)
\Bigl(
\tilde{S}(\nu\mid L,M)
-
\tilde{b}(L)
\Bigr)
\Biggr].
\end{align*}

We first consider $\bar{X}$. We have
\begin{align*}
\bar{X}
&=
-\frac{1}{\mathbb{P}(S=0)}
\mathbb{E}\Biggl[
I(a,k,R=0)
\frac{
\mathbb{P}(a^*,p,R=0\mid L,M)
\mathbb{P}(S=0\mid L)
}{
\mathbb{P}(a,k,R=0\mid L,M)
\mathbb{P}(a^*,p,R=0\mid L)
}
\\
&\hspace{1.5cm}\times
\sum_{t_l\leq\nu}
\frac{
I(\tilde{T}\geq t_l)
\tilde{p}^{\,t_l}(L,M)
}{
\mathbb{P}
\left(
C>t_{l-1}
\mid M,L,a,k,R=0
\right)
}
\\
&\hspace{2.5cm}\times
\Bigl(
I(\tilde{T}=t_l,\delta=1)
-
\tilde{\lambda}(t_l\mid L,M)
\Bigr)
\Biggr]
\\[0.3cm]
&=
-\frac{1}{\mathbb{P}(S=0)}
\mathbb{E}\Biggl[
\frac{
\mathbb{P}(a^*,p,R=0\mid L,M)
\mathbb{P}(S=0\mid L)
}{
\mathbb{P}(a^*,p,R=0\mid L)
}
\\
&\hspace{1cm}\times
\mathbb{E}\Biggl[
\sum_{t_l\leq\nu}
\frac{
I(\tilde{T}\geq t_l)
\tilde{p}^{\,t_l}(L,M)
}{
\mathbb{P}
\left(
C>t_{l-1}
\mid M,L,a,k,R=0
\right)
}
\\
&\hspace{2cm}\times
\Bigl(
I(\tilde{T}=t_l,\delta=1)
-
\tilde{\lambda}(t_l\mid L,M)
\Bigr)
\Bigm|
a,k,R=0,L,M
\Biggr]
\Biggr]
\\[0.3cm]
&=
-\frac{1}{\mathbb{P}(S=0)}
\mathbb{E}\Biggl[
\mathbb{P}(S=0\mid L)
\mathbb{E}\Biggl[
\mathbb{E}\Biggl[
\sum_{t_l\leq\nu}
\frac{
I(\tilde{T}\geq t_l)
\tilde{p}^{\,t_l}(L,M)
}{
\mathbb{P}
\left(
C>t_{l-1}
\mid M,L,a,k,R=0
\right)
}
\\
&\hspace{2cm}\times
\Bigl(
I(\tilde{T}=t_l,\delta=1)
-
\tilde{\lambda}(t_l\mid L,M)
\Bigr)
\Bigm|
a,k,R=0,L,M
\Biggr]
\Bigm|
L,a^*,p,R=0
\Biggr]
\Biggr].
\end{align*}

For fixed $(L,M)$, consider
\begin{align*}
&\mathbb{E}\Biggl[
\sum_{t_l\leq\nu}
\frac{
I(\tilde{T}\geq t_l)
\tilde{p}^{\,t_l}(L,M)
}{
\mathbb{P}
\left(
C>t_{l-1}
\mid M,L,a,k,R=0
\right)
}
\\
&\hspace{2cm}\times
\Bigl(
I(\tilde{T}=t_l,\delta=1)
-
\tilde{\lambda}(t_l\mid L,M)
\Bigr)
\Bigm|
a,k,R=0,L,M
\Biggr]
\\
&=
\sum_{t_l\leq\nu}
\frac{
\mathbb{P}
\left(
\tilde{T}\geq t_l
\mid a,k,R=0,L,M
\right)
}{
\mathbb{P}
\left(
C>t_{l-1}
\mid a,k,R=0,L,M
\right)
}
\tilde{p}^{\,t_l}(L,M)
\\
&\hspace{1cm}\times
\mathbb{E}\Biggl[
I(\tilde{T}=t_l,\delta=1)
-
\tilde{\lambda}(t_l\mid L,M)
\Bigm|
a,k,R=0,L,M,\tilde{T}\geq t_l
\Biggr]
\\
&=
\sum_{t_l\leq\nu}
\frac{
\mathbb{P}
\left(
\tilde{T}\geq t_l
\mid a,k,R=0,L,M
\right)
}{
\mathbb{P}
\left(
C>t_{l-1}
\mid a,k,R=0,L,M
\right)
}
\tilde{p}^{\,t_l}(L,M)
\\
&\hspace{1cm}\times
\Bigl[
\lambda_0(t_l\mid L,M)
-
\tilde{\lambda}(t_l\mid L,M)
\Bigr].
\end{align*}

Under conditional independent censoring,
\[
\frac{
\mathbb{P}
\left(
\tilde{T}\geq t_l
\mid a,k,R=0,L,M
\right)
}{
\mathbb{P}
\left(
C>t_{l-1}
\mid a,k,R=0,L,M
\right)
}
=
\mathbb{P}
\left(
T>t_{l-1}
\mid a,k,R=0,L,M
\right).
\]

Therefore,
\begin{align*}
&\mathbb{E}\Biggl[
\sum_{t_l\leq\nu}
\frac{
I(\tilde{T}\geq t_l)
\tilde{p}^{\,t_l}(L,M)
}{
\mathbb{P}
\left(
C>t_{l-1}
\mid M,L,a,k,R=0
\right)
}
\\
&\hspace{2cm}\times
\Bigl(
I(\tilde{T}=t_l,\delta=1)
-
\tilde{\lambda}(t_l\mid L,M)
\Bigr)
\Bigm|
a,k,R=0,L,M
\Biggr]
\\
&=
\sum_{t_l\leq\nu}
\mathbb{P}
\left(
T>t_{l-1}
\mid a,k,R=0,L,M
\right)
\tilde{p}^{\,t_l}(L,M)
\\
&\hspace{1cm}\times
\Bigl[
\lambda_0(t_l\mid L,M)
-
\tilde{\lambda}(t_l\mid L,M)
\Bigr].
\end{align*}

Now, noting that
\[
\mathbb{P}
\left(
T>t_{l-1}
\mid a,k,R=0,L,M
\right)
=
\prod_{t_j<t_l}
\left\{
1-\lambda_0(t_j\mid L,M)
\right\},
\]
and
\[
\tilde{p}^{\,t_l}(L,M)
=
\prod_{t_l<t_j\leq\nu}
\left\{
1-\tilde{\lambda}(t_j\mid L,M)
\right\},
\]
we obtain
\begin{align*}
&\sum_{t_l\leq\nu}
\mathbb{P}
\left(
T>t_{l-1}
\mid a,k,R=0,L,M
\right)
\tilde{p}^{\,t_l}(L,M)
\Bigl[
\lambda_0(t_l\mid L,M)
-
\tilde{\lambda}(t_l\mid L,M)
\Bigr]
\\
&=
\sum_{t_l\leq\nu}
\left[
\prod_{t_j<t_l}
\left\{
1-\lambda_0(t_j\mid L,M)
\right\}
\right]
\Bigl[
\lambda_0(t_l\mid L,M)
-
\tilde{\lambda}(t_l\mid L,M)
\Bigr]
\\
&\hspace{2cm}\times
\left[
\prod_{t_l<t_j\leq\nu}
\left\{
1-\tilde{\lambda}(t_j\mid L,M)
\right\}
\right]
\\
&=
\prod_{t_j\leq\nu}
\left\{
1-\tilde{\lambda}(t_j\mid L,M)
\right\}
-
\prod_{t_j\leq\nu}
\left\{
1-\lambda_0(t_j\mid L,M)
\right\}
\\
&=
\tilde{S}(\nu\mid L,M)
-
S_0(\nu\mid L,M),
\end{align*}
where the last equality follows from the telescoping product identity.

Recalling the minus sign in front of the first augmentation term, we
obtain
\begin{align*}
\bar{X}
&=
\frac{1}{\mathbb{P}(S=0)}
\mathbb{E}\Biggl[
\mathbb{P}(S=0\mid L)
\mathbb{E}\Bigl[
S_0(\nu\mid L,M)
-
\tilde{S}(\nu\mid L,M)
\mid L,a^*,p,R=0
\Bigr]
\Biggr].
\end{align*}

We now consider $\bar{Y}$:
\begin{align*}
\bar{Y}
&=
\frac{1}{\mathbb{P}(S=0)}
\mathbb{E}\Biggl[
I(a^*,p,R=0)\tilde{W}(L)
\Bigl(
\tilde{S}(\nu\mid L,M)
-
\tilde{b}(L)
\Bigr)
\Biggr]
\\
&=
\frac{1}{\mathbb{P}(S=0)}
\mathbb{E}\Biggl[
\mathbb{P}(a^*,p,R=0\mid L)
\tilde{W}(L)
\\
&\hspace{2cm}\times
\mathbb{E}\Bigl[
\tilde{S}(\nu\mid L,M)
-
\tilde{b}(L)
\mid L,a^*,p,R=0
\Bigr]
\Biggr]
\\
&=
\frac{1}{\mathbb{P}(S=0)}
\mathbb{E}\Biggl[
\mathbb{P}(a^*,p,R=0\mid L)
\tilde{W}(L)
\\
&\hspace{2cm}\times
\left\{
\mathbb{E}\Bigl[
\tilde{S}(\nu\mid L,M)
\mid L,a^*,p,R=0
\Bigr]
-
\tilde{b}(L)
\right\}
\Biggr].
\end{align*}

Moreover,
$
\theta_{k,p}^{a,a^*}
=
\frac{1}{\mathbb{P}(S=0)}
\mathbb{E}\Biggl[
\mathbb{P}(S=0\mid L)
\mathbb{E}\Bigl[
S_0(\nu\mid L,M)
\mid L,a^*,p,R=0
\Bigr]
\Biggr].$

Therefore,
\begin{align*}
&
\widehat{\theta}_{k,p}^{a,a^*}
-
\theta_{k,p}^{a,a^*}
\\
&=
\frac{1}{\mathbb{P}(S=0)}
\mathbb{E}\Biggl[
\mathbb{P}(S=0\mid L)
\mathbb{E}\Bigl[
S_0(\nu\mid L,M)
-
\tilde{S}(\nu\mid L,M)
\mid L,a^*,p,R=0
\Bigr] + o_{\mathbb{P}}(1)
\\
&\qquad+
\mathbb{P}(a^*,p,R=0\mid L)
\tilde{W}(L)
\left\{
\mathbb{E}\Bigl[
\tilde{S}(\nu\mid L,M)
\mid L,a^*,p,R=0
\Bigr]
-
\tilde{b}(L)
\right\} + o_{\mathbb{P}}(1)
\\
&\qquad+
\mathbb{P}(S=0\mid L)\tilde{b}(L)
-
\mathbb{P}(S=0\mid L)
\mathbb{E}\Bigl[
S_0(\nu\mid L,M)
\mid L,a^*,p,R=0
\Bigr]
\Biggr] + o_{\mathbb{P}}(1).
\end{align*}

The terms involving $S_0$ cancel, and thus
\begin{align*}
&
\widehat{\theta}_{k,p}^{a,a^*}
-
\theta_{k,p}^{a,a^*}
\\
&=
\frac{1}{\mathbb{P}(S=0)}
\mathbb{E}\Biggl[
-\mathbb{P}(S=0\mid L)
\mathbb{E}\Bigl[
\tilde{S}(\nu\mid L,M)
\mid L,a^*,p,R=0
\Bigr] + o_{\mathbb{P}}(1)
\\
&\qquad+
\mathbb{P}(a^*,p,R=0\mid L)
\tilde{W}(L)
\mathbb{E}\Bigl[
\tilde{S}(\nu\mid L,M)
\mid L,a^*,p,R=0
\Bigr]
\\
&\qquad-
\mathbb{P}(a^*,p,R=0\mid L)
\tilde{W}(L)\tilde{b}(L)
+
\mathbb{P}(S=0\mid L)\tilde{b}(L)
\Biggr]
\\
&=
\frac{1}{\mathbb{P}(S=0)}
\mathbb{E}\Biggl[
\left\{
\mathbb{P}(a^*,p,R=0\mid L)\tilde{W}(L)
-
\mathbb{P}(S=0\mid L)
\right\}
\\
&\hspace{2cm}\times
\left\{
\mathbb{E}\Bigl[
\tilde{S}(\nu\mid L,M)
\mid L,a^*,p,R=0
\Bigr]
-
\tilde{b}(L)
\right\}
\Biggr].
\end{align*}

Therefore, when $\hat{g}_{k}^{a}$ and
$\widehat{\Omega}_{k,p}^{a,a^*}$ are correctly specified, the bias is
zero if either $\mathbb{P}(a^*,p,R=0\mid L)\tilde{W}(L)
=
\mathbb{P}(S=0\mid L),$
which holds when $\widehat{W}_{p}^{a}$ is correctly specified, or if $\tilde{b}(L)
=
\mathbb{E}\left[
\tilde{S}(\nu\mid L,M)
\mid L,a^*,p,R=0
\right].$

Hence, $\widehat{\theta}_{k,p}^{a,a^*}
\xrightarrow{\mathbb{P}}
\theta_{k,p}^{a,a^*}$
if either $\{g_k^a,\Omega_{k,p}^{a,a^*},W_p^a\}
$
are correctly specified, or if $
\{g_k^a,\Omega_{k,p}^{a,a^*}\}$
are correctly specified and
$\tilde{b}(L)
=
\mathbb{E}\left[
\tilde{S}(\nu\mid L,M)
\mid L,a^*,p,R=0
\right].$

% ============================================================
% MULTIPLY ROBUSTNESS RESULT
% ============================================================

\subsection*{Multiply robustness result}

Combining the two previous cases, the estimator
$\widehat{\theta}_{k,p}^{a,a^*}$ is consistent for
$\theta_{k,p}^{a,a^*}$ under any of the following conditions:
\[
\begin{array}{ll}
\textnormal{(i)}
&
\hat{\lambda}_{k}^{a}
\textnormal{ and }
\widehat{b}_{k,p}^{a,a^*}
\textnormal{ are correctly specified},
\\[0.2cm]
\textnormal{(ii)}
&
\hat{\lambda}_{k}^{a}
\textnormal{ and }
\widehat{W}_{p}^{a}
\textnormal{ are correctly specified},
\\[0.2cm]
\textnormal{(iii)}
&
\hat{g}_{k}^{a},
\widehat{\Omega}_{k,p}^{a,a^*}
\textnormal{ and }
\widehat{W}_{p}^{a}
\textnormal{ are correctly specified},
\\[0.2cm]
\textnormal{(iv)}
&
\hat{g}_{k}^{a}
\textnormal{ and }
\widehat{\Omega}_{k,p}^{a,a^*}
\textnormal{ are correctly specified, and}
\\
&
\displaystyle
\tilde{b}(L)
=
\mathbb{E}\left[
\tilde{S}(\nu\mid L,M)
\mid L,a^*,p,R=0
\right].
\end{array}
\]

Since the mediator-missingness nuisance parameter $\rho$ is absorbed
into $\Omega_{k,p}^{a,a^*}$ and $W_p^a$, no additional robustness
condition involving $\rho$ needs to be stated separately. Correct
specification of $\Omega_{k,p}^{a,a^*}$ or $W_p^a$ includes correct
specification of the corresponding mediator-observation mechanism. Thus, under any of the four sets of conditions above $\widehat{\theta}_{k,p}^{a,a^*}
\xrightarrow{\mathbb{P}}
\theta_{k,p}^{a,a^*}.$

\paragraph*{Reminder term}

\begin{align*}
\Omega_{k,p}^{a,a^*}(L,M)
&=
\frac{
V_p(L,M)\tau_p^{a^*}(L,M)U_0(L)
}{
V_k(L,M)\tau_k^a(L,M)
U_p(L)\pi_p^{a^*}(L)\rho_p^{a^*}(L)
},
&
W_p^{a^*}(L)
&=
\frac{
U_0(L)
}{
U_p(L)\pi_p^{a^*}(L)\rho_p^{a^*}(L)
}.
\end{align*}

Let
\begin{align*}
S_{k}^{a}(\nu\mid L,M)
&=
\prod_{t_j\leq\nu}
\left\{1-\lambda_k^a(t_j\mid L,M)\right\},
&
G_k^a(t_{l-1}\mid L,M)
&=
\prod_{t_j<t_l}
\left\{1-g_k^a(t_j\mid L,M)\right\},
\\
p_{\lambda,k}^{a,t_l}(L,M)
&=
\prod_{t_l<t_j\leq\nu}
\left\{1-\lambda_k^a(t_j\mid L,M)\right\},
&
\bar b_{k,p}^{a,a^*}(L)
&=
\mathbb E_0
\left[
S_k^a(\nu\mid L,M)
\mid a^*,p,L
\right].
\end{align*}

The population bias of the estimating equation can be written as
\begin{align*}
R_2(\eta,\eta_0)
={}&
-\frac{1}{\mathbb P_0(S=0)}
\mathbb E_0\Bigg[
I(a,k,R=0)\Omega_{k,p}^{a,a^*}(L,M)
\sum_{t_l\leq\nu}
\frac{
I(\widetilde T\geq t_l)
p_{\lambda,k}^{a,t_l}(L,M)
}{
G_k^a(t_{l-1}\mid L,M)
}
\\[-0.1cm]
&\hspace{4.5cm}\times
\left\{
I(\widetilde T=t_l, \delta=1)
-\lambda_k^a(t_l\mid L,M)
\right\}
\Bigg]
\\
&+
\frac{1}{\mathbb P_0(S=0)}
\mathbb E_0\left[
I(a^*,p,R=0)W_p^{a^*}(L)
\left\{
S_k^a(\nu\mid L,M)
-b_{k,p}^{a,a^*}(L)
\right\}
\right]
\\
&+
\frac{1}{\mathbb P_0(S=0)}
\mathbb E_0\left[
U_0(L)b_{k,p}^{a,a^*}(L)
\right]
-\theta_0 .
\end{align*}

For the first augmentation term, iterated expectations and conditional
independent censoring give
\begin{align*}
X(\eta)
={}&
\frac{1}{\mathbb P_0(S=0)}
\mathbb E_0\Bigg[
\mathbb P_0(a,k,R=0\mid L,M)
\Omega_{k,p}^{a,a^*}(L,M)
\\[-0.1cm]
&\qquad\times
\sum_{t_l\leq\nu}
S_{0,k}^{a}(t_{l-1}\mid L,M)
p_{\lambda,k}^{a,t_l}(L,M)
\frac{
G_{0,k}^{a}(t_{l-1}\mid L,M)
}{
G_k^{a}(t_{l-1}\mid L,M)
}
\\[-0.1cm]
&\qquad\times
\left\{
\lambda_k^a(t_l\mid L,M)
-\lambda_{0,k}^a(t_l\mid L,M)
\right\}
\Bigg].
\end{align*}

Using
\begin{align*}
\frac{G_{0,k}^{a}(t_{l-1})}{G_k^{a}(t_{l-1})}
&=
1+
\left\{
\frac{G_{0,k}^{a}(t_{l-1})}{G_k^{a}(t_{l-1})}-1
\right\},
\\
\sum_{t_l\leq\nu}
S_{0,k}^{a}(t_{l-1})
p_{\lambda,k}^{a,t_l}
\left\{
\lambda_k^a(t_l)-\lambda_{0,k}^a(t_l)
\right\}
&=
S_{0,k}^{a}(\nu)-S_k^{a}(\nu),
\end{align*}
we obtain
\begin{align*}
X(\eta)
={}&
\frac{1}{\mathbb P_0(S=0)}
\mathbb E_0\left[
\mathbb P_0(a,k,R=0\mid L,M)
\Omega_{k,p}^{a,a^*}
\left\{
S_{0,k}^{a}(\nu)-S_k^{a}(\nu)
\right\}
\right]
\\
&+
R_{g,\lambda},
\end{align*}
where
\begin{align*}
R_{g,\lambda}
={}&
\frac{1}{\mathbb P_0(S=0)}
\mathbb E_0\Bigg[
\mathbb P_0(a,k,R=0\mid L,M)
\Omega_{k,p}^{a,a^*}
\\
&\qquad\times
\sum_{t_l\leq\nu}
S_{0,k}^{a}(t_{l-1})
p_{\lambda,k}^{a,t_l}
\left\{
\frac{G_{0,k}^{a}(t_{l-1})}
     {G_k^{a}(t_{l-1})}
-1
\right\}
\left\{
\lambda_k^a(t_l)-\lambda_{0,k}^a(t_l)
\right\}
\Bigg].
\end{align*}

Adding and subtracting $\Omega_{0,k,p}^{a,a^*}$ gives
\begin{align*}
X(\eta)
={}&
\frac{1}{\mathbb P_0(S=0)}
\mathbb E_0\left[
U_0(L)
\left\{
b_{0,k,p}^{a,a^*}(L)
-\bar b_{k,p}^{a,a^*}(L)
\right\}
\right]
\\
&-
\frac{1}{\mathbb P_0(S=0)}
\mathbb E_0\left[
\mathbb P_0(a,k,R=0\mid L,M)
\left\{
\Omega_{k,p}^{a,a^*}
-\Omega_{0,k,p}^{a,a^*}
\right\}
\right.
\\[-0.1cm]
&\hspace{5cm}\left.
\times
\left\{
S_k^{a}(\nu)-S_{0,k}^{a}(\nu)
\right\}
\right]
+
R_{g,\lambda}.
\end{align*}

Moreover,
\begin{align*}
Y(\eta)
&=
\frac{1}{\mathbb P_0(S=0)}
\mathbb E_0\left[
U_{0,p}(L)\pi_{0,p}^{a^*}(L)\rho_{0,p}^{a^*}(L)
W_p^{a^*}(L)
\left\{
\bar b_{k,p}^{a,a^*}(L)
-b_{k,p}^{a,a^*}(L)
\right\}
\right],
\\
Z(\eta)-\theta_0
&=
\frac{1}{\mathbb P_0(S=0)}
\mathbb E_0\left[
U_0(L)
\left\{
b_{k,p}^{a,a^*}(L)
-b_{0,k,p}^{a,a^*}(L)
\right\}
\right].
\end{align*}

Since
\begin{align*}
U_0(L)
&=
U_{0,p}(L)\pi_{0,p}^{a^*}(L)\rho_{0,p}^{a^*}(L)
W_{0,p}^{a^*}(L),
\end{align*}
the terms involving $b_{0,k,p}^{a,a^*}$ cancel and
\begin{align*}
&
U_0\{b_0-\bar b\}
+
U_{0,p}\pi_{0,p}^{a^*}\rho_{0,p}^{a^*}
W_p^{a^*}\{\bar b-b\}
+
U_0\{b-b_0\}
\\
&\qquad=
U_{0,p}\pi_{0,p}^{a^*}\rho_{0,p}^{a^*}
\left\{
W_p^{a^*}-W_{0,p}^{a^*}
\right\}
\left\{
\bar b_{k,p}^{a,a^*}-b_{k,p}^{a,a^*}
\right\}.
\end{align*}

Hence,
\begin{align*}
R_2(\eta,\eta_0)
={}&
\frac{1}{\mathbb P_0(S=0)}
\mathbb E_0\Big[
U_{0,p}\pi_{0,p}^{a^*}\rho_{0,p}^{a^*}
\left(W_p^{a^*}-W_{0,p}^{a^*}\right)
\left(\bar b_{k,p}^{a,a^*}-b_{k,p}^{a,a^*}\right)
\Big]
\\
&-
\frac{1}{\mathbb P_0(S=0)}
\mathbb E_0\Big[
\mathbb P_0(a,k,R=0\mid L,M)
\left(\Omega_{k,p}^{a,a^*}
-\Omega_{0,k,p}^{a,a^*}\right)
\left(S_k^a(\nu)-S_{0,k}^a(\nu)\right)
\Big]
\\
&+
R_{g,\lambda}.
\end{align*}

Finally,
\begin{align*}
S_k^a(\nu)-S_{0,k}^a(\nu)
&=
-\sum_{t_l\leq\nu}
S_{0,k}^a(t_{l-1})
p_{\lambda,k}^{a,t_l}
\left\{
\lambda_k^a(t_l)-\lambda_{0,k}^a(t_l)
\right\},
\\
\frac{G_{0,k}^a(t_{l-1})}{G_k^a(t_{l-1})}-1
&=
\frac{
G_{0,k}^a(t_{l-1})-G_k^a(t_{l-1})
}{
G_k^a(t_{l-1})
},
\\
G_{0,k}^a(t_{l-1})-G_k^a(t_{l-1})
&=
\sum_{t_j<t_l}
\left[
\prod_{t_r<t_j}\{1-g_{0,k}^a(t_r)\}
\right]
\left\{
g_k^a(t_j)-g_{0,k}^a(t_j)
\right\}
\left[
\prod_{t_j<t_r<t_l}\{1-g_k^a(t_r)\}
\right].
\end{align*}

Therefore, up to bounded weights,

\begin{align*}
R_2(\eta,\eta_0)
={}&
\frac{1}{\mathbb P_0(S=0)}
\mathbb E_0\Big[
U_{0,p}(L)\pi_{0,p}^{a^*}(L)\rho_{0,p}^{a^*}(L)
\\
&\hspace{1.7cm}\times
\{W_p^a(L)-W_{0,p}^a(L)\}
\{b_{\lambda,k,p}^{a,a^*}(L)-b_{k,p}^{a,a^*}(L)\}
\Big]
\\
&+
\frac{1}{\mathbb P_0(S=0)}
\mathbb E_0\Big[
\mathbb P_0(a,k,R=0\mid L,M)
\\
&\hspace{1.7cm}\times
\{\Omega_{k,p}^{a,a^*}(L,M)
-\Omega_{0,k,p}^{a,a^*}(L,M)\}
\\
&\hspace{1.7cm}\times
\{S_{0,k}^a(\nu\mid L,M)
-S_k^a(\nu\mid L,M)\}
\Big]
+
R_{g,\lambda}.
\end{align*}

\begin{align*}
R_{g,\lambda}
={}&
\frac{1}{\mathbb P_0(S=0)}
\mathbb E_0\Bigg[
\mathbb P_0(a,k,R=0\mid L,M)
\Omega_{k,p}^{a,a^*}(L,M)
\\
&\qquad\times
\sum_{t_l\leq\nu}
S_{0,k}^a(t_{l-1}\mid L,M)
\prod_{t_l<t_j\leq\nu}
\{1-\lambda_k^a(t_j\mid L,M)\}
\\
&\qquad\times
\left\{
\frac{G_{0,k}^a(t_{l-1}\mid L,M)}
     {G_k^a(t_{l-1}\mid L,M)}
-1
\right\}
\{\lambda_k^a(t_l\mid L,M)
-\lambda_{0,k}^a(t_l\mid L,M)\}
\Bigg].
\end{align*}

\paragraph*{Double robustness of the TMLE.}
The same robustness property holds for the TMLE. By construction, the targeting
step solves the empirical efficient influence function equation,
\[
P_n \varphi\!\left(\widehat\eta^{\,*},
\widehat\theta_{\mathrm{TMLE}}\right)
=o_{\mathbb P}(1).
\]
Under the usual regularity conditions,
\[
(P_n-P_0)
\varphi\!\left(\widehat\eta^{\,*},
\widehat\theta_{\mathrm{TMLE}}\right)
=o_{\mathbb P}(1),
\]
so that any probability limit
$(\eta^{*,\dagger},\theta^\dagger)$ of the targeted estimator satisfies
\[
P_0\varphi(\eta^{*,\dagger},\theta^\dagger)=0.
\]
The population-bias decomposition derived above shows that, under any of the
robustness conditions stated above,
\[
P_0\varphi(\eta^{*,\dagger},\theta_0)=0.
\]
Since the estimating equation is linear in $\theta$ through the term
$-\theta$, $\theta_0$ is the unique solution of the limiting estimating
equation. Therefore,
\[
\widehat\theta_{\mathrm{TMLE}}
\xrightarrow{\mathbb P}
\theta_0.
\]
Hence, the TMLE inherits the same double (or multiple) robustness property as
the corresponding efficient-influence-function estimating equation.

By the Cauchy--Schwarz inequality and the positivity assumptions,
\begin{align*}
\left\|R_2(\widehat\eta,\eta_0)\right\|_{2}
\leq{}&
\eta_1
\left\|
\widehat W-W_0
\right\|_{2}
\left\|
\widehat{\bar b}-\widehat b
\right\|_{2}
\\
&+
\eta_2
\left\|
\widehat\Omega-\Omega_0
\right\|_{2}
\left\|
\widehat S-S_0
\right\|_{2}
\\
&+
\sum_{t_l\leq \nu}
\eta_{3,l} \left\|\frac{\prod_{t_l<t_j\leq\nu}
  \{1- \hat \lambda_k^a(t_j\mid L,M)\}}{\widehat G(t_{l-1})} \times
\{  \widehat G(t_{l-1})-G_0(t_{l-1})\}
\right\|_{2}
\left\|
\widehat\lambda(t_l)-\lambda_0(t_l)
\right\|_{2},
\end{align*}
for some finite constants $\eta_1,\eta_2,$ and $\eta_{3,l}$.
 If  $\frac{\prod_{t_l<t_j\leq\nu}
  \{1- \hat \lambda_k^a(t_j\mid L,M)\}}{\widehat G(t_{l-1})} $ is bounded such that $$\left \| \frac{\prod_{t_l<t_j\leq\nu}
  \{1- \hat \lambda_k^a(t_j\mid L,M)\}}{\widehat G(t_{l-1})} \right \|_{\infty} \leq \eta_{4, l} < \infty$$ with $\left \| X \right\|_{\infty} = \inf \left\{ M \geq 0, \mathbb{P}(|X| <C) =1\right\}$

Then

\begin{align*}
\left\|R_2(\widehat\eta,\eta_0)\right\|_{2}
\leq{}&
\eta_1
\left\|
\widehat W-W_0
\right\|_{2}
\left\|
\widehat{\bar b}-\widehat b
\right\|_{2}
\\
&+
\eta_2
\left\|
\widehat\Omega-\Omega_0
\right\|_{2}
\left\|
\widehat S-S_0
\right\|_{2}
\\
&+
\sum_{t_l\leq \nu}
\tilde{\eta}_{3,l} \left\|
 \widehat G(t_{l-1})-G_0(t_{l-1})
\right\|_{2}
\left\|
\widehat\lambda(t_l)-\lambda_0(t_l)
\right\|_{2},
\end{align*}

Moreover, the conditions involving \(S\) and \(G\) may be expressed directly in terms of the corresponding hazards. Indeed, by the product-difference identities,

$$
\|\widehat S(\nu)-S_0(\nu)\|_2
\le
\sum_{t_l\le\nu}
\left\|
S_0(t_{l-1})
\prod_{t_l<t_j\le\nu}(1-\widehat\lambda(t_j))
\{\widehat\lambda(t_l)-\lambda_0(t_l)\}
\right\|_2,
$$

and

$$
\|\widehat G(t_{l-1})-G_0(t_{l-1})\|_2
\le
\sum_{t_j<t_l}
\left\|
\prod_{t_r<t_j}(1-g_0(t_r))
\prod_{t_j<t_r<t_l}(1-\widehat g(t_r))
\{\widehat g(t_j)-g_0(t_j)\}
\right\|_2.
$$

If $\widehat g$ and $\widehat\lambda$ take values in $[0,1]$, then all
product weights appearing above are bounded by one. Consequently,
\begin{align*}
\left\|\widehat S(\nu)-S_0(\nu)\right\|_2
&\le
\sum_{t_l\le\nu}
\left\|
\widehat\lambda(t_l)-\lambda_0(t_l)
\right\|_2,
\\
\left\|\widehat G(t_{l-1})-G_0(t_{l-1})\right\|_2
&\le
\sum_{t_j<t_l}
\left\|
\widehat g(t_j)-g_0(t_j)
\right\|_2.
\end{align*}
Hence, the sufficient conditions above may be stated directly in terms
of the estimation errors of $g$ and $\lambda$. 

So under uniformly bounded  $\frac{\prod_{t_l<t_j\leq\nu}
  \{1- \hat \lambda_k^a(t_j\mid L,M)\}}{\widehat G(t_{l-1})} $, $n^{-1/4}$ convergence rate for $ \lambda_{k}^{a}, g_{k}^{a}, \Omega_{k,p}^{a, a^*},  W_{k}^{a}, b_{k,p}^{a,a^*}$ is sufficient for the convenient $n^{-1/2}$ convergence rate for the remainder term.

\begin{suppassumption}[Strong positivity]
There exists a constant $c>0$ such that 
\begin{align*}
\mathbb P(A=a\mid L,S)\ge c,
&\qquad
\mathbb P(A=a\mid M,L,S)\ge c, \\
\mathbb P(S=s\mid L)\ge c,
&\qquad
\mathbb P(S=s\mid L,M)\ge c,\\
\mathbb P(R=0\mid A=a,L=l,S=s)\ge c,
&\qquad  1-c \ge \mathbb P(C > t \mid A=a,L=l,S=s)\ge c,
\end{align*}
for all appropriate $(s,m,l,a),$and $t \le\nu.$
\end{suppassumption}

\begin{proof}[Corollary~1]
Under Assumption~A, let $0<\epsilon<c$. We define the truncated
versions of the nuisance estimators by
$\widehat g_k^a
=
\max\left\{
\epsilon,
\min\left(1-\epsilon,\widehat g_k^{a,\mathrm{raw}}\right)
\right\},$
and $\widehat\lambda_k^a
=
\max\left\{0,
\min\left(1,\widehat\lambda_k^{a,\mathrm{raw}}\right)
\right\}.$

Since $\widehat g_k^a\le 1-\epsilon$, we have $1-\widehat g_k^a(t_j\mid L,M)\ge \epsilon.$
Hence, if
$\widehat G_k^a(t_{l-1}\mid L,M)
=
\prod_{t_j\le t_{l-1}}
\left\{
1-\widehat g_k^a(t_j\mid L,M)
\right\},$
then
$\widehat G_k^a(t_{l-1}\mid L,M)
\ge
\epsilon^{N_l},$
where $N_l$ denotes the number of time points entering the product.
Moreover, since $\widehat\lambda_k^a\in[0,1]$,
$
0\le
\prod_{t_l<t_j\le\nu}
\left\{
1-\widehat\lambda_k^a(t_j\mid L,M)
\right\}
\le 1.
$
Therefore,
$
\left\|
\frac{
\prod_{t_l<t_j\le\nu}
\{1-\widehat\lambda_k^a(t_j\mid L,M)\}
}{
\widehat G_k^a(t_{l-1}\mid L,M)
}
\right\|_\infty
\le
\epsilon^{-N_l},
$
and consequently
$
\left\|
\frac{  
\prod_{t_l<t_j\le\nu}
\{1-\widehat\lambda_k^a(t_j\mid L,M)\}
}{
\widehat G_k^a(t_{l-1}\mid L,M)
}
\right\|_2
=
O(1).
$
Moreover,
\begin{align*}
\left\|
\widehat{\bar b}-\widehat b
\right\|_2
&\leq
\left\|
\widehat{\bar b}-b_0
\right\|_2
+
\left\|
\widehat b-b_0
\right\|_2
\\
&\lesssim
\left\|
\widehat S-S_0
\right\|_2
+
\left\|
\widehat b-b_0
\right\|_2,
\end{align*}
where the last inequality follows from the contraction property of
conditional expectation in $L_2(P_0)$. Hence,
\begin{align*}
|R_2(\widehat\eta,\eta_0)|
\lesssim{}&
\left\|
\widehat W-W_0
\right\|_{2}
\left\{
\left\|
\widehat b-b_0
\right\|_{2}
+
\left\|
\widehat S-S_0
\right\|_{2}
\right\}
\\
&+
\left\|
\widehat\Omega-\Omega_0
\right\|_{2}
\left\|
\widehat S-S_0
\right\|_{2}
\\
&+
\sum_{t_l\leq \nu}
\left\|
\widehat G(t_{l-1})-G_0(t_{l-1})
\right\|_{2}
\left\|
\widehat\lambda(t_l)-\lambda_0(t_l)
\right\|_{2}.
\end{align*}

Therefore, sufficient conditions for
$R_2(\widehat\eta,\eta_0)=o_{\mathbb P}(n^{-1/2})$ are
\begin{align*}
\left\|
\widehat W-W_0
\right\|_{2}
\left\{
\left\|
\widehat b-b_0
\right\|_{2}
+
\left\|
\widehat S-S_0
\right\|_{2}
\right\}
&=
o_{\mathbb P}(n^{-1/2}),
\\
\left\|
\widehat\Omega-\Omega_0
\right\|_{2}
\left\|
\widehat S-S_0
\right\|_{2}
&=
o_{\mathbb P}(n^{-1/2}),
\\
\sum_{t_l\leq\nu}
\left\|
\widehat G(t_{l-1})-G_0(t_{l-1})
\right\|_{2}
\left\|
\widehat\lambda(t_l)-\lambda_0(t_l)
\right\|_{2}
&=
o_{\mathbb P}(n^{-1/2}).
\end{align*}

\begin{align*}
&\Omega^{a,a^{*}}_{k,p}(L,M)
-
\widehat{\Omega}^{a,a^{*}}_{k,p}(L,M)
\\[0.5em]
&\quad=
\frac{
\tau^{a^*}_{p}(L,M)
V_p(L,M)
U_0(L)
\,
\widehat{\tau}^{a}_{k}(L,M)
\widehat V_k(L,M)
\widehat\pi^{a^*}_{p}(L)
\widehat U_p(L)
\widehat\rho^{a^*}_{p}(L)
}{
\tau^{a}_{k}(L,M)
V_k(L,M)
\pi^{a^*}_{p}(L)
U_p(L)
\rho^{a^*}_{p}(L)
\,
\widehat{\tau}^{a}_{k}(L,M)
\widehat V_k(L,M)
\widehat\pi^{a^*}_{p}(L)
\widehat U_p(L)
\widehat\rho^{a^*}_{p}(L)
}
\\[0.5em]
&\qquad-
\frac{
\widehat{\tau}^{a^*}_{p}(L,M)
\widehat V_p(L,M)
\widehat U_0(L)
\,
\tau^{a}_{k}(L,M)
V_k(L,M)
\pi^{a^*}_{p}(L)
U_p(L)
\rho^{a^*}_{p}(L)
}{
\tau^{a}_{k}(L,M)
V_k(L,M)
\pi^{a^*}_{p}(L)
U_p(L)
\rho^{a^*}_{p}(L)
\,
\widehat{\tau}^{a}_{k}(L,M)
\widehat V_k(L,M)
\widehat\pi^{a^*}_{p}(L)
\widehat U_p(L)
\widehat\rho^{a^*}_{p}(L)
}
\\[0.5em]
&\quad=
\frac{
\begin{aligned}
&
\tau^{a^*}_{p}(L,M)
V_p(L,M)
U_0(L)
\,
\widehat{\tau}^{a}_{k}(L,M)
\widehat V_k(L,M)
\widehat\pi^{a^*}_{p}(L)
\widehat U_p(L)
\widehat\rho^{a^*}_{p}(L)
\\
&\quad-
\widehat{\tau}^{a^*}_{p}(L,M)
\widehat V_p(L,M)
\widehat U_0(L)
\,
\tau^{a}_{k}(L,M)
V_k(L,M)
\pi^{a^*}_{p}(L)
U_p(L)
\rho^{a^*}_{p}(L)
\end{aligned}
}{
\begin{aligned}
&
\tau^{a}_{k}(L,M)
V_k(L,M)
\pi^{a^*}_{p}(L)
U_p(L)
\rho^{a^*}_{p}(L)
\\
&\qquad\times
\widehat{\tau}^{a}_{k}(L,M)
\widehat V_k(L,M)
\widehat\pi^{a^*}_{p}(L)
\widehat U_p(L)
\widehat\rho^{a^*}_{p}(L)
\end{aligned}
} \\
&\Omega^{a,a^*}_{k,p}(L,M)
-\widehat{\Omega}^{a,a^*}_{k,p}(L,M)
\\
&=
\frac{1}{
\tau_k^a V_k \pi_p^{a^*} U_p \rho_p^{a^*}
\,
\widehat\tau_k^a \widehat V_k
\widehat\pi_p^{a^*}\widehat U_p\widehat\rho_p^{a^*}}
\\
&\quad\times
\Bigg[
\widehat\tau_k^a \widehat V_k
\widehat\pi_p^{a^*}\widehat U_p\widehat\rho_p^{a^*}
\Big\{
(\tau_p^{a^*}-\widehat\tau_p^{a^*})V_pU_0
\\
&\qquad\qquad\qquad\qquad
+\widehat\tau_p^{a^*}(V_p-\widehat V_p)U_0
+\widehat\tau_p^{a^*}\widehat V_p(U_0-\widehat U_0)
\Big\}
\\
&\qquad+
\widehat\tau_p^{a^*}\widehat V_p\widehat U_0
\Big\{
(\widehat\tau_k^a-\tau_k^a)V_k\pi_p^{a^*}U_p\rho_p^{a^*}
\\
&\qquad\qquad
+\widehat\tau_k^a(\widehat V_k-V_k)\pi_p^{a^*}U_p\rho_p^{a^*}
\\
&\qquad\qquad
+\widehat\tau_k^a\widehat V_k
(\widehat\pi_p^{a^*}-\pi_p^{a^*})U_p\rho_p^{a^*}
\\
&\qquad\qquad
+\widehat\tau_k^a\widehat V_k\widehat\pi_p^{a^*}
(\widehat U_p-U_p)\rho_p^{a^*}
\\
&\qquad\qquad
+\widehat\tau_k^a\widehat V_k\widehat\pi_p^{a^*}
\widehat U_p
(\widehat\rho_p^{a^*}-\rho_p^{a^*})
\Big\}
\Bigg].
\end{align*}

By the triangle inequality, and since all nuisance functions take
values in $[0,1]$, the numerator of the preceding expression is
bounded in absolute value by
\begin{align*}
&
\left\|
\tau_p^{a^*}-\widehat\tau_p^{a^*}
\right\|_2
+
\left\|
V_p-\widehat V_p
\right\|_2
+
\left\|
U_0-\widehat U_0
\right\|_2
\\
&\qquad+
\left\|
\tau_k^a-\widehat\tau_k^a
\right\|_2
+
\left\|
V_k-\widehat V_k
\right\|_2
+
\left\|
\pi_p^{a^*}-\widehat\pi_p^{a^*}
\right\|_2
\\
&\qquad+
\left\|
U_p-\widehat U_p
\right\|_2
+
\left\|
\rho_p^{a^*}-\widehat\rho_p^{a^*}
\right\|_2.
\end{align*}

Under Assumption A, let $0<\varepsilon<c$ be fixed. For every nuisance estimator
$\widehat\eta^{\,\mathrm{raw}}$ entering
$\widehat\Omega_{k,p}^{a,a^*}$, define its truncated version as
$
\widehat\eta
=
\max\left\{
\varepsilon,
\min\left(1,\widehat\eta^{\,\mathrm{raw}}\right)
\right\}.$
For simplicity of notation, all nuisance estimators appearing below
refer to these truncated estimators.

Since the corresponding true nuisance functions take values in
$[c,1]\subset[\varepsilon,1]$, truncation does not increase the
estimation error. More precisely,
$
\left|
\widehat\eta-\eta_0
\right|
\le
\left|
\widehat\eta^{\,\mathrm{raw}}-\eta_0
\right|$
pointwise, and therefore
$
\left\|
\widehat\eta-\eta_0
\right\|_2
\le
\left\|
\widehat\eta^{\,\mathrm{raw}}-\eta_0
\right\|_2.$
Consequently, any $L_2(P_0)$ convergence rate satisfied by the raw
estimator is preserved after truncation.

Consider now
$
\Omega^{a,a^*}_{k,p}(L,M)
-
\widehat\Omega^{a,a^*}_{k,p}(L,M).$
Using a common denominator, we obtain
\begin{align*}
&\Omega^{a,a^*}_{k,p}(L,M)
-
\widehat\Omega^{a,a^*}_{k,p}(L,M)
\\
&=
\frac{1}{
\tau_k^a V_k \pi_p^{a^*} U_p \rho_p^{a^*}
\,
\widehat\tau_k^a \widehat V_k
\widehat\pi_p^{a^*}\widehat U_p
\widehat\rho_p^{a^*}}
\\
&\quad\times
\Bigg[
\widehat\tau_k^a \widehat V_k
\widehat\pi_p^{a^*}\widehat U_p
\widehat\rho_p^{a^*}
\Big\{
(\tau_p^{a^*}-\widehat\tau_p^{a^*})V_pU_0
\\
&\qquad\qquad\qquad
+
\widehat\tau_p^{a^*}(V_p-\widehat V_p)U_0
+
\widehat\tau_p^{a^*}\widehat V_p
(U_0-\widehat U_0)
\Big\}
\\
&\qquad+
\widehat\tau_p^{a^*}\widehat V_p\widehat U_0
\Big\{
(\widehat\tau_k^a-\tau_k^a)
V_k\pi_p^{a^*}U_p\rho_p^{a^*}
\\
&\qquad\qquad
+
\widehat\tau_k^a(\widehat V_k-V_k)
\pi_p^{a^*}U_p\rho_p^{a^*}
\\
&\qquad\qquad
+
\widehat\tau_k^a\widehat V_k
(\widehat\pi_p^{a^*}-\pi_p^{a^*})
U_p\rho_p^{a^*}
\\
&\qquad\qquad
+
\widehat\tau_k^a\widehat V_k
\widehat\pi_p^{a^*}
(\widehat U_p-U_p)\rho_p^{a^*}
\\
&\qquad\qquad
+
\widehat\tau_k^a\widehat V_k
\widehat\pi_p^{a^*}\widehat U_p
(\widehat\rho_p^{a^*}-\rho_p^{a^*})
\Big\}
\Bigg].
\end{align*}

Define $D
=
\tau_k^a V_k \pi_p^{a^*} U_p \rho_p^{a^*}$
and $\widehat D
=
\widehat\tau_k^a
\widehat V_k
\widehat\pi_p^{a^*}
\widehat U_p
\widehat\rho_p^{a^*}.$
Under the strong positivity assumption, $D\ge c^5.$
Moreover, by construction of the truncated nuisance estimators,$\widehat D\ge\varepsilon^5$
for every realization of the training sample and every value of the
covariates. Hence,
$
\left\|
(D\widehat D)^{-1}
\right\|_\infty
\le
c^{-5}\varepsilon^{-5}.$ Let $N$ denote the numerator in the preceding decomposition. Since all
true and truncated nuisance functions take values in $[0,1]$, the
triangle inequality yields
\begin{align*}
\|N\|_2
\le{}&
\left\|
\tau_p^{a^*}-\widehat\tau_p^{a^*}
\right\|_2
+
\left\|
V_p-\widehat V_p
\right\|_2
+
\left\|
U_0-\widehat U_0
\right\|_2
\\
&+
\left\|
\tau_k^a-\widehat\tau_k^a
\right\|_2
+
\left\|
V_k-\widehat V_k
\right\|_2
+
\left\|
\pi_p^{a^*}-\widehat\pi_p^{a^*}
\right\|_2
\\
&+
\left\|
U_p-\widehat U_p
\right\|_2
+
\left\|
\rho_p^{a^*}-\widehat\rho_p^{a^*}
\right\|_2.
\end{align*}

Therefore,
\begin{align*}
&
\left\|
\widehat\Omega_{k,p}^{a,a^*}
-
\Omega_{k,p}^{a,a^*}
\right\|_2
\\
&\qquad\le
\left\|
(D\widehat D)^{-1}
\right\|_\infty
\|N\|_2
\\
&\qquad\le
c^{-5}\varepsilon^{-5}
\Bigg[
\left\|
\tau_p^{a^*}-\widehat\tau_p^{a^*}
\right\|_2
+
\left\|
V_p-\widehat V_p
\right\|_2
+
\left\|
U_0-\widehat U_0
\right\|_2
\\
&\qquad\qquad+
\left\|
\tau_k^a-\widehat\tau_k^a
\right\|_2
+
\left\|
V_k-\widehat V_k
\right\|_2
+
\left\|
\pi_p^{a^*}-\widehat\pi_p^{a^*}
\right\|_2
\\
&\qquad\qquad+
\left\|
U_p-\widehat U_p
\right\|_2
+
\left\|
\rho_p^{a^*}-\widehat\rho_p^{a^*}
\right\|_2
\Bigg].
\end{align*}

Suppose that each raw nuisance estimator entering
$\widehat\Omega_{k,p}^{a,a^*}$ satisfies
$\left\|
\widehat\eta^{\,\mathrm{raw}}-\eta_0
\right\|_2
=
o_{\mathbb P}(n^{-1/4}).$
Since truncation preserves the convergence rate,$\left\|
\widehat\eta-\eta_0
\right\|_2
=
o_{\mathbb P}(n^{-1/4})$
for each corresponding truncated nuisance estimator. Since only
finitely many nuisance functions enter the preceding bound, it follows
that
\begin{align*}
&
\left\|
\tau_p^{a^*}-\widehat\tau_p^{a^*}
\right\|_2
+
\left\|
V_p-\widehat V_p
\right\|_2
+
\left\|
U_0-\widehat U_0
\right\|_2
\\
&\quad+
\left\|
\tau_k^a-\widehat\tau_k^a
\right\|_2
+
\left\|
V_k-\widehat V_k
\right\|_2
+
\left\|
\pi_p^{a^*}-\widehat\pi_p^{a^*}
\right\|_2
\\
&\quad+
\left\|
U_p-\widehat U_p
\right\|_2
+
\left\|
\rho_p^{a^*}-\widehat\rho_p^{a^*}
\right\|_2
=
o_{\mathbb P}(n^{-1/4}).
\end{align*}
Consequently,
$\left\|
\widehat\Omega_{k,p}^{a,a^*}
-
\Omega_{k,p}^{a,a^*}
\right\|_2
=
o_{\mathbb P}(n^{-1/4}).$

An analogous argument applies to $\widehat W_p^{a^*}$. Provided that
every nuisance function appearing in its denominator is uniformly
bounded below by $c$, and that the corresponding nuisance estimators
are truncated below at $\varepsilon<c$, the estimated denominator is
uniformly bounded away from zero. Hence, under the same
$L_2(P_0)$ convergence-rate conditions,
$\left\|
\widehat W_p^{a^*}
-W_{0,p}^{a^*}
\right\|_2
=o_{\mathbb P}(n^{-1/4}).$

Then, under Assumption~A and appropriate clipping, convergence rate $o_{\mathbb{P}}(n^{-1/4})$ is sufficient to have a remainder term that satisfies a convergence rate $o_{\mathbb{P}}(n^{-1/2})$

\end{proof}
\section{Multiple Mediators}

\subsection{Identification}
The path-specific indirect effects are defined as follows.  

\begin{align*}
    \operatorname{NIE}_{A->M_1->M_2->T}^{a}(\nu) &=&
\Pr\!\left\{T\bigl(a,M_1(a^*), M_2(a^{*} ,M_1(a^{*}))\bigr)>\nu\mid S=0\right\} \\
&-&
\Pr\!\left\{T\bigl(a,M_1(a), M_2(a ,M_1(a)\bigr)>\nu\mid S=0\right\} \\
    \operatorname{NIE}_{A->M_2->T} ^{a}(\nu) &=&
\Pr\!\left\{T\bigl(a, M_1(a^*), M_2(a, M_1(a^{*}))\bigr)>\nu\mid S=0\right\} \\
&-&
\Pr\!\left\{T\bigl(a, M_1(a^*), M_2(a^* ,M_1(a^*))\bigr)>\nu\mid S=0\right\} \\
    \operatorname{NIE}_{A->M_1T} ^{a}(\nu) &=&
\Pr\!\left\{T\bigl(a, M_1(a), M_2(a, M_1(a))\bigr)>\nu\mid S=0\right\} \\
&-&
\Pr\!\left\{T\bigl(a, M_1(a^*), M_2(a,M_1(a^*))\bigr)>\nu\mid S=0\right\}.
\end{align*}

We need to identify  two counterfactual quantities   $$\theta^{a, a^{*}}(\nu):=\mathbb{P}[T\bigl(a,M_1(a^*), M_2(a^{*} ,M_1(a^{*})) \bigr)>\nu\mid S=0] \text {  and  } $$$$\beta^{a, a^{*}}(\nu):=\mathbb{P}[T\bigl(a,M_1(a^*), M_2(a ,M_1(a^{*})) \bigr)>\nu\mid S=0]$$

\renewcommand{\theassumption}{\arabic{assumption}bis}
\setcounter{assumption}{0}
\begin{assumption}\label{ass:nie2}(Identification of $\nie^{a}(\nu)$) \begin{itemize} 
\item [(i)] (Ignorability) $T(a), M_1(a), M_2(a,m) \perp A \mid L, S$ and $M_2(a,m) \perp M_1(a)   \mid L, S, A $ \item [(ii)] (Consistency) $T = T(a)$ and $M_1 = M_1(a),$  if $A=a$  and $M_2 = M_2(a,m)$ if $A=a$ and $M_1=m$  \item [(iii)] (Cross-world independence) $T(a)\ \perp\ ( M_1(a^*), M_2(a^*))\mid L, S.$ \item [(iv)] (Positivity) $0<\mathbb{P}(A=a\mid L,S), \mathbb{P}(M_1=m\mid A,L,S),\mathbb{P}(M_2=m\mid A,L,S),\mathbb{P}(S=s\mid L), \mathbb{P}(C \ge t_i \mid A, M_1,M_2, L,S=k) < 1,$ for all relevant $(a,m_1,m_2,s)$, $k = 1,\ldots K$ and $i=1,\ldots,\tau$. 
\end{itemize} 
\end{assumption}

\begin{assumption}[Transportability]
\label{ass:transportability}
$T(a,m_1, m_2)\perp S\mid L, \quad M_1(a)\perp S\mid L,\quad M_2(a, m_1)\perp S\mid L.$
\end{assumption}

\begin{assumption}\label{ass:mar}(Missing at Random)
\begin{itemize}
\item [(i)] $R\perp(M_1,M_2,T,C)\mid A,L,S=k,
\qquad
k=1,\ldots,K,$
\item [(ii)]$\tilde{R} \perp(M_2,T,C)\mid M_1,R,A,L,S=k,
\qquad
k=1,\ldots,K,$ 
\end{itemize}
where $R$ (respectively, $\tilde{R}$) denotes the missingness indicator for the mediator $M_1$ (respectively, $M_2$), with $R=1$ ($\tilde{R}=1$) indicating that $M_1$ ($M_2$) is missing and $R=0$ ($\tilde{R}=0$) indicating that it is observed. The missingness mechanism is assumed to be monotone, in a sense that if the mediator $M_1$ is missing then the mediator $M_2$ is missing.
\end{assumption}

Under assumptions~\ref{ass:nie2}--\ref{ass:mar}, the counterfactual survival probability $\theta^{a, a^{*}}(\nu)$ and  $\beta^{a, a^{*}}(\nu)$ are identified by the functional $\beta^{a,a^*}_{k,p, q}(\nu)$ and $\theta^{a,a^*}_{k,p, q}(\nu)$, where:
\begin{eqnarray*}\label{G-formula1}
    \theta^{a, a^{*}}_{k,p,s} &=& \mathbb{E}\biggl[\mathbb{E} \bigl[ \mathbb{P}(T>\nu\mid a,M_1, M_2, \tilde{R}=0, L,k)\mid a^*, M_1, L,p,  \tilde{R}=0 \bigr]  \mid a^*, L, s, R=0\bigr] \mid S=0\biggr] \\
    \beta^{a, a^{*}}_{k,p,s} &=& \mathbb{E}\biggl[\mathbb{E} \bigl[ \mathbb{P}(T>\nu\mid a,M_1, M_2, \tilde{R}=0, L,k)\mid a, M_1, L,p,  \tilde{R}=0 \bigr]  \mid a^*, L, s, R=0\bigr] \mid S=0\biggr]\\
\end{eqnarray*} 
\subsection{ANOVA decomposition of heterogeneity}

Specifically, let $\zeta = \sum_{k,p} w_kq_p \gamma_s\zeta_{k,p,s}$
and $\tau^2 = \sum_{k,p,s}w_kq_p\gamma_s(\zeta_{k,p,s}-\zeta)^2,$ 
where $(w_k)$ and $(q_p)$ are prespecified weight functions for outcome and mediator sources, with $\sum_k w_k=1 ,\sum_p q_p=1$ and $\sum_s \gamma_s=1$ . The total variance $\tau^2$ can then be decomposed as:
\begin{eqnarray*} \label{eq:anova_2}
    \tau^2 &=&
    \underbrace{\sum_k w_k(\zeta_{k,\cdot, \cdot}-\zeta)^2}_{ \xi^2 \text{Outcome-related variation}}
    +
    \underbrace{\sum_p q_p(\zeta_{\cdot,p,\cdot,}-\zeta)^2}_{ \eta^2 \text{Mediator 1-related variation}}+ \underbrace{\sum_s \gamma_s(\zeta_{\cdot,\cdot,s}-\zeta)^2}_{ \alpha^2 \text{Mediator 2-related variation}}\\
   & + & \underbrace{
\sum_{k,p} w_k q_p
\left(
\zeta_{k,p,\cdot}
-\zeta_{k,\cdot,\cdot}
-\zeta_{\cdot,p,\cdot}
+\zeta
\right)^2
}_{\text{Outcome} \times \text{Mediator 1 interaction}}
\nonumber + \underbrace{
\sum_{k,s} w_k \gamma_s
\left(
\zeta_{k,\cdot,s}
-\zeta_{k,\cdot,\cdot}
-\zeta_{\cdot,\cdot,s}
+\zeta
\right)^2
}_{\text{Outcome} \times \text{Mediator 2 interaction}} \\
& +&
\underbrace{
\sum_{p,s} q_p \gamma_s
\left(
\zeta_{\cdot,p,s}
-\zeta_{\cdot,p,\cdot}
-\zeta_{\cdot,\cdot,s}
+\zeta
\right)^2
}_{\text{Mediator 1} \times \text{Mediator 2 interaction}} \\
&+&
\underbrace{
\sum_{k,p,s} w_k q_p \gamma_s
\left(
\zeta_{k,p,s}
-\zeta_{k,p,\cdot}
-\zeta_{k,\cdot,s}
-\zeta_{\cdot,p,s}
+\zeta_{k,\cdot,\cdot}
+\zeta_{\cdot,p,\cdot}
+\zeta_{\cdot,\cdot,s}
-\zeta
\right)^2
}_{\text{Outcome} \times \text{Mediator 1} \times \text{Mediator 2 interaction}}.
\end{eqnarray*}

where $\zeta_{k,\cdot,\cdot}= \sum_{p} \sum_{s} q_p \gamma_s \zeta_{k,p,s}$ and $\zeta_{\cdot,p,\cdot}=\sum_{k} \sum_{s} w_k \gamma_s \zeta_{k,p,s}$, $\zeta_{\cdot,\cdot,s}=\sum_{k} \sum_{p} w_k q_p \zeta_{k,p,s}$ , $\zeta_{k,p,\cdot}= \sum_{s}  \gamma_s \zeta_{k,p,s}$ , $\zeta_{k,\cdot,s}= \sum_{p}  q_p w_k \zeta_{k,p,s},$  $\zeta_{\cdot,p,s}= \sum_{k}  \omega_k q_p \zeta_{k,p,s}$ 

These interaction terms quantify the extent to which heterogeneity attributable to one source depends on the choice of another source. In particular, the two-way interactions capture pairwise departures from additivity, while the three-way interaction represents the residual variation that cannot be explained by the three main effects or the pairwise interactions alone.

\section{Shared Nuisance Parameters Across Sites: One-Stage Meta-Analysis}

This section presents alternative identification results and estimation strategies for data-fusion meta-analysis under progressively stronger assumptions on the censoring and missingness mechanisms. In contrast to the main analysis, where no restriction is imposed on the way these mechanisms vary across study sites, the approaches considered here rely on assumptions that allow some degree of information sharing across studies. These additional assumptions can improve identification and efficiency, but they require stronger modeling restrictions.

More specifically, the extent to which the missingness mechanisms are assumed to be shared across studies defines a hierarchy of possible settings. At one extreme, neither the outcome nor the mediator missingness mechanisms are assumed to be common across studies. This corresponds to the most flexible framework, which is the setting developed in the main article. At the other extreme, one may assume that both the outcome and mediator mechanisms and censoring and missingness mechanisms are shared across all study sites.  This leads to a one-stage meta-analysis framework, in which information is pooled across studies through common missingness models. We develop this setting in the following section.

Between these two extremes lie intermediate cases. For example, one may assume that the outcome mechanism is common across studies while allowing the mediator mechanism to remain study-specific (or conversely). In such cases, identification relies on a partially pooled structure, resulting in the estimation of (K) distinct estimands corresponding to the study-specific component that remains unrestricted.

The intermediate scenarios can be naturally derived from the two limiting cases described above. They therefore provide a flexible framework for incorporating varying degrees of cross-study information sharing while maintaining a clear connection with the fully study-specific and fully pooled approaches.

\subsection{Identification}
\label{proof_identification_shared}

\begin{align*}
\mathbb{E}\left[
    I\{T(a,M(a^{*}))>\nu\}
    \mid S=0
\right]&
\\
\quad=
\mathbb{E}_{L}\left[
    \mathbb{E}\left[
        I\{T(a,M(a^{*}))>\nu\}
        \mid L
    \right]
    \,\middle|\, S=0
\right]&
\\
\intertext{\textit{Cross-world independence:}}
\quad=
\mathbb{E}_{L}\Biggl[
    \sum_{m\in\mathcal{M}}
    \mathbb{E}\left[
        I\{T(a,M(a^{*}))>\nu\}
        \,\middle|\,
        L,M(a^{*})=m
    \right]
\nonumber\\[-0.2em]
\hspace{12em}\times
    \mathbb{P}\left(
        M(a^{*})=m
        \mid L
    \right)
    \,\Biggm|\, S=0
\Biggr]&
\\
\intertext{\textit{Outcome transportability:}}
\quad=
\mathbb{E}_{L}\Biggl[
    \sum_{m\in\mathcal{M}}
    \mathbb{E}\left[
        I\{T(a,M(a^{*}))>\nu\}
        \,\middle|\,
        \substack{L,M(a^{*})=m,\\ S>0}
    \right]
\nonumber&\\[-0.2em]
\hspace{12em}\times
    \mathbb{P}\left(
        M(a^{*})=m
        \mid L
    \right)
    \,\Biggm|\, S=0
\Biggr]&
\\
\intertext{\textit{Conditional outcome exchangeability:}}
\quad=
\mathbb{E}_{L}\Biggl[
    \sum_{m\in\mathcal{M}}
    \mathbb{E}\left[
        I\{T(a,M(a^{*}))>\nu\}
        \,\middle|\,
        \substack{L,M(a^{*})=m,\\ A=a,S>0}
    \right]&
\nonumber \\[-0.2em]
\hspace{12em}\times
    \mathbb{P}\left(
        M(a^{*})=m
        \mid L
    \right)
    \,\Biggm|\, S=0
\Biggr]&
\\
\intertext{\textit{Outcome consistency:}}
\quad=
\mathbb{E}_{L}\Biggl[
    \sum_{m\in\mathcal{M}}
    \mathbb{E}\left[
        I\{T>\nu\}
        \,\middle|\,
        L,M=m,A=a,S>0
    \right]&
\nonumber\\[-0.2em]
\hspace{12em}\times
    \mathbb{P}\left(
        M(a^{*})=m
        \mid L
    \right)
    \,\Biggm|\, S=0
\Biggr]&
\\
\intertext{\textit{Mediator transportability:}}
\quad=
\mathbb{E}_{L}\Biggl[
    \sum_{m\in\mathcal{M}}
    \mathbb{E}\left[
        I\{T>\nu\}
        \,\middle|\,
        L,M=m,A=a,S>0
    \right]&
\nonumber\\[-0.2em]
\hspace{12em}\times
    \mathbb{P}\left(
        M(a^{*})=m
        \mid L,S>0
    \right)
    \,\Biggm|\, S=0
\Biggr]&
\\
\intertext{\textit{Conditional mediator exchangeability:}}
\quad=
\mathbb{E}_{L}\Biggl[
    \sum_{m\in\mathcal{M}}
    \mathbb{E}\left[
        I\{T>\nu\}
        \,\middle|\,
        L,M=m,A=a,S>0
    \right]&
\nonumber\\[-0.2em]
\hspace{12em}\times
    \mathbb{P}\left(
        M(a^{*})=m
        \mid L,A=a^{*},S>0
    \right)
    \,\Biggm|\, S=0
\Biggr]&
\\
\intertext{\textit{Mediator consistency:}}
\quad=
\mathbb{E}_{L}\Biggl[
    \sum_{m\in\mathcal{M}}
    \mathbb{E}\left[
        I\{T>\nu\}
        \,\middle|\,
        L,M=m,A=a,S>0
    \right]&
\nonumber\\[-0.2em]
\hspace{12em}\times
    \mathbb{P}\left(
        M=m
        \mid L,A=a^{*},S>0
    \right)
    \,\Biggm|\, S=0
\Biggr] & \\
\quad=
    \sum_{(m,l)\in\mathcal{M} \times \mathcal{L}}
    \mathbb{E}\left[
        I\{T>\nu\}
        \,\middle|\,
        L=l,M=m,A=a,S>0
    \right]
\nonumber\\[-0.2em]
\hspace{12em}\times
    \mathbb{P}\left(
        M=m
        \mid L=l,A=a^{*},S>0    \right)
    \mathbb{P}\left( L=l\,|\, S=0  \right).&
\end{align*}

\medskip
\noindent
\textbf{Assumption 3bis (Shared Missing-at-Random Mechanism).}
\emph{
Among the contributing sites, $R \perp (M,T,C)\mid A,L,S>0,$
where $R$ denotes the missingness indicator for the mediator, with
$R=1$ if $M$ is missing and $R=0$ otherwise. In particular,
\[
\mathbb{P}(R=0\mid A,L,S=k)
=
\mathbb{P}(R=0\mid A,L,S=k')
\]
for all $k,k'\in\{1,\ldots,K\}$.
}
\medskip

\noindent
\textbf{Assumption 4bis (Shared Conditional Independent Censoring).}
\emph{
Among individuals with an observed mediator, $C\perp T \mid A,L,M,R=0,S>0.$
In particular,
\[
\mathbb{P}(C>t\mid A,L,M,R=0,S=k)
=
\mathbb{P}(C>t\mid A,L,M,R=0,S=k')
\]
for all $t$ and all $k,k'\in\{1,\ldots,K\}$.
}

\begin{remark}
Assumption 3bis can be formulated in alternative ways depending on the type of information sharing assumed across study sites. For instance, one may consider the following pair of assumptions:
\[
R \perp (T,C)\mid A,L,S>0,
\]
and
\[
R \perp M\mid A,L,S=k,\quad \text{for all } k.
\]

Under this specification, the missingness mechanism for the mediator is shared across studies, while the mediator distribution may remain study-specific. In particular, this implies that
\[
\mathbb{P}(R=0\mid A,L,S=k)
=
\mathbb{P}(R=0\mid A,L,S=k')
\]
for all study sites $k$ and $k'$. However, this assumption does not imply that the mediator distribution among individuals with observed mediator values is representative of the full mediator distribution. Specifically, in general,
\[
\mathbb{P}(M\mid A,L,S>0)
\neq
\mathbb{P}(M\mid A,L,S>0,R=0).
\]

Therefore, the estimand considered in the main analysis is no longer directly identified under this alternative assumption. Instead, one needs to define a study-specific estimand, given by
\begin{align*}
\theta_p &=
\sum_{(m,l)\in\mathcal{M}\times\mathcal{L}}
\mathbb{E}\left[
I\{T>\nu\}
\,\middle|\,
L=l,M=m,A=a,S>0
\right]
\mathbb{P}\left(
M=m
\mid L=l,A=a^{*},S=p
\right)  \\ & \qquad \qquad \times
\mathbb{P}\left(
L=l\mid S=0
\right) 
\end{align*}
where $p$ denotes the study site providing information on the mediator distribution.

Alternatively, one may recover the mediator distribution in the target population using the law of total probability:
\[
\mathbb{P}(M\mid A,L,S>0)
=
\sum_p
\mathbb{P}(M\mid A,L,R=0,S=p)
\mathbb{P}(S=p\mid A,L).
\]

This formulation illustrates that partial sharing of missingness mechanisms leads to intermediate identification strategies between the fully study-specific framework considered in the main article and the fully pooled one-stage meta-analysis setting.
\end{remark}

\begin{align*}
\sum_{(m,l)\in\mathcal{M} \times \mathcal{L}} &
    \mathbb{E}\left[
        I\{T>\nu\}
        \,\middle|\,
        L=l,M=m,A=a,S>0
    \right]
\nonumber\\[-0.2em]
&\hspace{12em}\times
    \mathbb{P}\left(
        M=m
        \mid L=l,A=a^{*},S>0    \right)
    \mathbb{P}\left( L=l\,|\, S=0  \right).
\end{align*}

By Assumption 3bis, we have:
$$\mathbb{P}\left(
        M=m \mid L, A=a^{*},S>0\right) = \mathbb{P}\left(M=m
        \mid R=0, L, A=a^{*},S>0\right).$$

As in the site-specific setting, we can develop IPCW, IPW, One-Step, and TMLE estimators for this estimand.

\subsection{Parametric Estimators}

We define the survival function, censoring survival function, and corresponding hazard functions on the population $S>0$:

\begin{align*}
g^a(t\mid L,M,R=0)
&=
\mathbb{P}\left(
    C=t
    \mid
    C\geq t,A=a,L,M,S>0,R=0
\right),
\\
G^a(t\mid L,M,R=0)
&=
\mathbb{P}\left(
    C>t
    \mid
    A=a,L,M,S>0,R=0
\right)
\nonumber\\
&=
\prod_{u\leq t}
\left\{
    1-g^a(u\mid L,M,R=0)
\right\}.
\end{align*}

Similarly, define the event hazard and survival functions as
\begin{align*}
\lambda^a(t\mid L,M,R=0)
&=
\mathbb{P}\left(
    T=t
    \mid
    T\geq t,A=a,L,M,S>0,R=0
\right),
\\
S^a(t\mid L,M,R=0)
&=
\mathbb{P}\left(
    T>t
    \mid
    A=a,L,M,S>0,R=0
\right)
\nonumber\\
&=
\prod_{u\leq t}
\left\{
    1-\lambda^a(u\mid L,M,R=0)
\right\}.
\end{align*}

From the identification result, we have the identifying functional:
\begin{align*}
\theta^{a,a^{*}}(P)
&=
\sum_{(m,l)\in\mathcal{M} \times \mathcal{L}}
S^{a}(\nu\mid l,m,R=0)
\,
\mathbb{P}\left(
    M=m
    \mid L=l,A=a^{*},R=0,S>0
\right)
\nonumber\\[-0.2em]
&\hspace{12em}\times
\mathbb{P}(L=l\mid S=0).
\end{align*}

For a continuous mediator, the sum over \(m\) is replaced by the
corresponding integral. This representation forms the basis of the parametric
estimators introduced below.

We now describe three parametric estimation strategies for
\(\theta^{a,a^{*}}(P)\). In contrast with the site-specific setting, the
event, censoring, and missingness nuisance parameters are shared across the
contributing sites and are therefore estimated using the pooled population
\(S>0\).

To simplify the presentation, let \(I(v)=I(V=v)\) denote the indicator
function for a generic variable \(V\), and define the collection of nuisance
parameters as
\[
\bm{\eta}
=
\left(
    U_s,V_k,\pi_k^{a},\tau_k^{a},\rho^{a},
    \lambda^{a}(t),g^{a}(t),S^{a}(t),G^{a}(t),
    \mu^{a^{*}},b^{a,a^{*}}
\right).
\]

For \(s\in\{0,\ldots,K\}\) and \(k\in\{1,\ldots,K\}\), the site
participation and treatment mechanisms are defined by
\begin{align*}
U_s(L)
&=
\mathbb{P}(S=s\mid L),
&
V_k(L,M)
&=
\mathbb{P}(S=k\mid L,M,R=0),
\\
\pi_k^{a}(L)
&=
\mathbb{P}(A=a\mid L,S=k),
&
\tau_k^{a}(L,M)
&=
\mathbb{P}(A=a\mid L,M,R=0,S=k).
\end{align*}
The shared missingness mechanism and mediator distribution are respectively
given by
\[
\rho^{a}(L)
=
\mathbb{P}(R=0\mid A=a,L,S>0)
\qquad\text{and}\qquad
\mu^{a^{*}}(m\mid L)
=
\mathbb{P}(M=m\mid L,A=a^{*},R=0,S>0).
\]
Finally, define
\[
b^{a,a^{*}}(L)
=
\mathbb{E}\left[
    S^{a}(\nu\mid L,M,R=0)
    \mid A=a^{*},L,R=0,S>0
\right].
\]

The quantities \(\rho^{a}\), \(g^{a}\), \(\lambda^{a}\), \(G^{a}\), and
\(S^{a}\) do not carry a site index because they are assumed to be shared
across all contributing sites.

\paragraph{Event and censoring nuisance parameters}

Within the pooled population \(S>0\), the conditional censoring hazard is
defined as
\[
g^a(t\mid L,M,R=0)
=
\mathbb{P}\left(
    C=t
    \mid C\geq t,A=a,L,M,S>0,R=0
\right).
\]
The corresponding censoring survival function is
\[
G^a(t\mid L,M,R=0)
=
\mathbb{P}\left(
    C>t
    \mid A=a,L,M,S>0,R=0
\right)
=
\prod_{u\leq t}
\left\{
    1-g^a(u\mid L,M,R=0)
\right\}.
\]

Similarly, the conditional event hazard is defined as
\[
\lambda^a(t\mid L,M,R=0)
=
\mathbb{P}\left(
    T=t
    \mid T\geq t,A=a,L,M,S>0,R=0
\right),
\]
and the corresponding event survival function is
\[
S^a(t\mid L,M,R=0)
=
\mathbb{P}\left(
    T>t
    \mid A=a,L,M,S>0,R=0
\right)
=
\prod_{u\leq t}
\left\{
    1-\lambda^a(u\mid L,M,R=0)
\right\}.
\]

\paragraph{Total treatment and participation probabilities}

The treatment mechanisms conditional on \(S>0\) can be expressed through
site-specific quantities. More precisely, for \(c\in\{a,a^{*}\}\), the law
of total probability gives
\begin{align*}
\sum_{k=1}^{K}
\pi_k^{c}(L)U_k(L)
&=
\sum_{k=1}^{K}
\mathbb{P}(A=c\mid L,S=k)
\mathbb{P}(S=k\mid L)
\\
&=
\mathbb{P}(A=c,S>0\mid L),
\end{align*}
while, conditional on the mediator,
\begin{align*}
\sum_{k=1}^{K}
\tau_k^{c}(L,M)V_k(L,M)
&=
\sum_{k=1}^{K}
\mathbb{P}(A=c\mid L,M,R=0,S=k)
\\[-0.2em]
&\hspace{7em}\times
\mathbb{P}(S=k\mid L,M,R=0)
\\
&=
\mathbb{P}(A=c,S>0\mid L,M,R=0).
\end{align*}

These total-probability sums are used directly in the weighting functions
below, rather than introducing separate propensity scores conditional on
\(S>0\).

\paragraph{Parametric plug-in estimator}

A first strategy consists of directly substituting estimators of the nuisance
parameters into the identifying functional. This yields the following
parametric representation:
\begin{align*}
\theta^{a,a^{*}}
&=
\mathbb{E}\left[
    \left.
    \sum_{m\in\mathcal{M}}
    S^{a}(\nu\mid L,m,R=0)
    \mu^{a^{*}}(m\mid L)
    \,\right\|_2\,S=0
\right]
\nonumber\\
&=
\mathbb{E}\left[
    b^{a,a^{*}}(L)
    \mid S=0
\right].
\end{align*}

In fact there are two ways of considering the estimation :

Let \(n_0=\sum_{i=1}^{n}I(S_i=0)\) denote the number of individuals in the
target population. The corresponding empirical plug-in estimator is
\begin{align*}
\widehat{\theta}_{\mathrm{G}}^{a,a^{*}}
&=
\frac{1}{n_0}
\sum_{i:S_i=0}
\sum_{m\in\mathcal{M}}
\widehat{S}^{a}(\nu\mid L_i,m,R=0)
\widehat{\mu}^{a^{*}}(m\mid L_i)
\nonumber\\
&=
\frac{1}{n_0}
\sum_{i:S_i=0}
\widehat{b}^{a,a^{*}}(L_i).
\end{align*}

\paragraph{IPCW estimator}

Alternatively, the parameter can be represented using inverse probability
weights that jointly account for transport, treatment, missingness, and
censoring. Define the pooled IPCW weight as
\begin{align*}
\Omega^{a,a^{*}}(L,M)
&=
\frac{
    U_0(L)
}{
    \rho^{a^{*}}(L)
    \displaystyle\sum_{k=1}^{K}
    \pi_k^{a^{*}}(L)U_k(L)
}
\nonumber\\[-0.2em]
&\quad\times
\frac{
    \displaystyle\sum_{k=1}^{K}
    \tau_k^{a^{*}}(L,M)V_k(L,M)
}{
    \displaystyle\sum_{k=1}^{K}
    \tau_k^{a}(L,M)V_k(L,M)
}.
\end{align*}

Using this weight, the IPCW representation of the target parameter is
\begin{align*}
\theta_{\mathrm{IPCW}}^{a,a^{*}}(P)
&=
\frac{1}{\mathbb{P}(S=0)}
\mathbb{E}\Biggl[
    \frac{
        \Omega^{a,a^{*}}(L,M)
    }{
        G^{a}(\nu\mid L,M,R=0)
    }
\nonumber\\[-0.2em]
&\hspace{7em}\times
    I(\widetilde{T}>\nu)
    I(A=a,S>0,R=0)
\Biggr].
\end{align*}

The corresponding empirical estimator pools the observations from all
contributing sites and is written as
\begin{align*}
\widehat{\theta}_{\mathrm{IPCW}}^{a,a^{*}}
&=
\frac{1}{n_0}
\sum_{k=1}^{K}
\sum_{i:S_i=k}
\frac{
    \widehat{\Omega}^{a,a^{*}}(L_i,M_i)
}{
    \widehat{G}^{a}(\nu\mid L_i,M_i,R_i=0)
}
\nonumber\\[-0.2em]
&\hspace{7em}\times
I(\widetilde{T}_i>\nu)
I(A_i=a,R_i=0),
\end{align*}
where
\begin{align*}
\widehat{\Omega}^{a,a^{*}}(L_i,M_i)
&=
\frac{
    \widehat{U}_0(L_i)
}{
    \widehat{\rho}^{a^{*}}(L_i)
    \displaystyle\sum_{j=1}^{K}
    \widehat{\pi}_j^{a^{*}}(L_i)
    \widehat{U}_j(L_i)
}
\nonumber\\[-0.2em]
&\quad\times
\frac{
    \displaystyle\sum_{j=1}^{K}
    \widehat{\tau}_j^{a^{*}}(L_i,M_i)
    \widehat{V}_j(L_i,M_i)
}{
    \displaystyle\sum_{j=1}^{K}
    \widehat{\tau}_j^{a}(L_i,M_i)
    \widehat{V}_j(L_i,M_i)
}.
\end{align*}

\paragraph{IPW estimator based on the event survival model}

A third representation replaces the observed survival indicator by the
estimated conditional event survival function. For this purpose, define the
pooled IPW weight as
\begin{align*}
W^{a,a^{*}}(L)
&=
\frac{
    U_0(L)
}{
    \rho^{a^{*}}(L)
    \displaystyle\sum_{k=1}^{K}
    \pi_k^{a^{*}}(L)U_k(L)
}.
\end{align*}

The resulting representation of the parameter is
\begin{align*}
\theta_{\mathrm{IPW}}^{a,a^{*}}(P)
&=
\frac{1}{\mathbb{P}(S=0)}
\mathbb{E}\Biggl[
    I(A=a^{*},S>0,R=0)
    W^{a,a^{*}}(L)
\nonumber\\[-0.2em]
&\hspace{11em}\times
    S^{a}(\nu\mid L,M,R=0)
\Biggr].
\end{align*}

Its empirical counterpart is
\begin{align*}
\widehat{\theta}_{\mathrm{IPW}}^{a,a^{*}}
&=
\frac{1}{n_0}
\sum_{k=1}^{K}
\sum_{i:S_i=k}
I(A_i=a^{*},R_i=0)
\widehat{W}^{a^{*}}(L_i)
\nonumber\\[-0.2em]
&\hspace{11em}\times
\widehat{S}^{a}(\nu\mid L_i,M_i,R_i=0),
\end{align*}
where
\[
\widehat{W}^{a^{*}}(L_i)
=
\frac{
    \widehat{U}_0(L_i)
}{
    \widehat{\rho}^{a^{*}}(L_i)
    \displaystyle\sum_{j=1}^{K}
    \widehat{\pi}_j^{a^{*}}(L_i)
    \widehat{U}_j(L_i)
}.
\]

Thus, the three estimators rely on the same collection of pooled nuisance
parameters but differ in how the event survival distribution is incorporated:
the plug-in estimator averages the fully modeled identifying functional, the
IPCW estimator uses the observed survival indicator corrected for censoring,
and the IPW estimator averages the modeled event survival function over the
weighted mediator distribution.

All denominators are assumed to be strictly positive on the support of the
target population.

\subsection{One-Step and TMLE Estimators}

The development of One-Step corrected and TMLE (Targeted Maximum Likelihood Estimation) estimators follows along similar lines and is omitted for brevity. The EIF in the nonparametric statistical model is the following:

\begin{align*}
\varphi(O)
&=
-
\frac{
    I\{A=a,S>0, R=0\}
}{
    \mathbb{P}(S=0)
}
\Omega^{a,a^{*}}(L,M)
\nonumber\\[-0.2em]
&\quad\times
\sum_{t_l\leq\nu}
I\{\widetilde T\geq t_l\}
\mathcal{G}_{a}^{(t_l)}(\nu\mid L,M)
\left\{
    I\{\widetilde T=t_l,\delta=1\}
    -
    \lambda_{a}(t_l\mid L,M)
\right\}
\nonumber\\[0.4em]
&\quad+
\frac{ I\{A=a*,S>0, R=0\}W^{a^* }(L)}{\mathbb{P}(S=0)}
\left\{
    S_{a}(\nu\mid L,M, R=0)
    -
    b^{a,a^{*}}(\nu\mid L)
\right\}
\nonumber\\[0.4em]
&\quad+
\frac{I\{S=0\}}{\mathbb{P}(S=0)}
\left\{
    b^{a,a^{*}}(\nu\mid L)
    -
    \theta^{a,a^*}(P)
\right\}.
\end{align*}

The One-Step estimator can be constructed following the same principle as in the main
paper, except that the shared event hazard $\lambda^a$ and the regression function
$b^{a,a^*}$ are estimated in a federated manner using data from all contributing
source populations, $S>0$.

The TMLE requires an additional modification because the shared event hazard must
itself be targeted in a federated manner. Since the source population is distributed
across $K$ servers, the empirical EIF equation involves contributions from all source
sites and cannot be solved locally at a single site. However, the corresponding
fluctuation likelihood decomposes additively across sites, allowing the targeting step
to be implemented through federated logistic regression.

For each time point $t_j \leq \nu$, the targeted update of the event hazard takes the
form
\begin{equation}
\operatorname{logit}
\left\{
\widehat{\lambda}^{a,(1)}
(t_j \mid L_i,M_i,R_i=0)
\right\}
=
\operatorname{logit}
\left\{
\widehat{\lambda}^{a}
(t_j \mid L_i,M_i,R_i=0)
\right\}
+
\epsilon^{(j)}
\widehat{w}^{(j)}_i,
\end{equation}
where $\widehat{w}^{(j)}_i
=
\widehat{\Omega}^{a,a^*}(L_i,M_i)
\widehat{\mathcal G}^{(t_j)}_{a}
(\nu \mid L_i,M_i,R_i=0)$
is the corresponding clever covariate.

The fluctuation parameter $\epsilon^{(j)}$ is estimated by logistic regression with outcome $I(\widetilde T_i=t_j,\delta_i=1),$
offset $\operatorname{logit}
\left\{
\widehat{\lambda}^{a}
(t_j\mid L_i,M_i,R_i=0)
\right\},$ and covariate $\widehat{w}^{(j)}_i$, using observations satisfying $A_i=a,\qquad S_i>0,\qquad R_i=0,\qquad
\widetilde T_i\geq t_j.$
Because the corresponding log-likelihood is additive across source sites, each site
can compute its local likelihood or gradient contribution and the fluctuation
parameter can be estimated using a standard federated optimization procedure such as
FedAvg.

As in the site-specific TMLE, the event hazards are targeted recursively backward in time, from the last follow-up time point to the first, so that the clever covariate at each time point depends only on hazard estimates that have already been updated. The TMLE therefore requires $h$ sequential federated targeting steps, one for each discrete follow-up time point. In this setting, the recursive strategy is preferable to a fully iterative targeting procedure, which would require repeatedly fitting $h$ logistic fluctuation models in a decentralized manner at each iteration, thereby substantially increasing the communication and computational burden.

\begin{algorithm}[H]
\small
\caption{Federated Targeted Updating of the Hazard Function
$\{\hat{\lambda}^{a}(t_j \mid L,M, R=0)\}_{j=1}^{h}$}
\label{alg:tmle_hazard_update_2}
\begin{algorithmic}[1]
\Require Data
$\{(\tilde T_i,\delta_i,L_i,M_i,A_i,S_i,R_i)\}_{i=1}^{n}$;
initial estimators
$\{\hat{\lambda}^{a}(t_j\mid L,M, R=0)\}_{j=1}^{h}$,
$\{\hat g(t_j\mid L,M, R=0)\}_{j=1}^{h}$,
and $\hat{\Omega}^{a,a^\ast}$.

\Ensure Updated hazard estimates
$\{\hat{\lambda}^{a,(1)}(t_j\mid L,M, R=0)\}_{j=1}^{h}$.

\vspace{0.2cm}
\State \textbf{Step 1: Targeting at the final time point $t_h$}

\State Compute the clever covariate
$
\hat w_i^{(h)}
=
\frac{\hat{\Omega}^{a,a^\ast}}
{\prod_{j < h}
\{1-\hat g(t_j\mid L_i,M_i)\}}.
$

\State Estimate the fluctuation parameter
$\hat\alpha^{(h)}$
from the logistic regression in a federated way
$$
\text{logit}\{\lambda^a(t_h\mid L,M, R=0)\}
=
\text{logit}\{\hat\lambda^a(t_h\mid L,M, R=0)\}
+
\alpha^{(h)}\hat w_i^{(h)},
$$
using only observations satisfying
$A_i=a$, $R_i=0$ and
$\tilde T_i\ge t_h$.
\State Share the $\hat\alpha^{(h)}$ with every server

\State Update the hazard estimate
$
\hat\lambda^{a,(1)}(t_h\mid L,M, R=0)
=
\text{expit}
\Bigl(
\text{logit}\{\hat\lambda^a(t_h\mid L,M, R=0)\}
+
\hat\alpha^{(h)}\hat w_i^{(h)}
\Bigr).
$

\State Recompute the corresponding survival quantity
$
\hat{\mathcal G}^{(t_{h-1},1)}(\nu\mid L,M, R=0).
$

\vspace{0.2cm}
\State \textbf{Step 2: Backward recursive targeting}

\For{$j=h-1,\ldots,1$}

    \State Compute the clever covariate
    $
    \hat w_i^{(j)}
    =
   \hat{\Omega}^{a,a^\ast}(L_i,M_i)\,
    \hat{\mathcal G}_{a}^{(t_j,1)}
    (\nu\mid L_i,M_i, R_i=0).
    $

    \State Estimate the fluctuation parameter
    $\hat\alpha^{(j)}$ in federated learning

    \State Update
    $
    \hat\lambda^{a,(1)}(t_j\mid L,M, R=0)
    =
    \text{expit}
    \Bigl(
    \text{logit}\{\hat\lambda^a(t_j\mid L,M, R=0)\}
    +
    \hat\alpha^{(j)}\hat w_i^{(j)}
    \Bigr).
    $

    \State Recompute
    $
    \hat{\mathcal G}^{(t_{j-1},1)}(\nu\mid L,M, R=0).
    $

\EndFor

\Return
$\{\hat{\lambda}^{a,(1)}(t_j\mid L,M, R=0)\}_{j=1}^{h}$
\end{algorithmic}
\end{algorithm}

\section{Data Application}

\paragraph{Study population} The study population included adults ($\geq18$ years) with psoriasis who initiated a biologic therapy during the study period. To ensure that patients were biologic-naïve at treatment initiation, we required no biologic treatment during the 1-year period preceding the index date. Patients with inflammatory bowel disease (IBD) were excluded from the study.

\paragraph{Covariates}

Covariates were assessed during the 1-year period preceding the index date. Comorbidity burden was measured using the Charlson Comorbidity Index (CCI). We also assessed the presence of other inflammatory diseases, including ankylosing spondylitis (AS), psoriatic arthritis (PsA), and other inflammatory diseases excluding AS, PsA, and IBD. In addition, we considered other comorbidities not captured by the CCI, including hypertension, dyslipidemia, and cardiovascular disease.

Medication use during the year preceding the index date was also assessed. The medications considered included systemic corticosteroids, nonsteroidal anti-inflammatory drugs (NSAIDs), methotrexate, ciclosporin, and acitretin.

\paragraph{Exposure}

The exposure of interest was the class of biologic therapy initiated at the index date. Biologic treatments were categorized into tumor necrosis factor inhibitors (TNFis), including adalimumab, etanercept, infliximab, and certolizumab, and interleukin-12/23 inhibitors (IL-12/23is), represented by ustekinumab.

\paragraph{Computation details}

For nuisance parameter estimation, we used random forests for the binary classification tasks involved in estimating $\lambda_k^a$, $g_k^a$, and $b_{k,p}^{a,a^*}$. These models were implemented in \texttt{scikit-learn} with 500 trees, a maximum tree depth of 10, a minimum of 20 observations per leaf, a minimum of 10 observations required to split an internal node, and $\sqrt{p}$ candidate features considered at each split. For continuous-outcome regression tasks, we used honest regression forests implemented with \texttt{EconML}. The nuisance parameters $\rho_p^{a^*}$, $\phi_k^a$, and $\tau_k^a$ were estimated using $L_2$-regularized logistic regression. For $U_p$ and $V_k$, we fitted $L_2$-regularized multinomial logistic regression models with a softmax link using FedAvg (Algorithm~1), without imposing a fixed maximum number of communication rounds. The federated optimization used a fixed learning rate of $10^{-3}$. As the analysis was primarily intended for illustration, all machine-learning hyperparameters were prespecified rather than selected by cross-validation. Finally, standardized indirect effect estimates were combined using the meta-analytic procedure described in Section~2 with uniform weights.

\begin{figure}
    \centering
    \includegraphics[angle=90,width=0.6\linewidth]{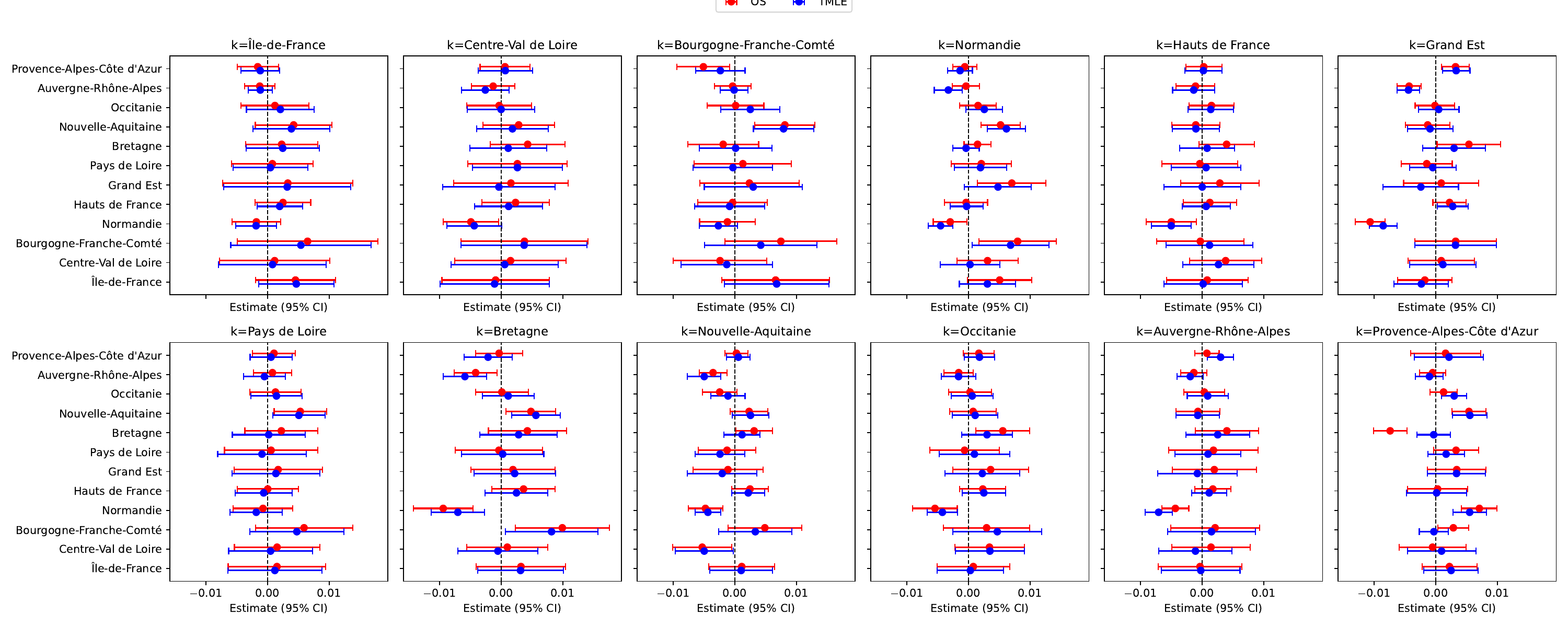}
    \caption{Treatment-Mediator heterogeneity across French administrative regions. Forest plot of the estimated
effects for each region using TMLE and OS}
    \label{fig:placeholder}
\end{figure}

\begin{figure}
    \centering
    \includegraphics[angle=90,width=0.6\linewidth]{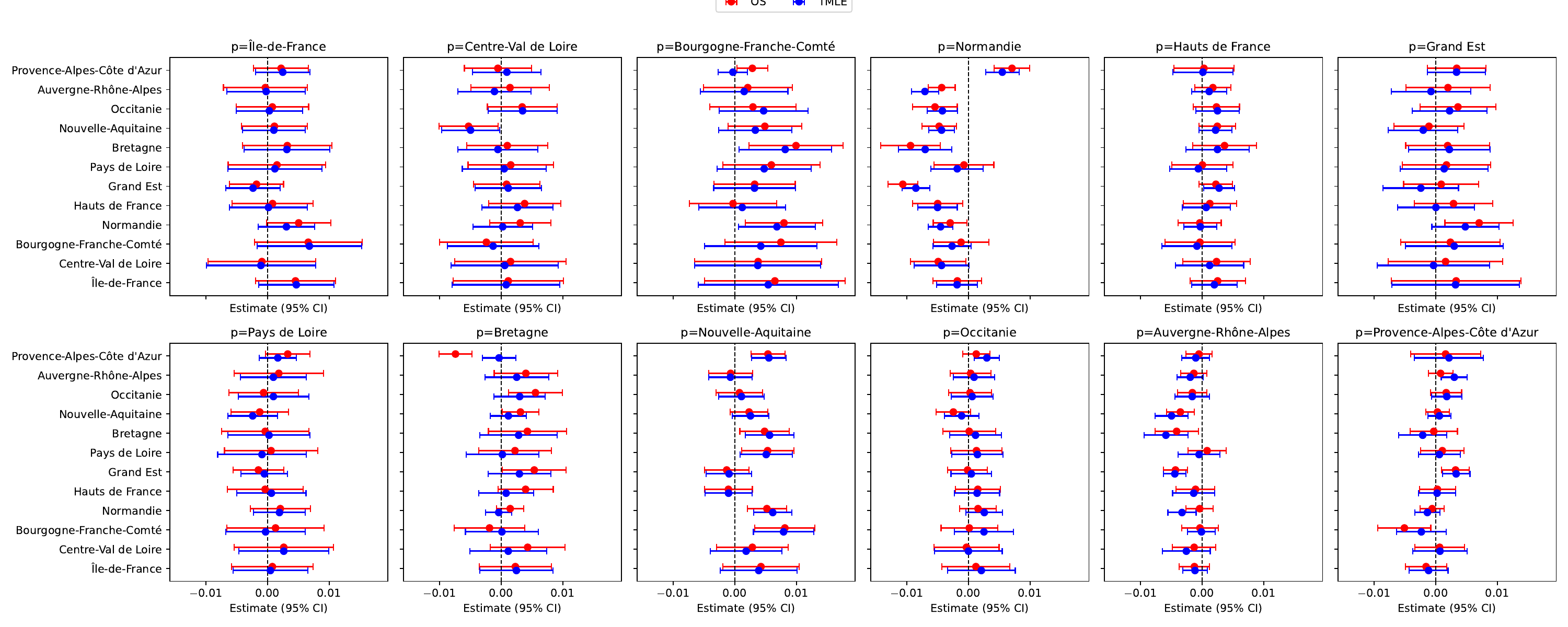}
    \caption{Treatment-Outcome heterogeneity across French administrative regions. Forest plot of the estimated
effects for each region using TMLE and OS}
    \label{fig:placeholder_2}
\end{figure}

\end{document}